\newtheorem{thm}{Theorem}[section]
\newtheorem{tvrz}[thm]{Proposition}
\newtheorem{lemma}[thm]{Lemma}
\newtheorem{theorem}[thm]{Theorem}
\theoremstyle{definition}
\newtheorem{definice}[thm]{Definition}
\theoremstyle{remark}
\newtheorem{rem}[thm]{Remark}
\theoremstyle{definition}
\newtheorem{example}[thm]{Example}
\newcommand{\falpha}{\bm{\alpha}}
\newcommand{\ftau}{\bm{\tau}}
\def\R{\mathbb{R}}
\def\dE{\mathbb{E}}
\def\E{\mathcal{E}}
\def\T{\mathcal{T}}
\def\<{\langle}
\def\>{\rangle}
\def\~{\widetilde}
\def\^{\wedge}
\def\ddt{\left. \frac{d}{dt}\right|_{t=0} \hspace{-0.5cm}}
\def\g{\mathfrak{g}}
\def\fai{\mathbf{i}}
\def\ft{\mathbf{t}}
\def\fb{\mathbf{b}}
\def\fj{\mathbf{j}}
\def\d{\mathfrak{d}}
\def\w{\mathfrak{w}}
\def\X{\mathfrak{X}}
\def\fra{\mathfrak{a}}
\def\io{\mathit{i}}
\def\H{\mathcal{H}}
\def\D{\mathcal{D}}
\def\frD{\mathfrak{D}}
\def\btr{\blacktriangleright}
\def\tr{\triangleright}
\def\tC{\tilde{C}}
\def\fA{\mathbf{A}}
\def\fE{\mathbf{E}}
\def\fM{\mathbf{M}}
\def\fg{\mathbf{g}}
\def\fa{\mathbf{a}}
\def\fB{\mathbf{B}}
\def\fR{\mathbf{R}}
\def\fPi{\mathbf{\Pi}}
\def\f1{\mathbf{1}}
\def\fPsi{\mathbf{\Psi}}
\def\fPhi{\mathbf{\Phi}}
\def\fUps{\mathbf{\Upsilon}}
\def\gTM{\mathbb{T}M}
\def\gTG{\mathbb{T}G}
\def\gTGa{\mathbb{T}G^{\ast}}
\def\gTD{\mathbb{T}D}
\def\gm{\mathbf{G}}
\def\RS{\mathcal{R}}
\def\cD{\nabla}
\def\hcD{\widehat{\nabla}}
\def\cDL{\nabla^{LC}}
\def\cDN{\nabla^{0}}
\def\cif{C^{\infty}(M)}
\newcommand{\bma}[4]{\begin{pmatrix} #1 & #2 \\ #3 & #4 \end{pmatrix}}
\newcommand{\vf}[1]{ \mathfrak{X}^{#1}(M)}
\newcommand{\vfP}[1]{ \mathfrak{X}^{#1}(P)}
\newcommand{\Li}[1]{ \mathcal{L}_{#1}}
\DeclareMathOperator{\Ima}{Im}
\DeclareMathOperator{\vol}{vol}
\DeclareMathOperator{\Ad}{Ad}
\DeclareMathOperator{\ad}{ad}
\DeclareMathOperator{\End}{End}
\DeclareMathOperator{\Hom}{Hom}
\DeclareMathOperator{\LC}{LC}
\DeclareMathOperator{\rk}{rk}
\DeclareMathOperator{\Aut}{Aut}
\DeclareMathOperator{\Hor}{Hor}
\DeclareMathOperator{\Ric}{Ric}
\DeclareMathOperator{\Lie}{Lie}
\DeclareMathOperator{\Tr}{Tr}
\DeclareMathOperator{\At}{At}
\DeclareMathOperator{\Div}{div}
\DeclareMathOperator{\fAd}{\pmb{\Ad}}
\begin{document}
\begin{flushright}
\today
\end{flushright}
\vspace{0.7cm}
\begin{center}

\baselineskip=13pt {\Large \bf{Poisson--Lie T-duality of String Effective Actions: A New Approach to the Dilaton Puzzle}\\}
 \vskip0.5cm
 {\it Dedicated to Ziggy Stardust and the Spiders from Mars}  
 \vskip0.7cm
 {\large{ Branislav Jurčo$^{1}$, Jan Vysoký$^{2}$}}\\
 \vskip0.6cm
$^{1}$\textit{Mathematical Institute, Faculty of Mathematics and Physics,
Charles University\\ Prague 186 75, Czech Republic, jurco@karlin.mff.cuni.cz}\\
\vskip0.3cm

$^{2}$\textit{Institute of Mathematics of the Czech Academy of Sciences \\ Žitná 25, Prague 115 67, Czech Republic, vysoky@math.cas.cz}\\
\vskip0.3cm

\end{center}

\begin{abstract}
For a particular class of backgrounds, equations of motion for string sigma models targeted in mutually dual Poisson--Lie groups are equivalent. This phenomenon is called the Poisson--Lie T-duality. On the level of the corresponding string effective actions, the situation becomes more complicated due to the presence of the dilaton field. 

A novel approach to this problem using Levi-Civita connections on Courant algebroids is presented. After the introduction of necessary geometrical tools, formulas for the Poisson--Lie T-dual dilaton fields are derived. This provides a version of Poisson--Lie T-duality for string effective actions. 
\end{abstract}

{\textit{Keywords}: Poisson--Lie T-duality, String effective actions, Dilaton field, Courant algebroids, Levi-Civita connections}.
\section{Introduction}
Poisson--Lie T-duality is an interesting version of non-Abelian duality for string sigma models. It makes use of the rich geometry of the so called Drinfeld's doubles. It was first proposed by Klimčík and Ševera in \cite{1995PhLB..351..455K}, followed by their papers \cite{Klimcik:1995jn, Klimcik:1995dy}. It was then extensively examined on the quantum level, e.g., in \cite{Alekseev:1995ym, Tyurin:1995bu} and many others. In context of the double field theory, it appeared, e.g., in \cite{Hassler:2017yza}. Let us very briefly recall how Poisson--Lie T-duality (henceforth abbreviated as PLT duality) works. 

Drinfeld double is a quadratic Lie group $D$ with two closed subgroups $G,G^{\ast} \subseteq D$,  such that the respective Lie subalgebras $\g,\g^{\ast} \subseteq \d$ are maximally isotropic, mutually complementary subspaces of the quadratic Lie algebra $(\d,\<\cdot,\cdot\>_{\d})$ corresponding to $D$. Equivalently, this corresponds to a choice of a Manin triple $(\d,\g,\g^{\ast})$ or a Lie bialgebra structure $(\g,\delta)$ on $\g$. For a nice exposition, see, e.g., \cite{Kosmann-Schwarzbach1997}.  We recall the corresponding terminology in Subsection \ref{subsec_LiePoisson}. 

 \textbf{A two-dimensional $\sigma$-model} targeted in $G$ is given by an action functional
\begin{equation} \label{eq_SMaction}
S_{\sigma}[l] = \int_{\Sigma} \< h, l^{\ast}(g)\>_{h} \cdot d\vol_{h} + \int_{\Sigma} l^{\ast}(B),
\end{equation}
on the space of smooth maps $l: \Sigma \rightarrow G$, where $(\Sigma,h)$ is a given oriented two-dimensional Lorentzian manifold\footnote{$\Sigma$ is called a \textbf{worldsheet}, and integration is assumed to make sense, e.g. $\Sigma$ is compact.}, $g$ is a metric on $G$ and $B \in \Omega^{2}(G^{\ast})$ is a given $2$-form. One usually assumes that there are local coordinates $(\tau,\sigma)$ on $\Sigma$, such that $h$ is locally the Minkowski metric. It is convenient to introduce new coordinates $z = \sigma + \tau$ and $\bar{z} = \sigma - \tau$. The functional $S_{\sigma}$ can be then rewritten as 
\begin{equation}
S_{\sigma}[l] = \int_{\R^{2}} dz d\bar{z} \; (l^{\ast}\dE)( \partial_{z},\partial_{\bar{z}}) ,
\end{equation}
where\footnote{We use the blackboard bold character $\mathbb{E}$, in order to distinguish it from $E$ usually denoting the vector bundles.}  $\dE = g + B$.  Without going into details, to each $l: \Sigma \rightarrow G$, one can now assign a unique "Noetherian" $1$-form $J \in \Omega^{1}(\Sigma,\g^{\ast})$. Moreover, one assumes that for $l$ solving the equations of motion, $J$ is subject to the Maurer-Cartan equation with respect to the dual Lie algebra on $\g^{\ast}$: 
\begin{equation} \label{eq_JMCequation}
dJ = \frac{1}{2}[J \^ J]_{\g^{\ast}}. 
\end{equation}
It turns out that this can be ensured by imposing a certain restriction on the background tensor $\dE$. In particular, one can show that each such $\dE$ corresponds to the unique maximal positive definite subspace $\E \subseteq \d$ with respect to $\<\cdot,\cdot\>_{\d}$. The corresponding formula is given in detail in Subsection \ref{subsec_PLT}. 

In this construction, the Lie subgroups $G$ and $G^{\ast}$ play an interchangeable role. One can thus construct a completely analogous action functional $\~S_{\sigma}$ targeted in the dual Lie group $G^{\ast}$. One can hope for some relation of the solutions of both $\sigma$-models. This is indeed possible under some technical assumptions \cite{1995PhLB..351..455K}. 

Let us assume that $l: \Sigma \rightarrow G$ solves the equations given by $S_{\sigma}$ for a background $\dE$ corresponding to the subspace $\E \subseteq \d$. As the corresponding $1$-form $J$ satisfies (\ref{eq_JMCequation}), one can find a smooth map $\~h: \Sigma \rightarrow G^{\ast}$, such that $J = \~h^{\ast} \theta_{R'}$, where $\theta_{R'}$ is the right Maurer-Cartan form on the group $G^{\ast}$. We can now form a single map $d: \Sigma \rightarrow D$, for all $(z,\bar{z}) \in \Sigma$, as 
\begin{equation}
d(z,\bar{z}) = l(z,\bar{z}) \cdot \~h(z,\bar{z}), 
\end{equation}
where the multiplication is taken in the group $D$. Under certain conditions (in detail described in Subsection \ref{subsec_LiePoisson}) one can decompose $d$ the other way around:
\begin{equation}
d(z,\bar{z}) = \~l(z,\bar{z}) \cdot h(z,\bar{z}),
\end{equation} 
for all $(z,\bar{z}) \in \Sigma$. Hence, we find a pair of smooth maps $\~l: \Sigma \rightarrow G^{\ast}$ and $h: \Sigma \rightarrow G$. Also, consider the dual tensor $\~\dE \in \T_{2}^{0}(G^{\ast})$ corresponding to the \textit{same} subspace $\E \subseteq \d$.

\textbf{The main statement of PLT duality:} The map $\~l$ constructed above satisfies the equations of motion given by $\~S_{\sigma}$ for the background $\~\dE$. Moreover, the corresponding Noetherian $1$-form is obtained as $\~J = h^{\ast}\theta_{R}$. 

Motivated by the quantization of a bosonic string, one can add a new term to the action (\ref{eq_SMaction}). Namely, let $\phi \in C^{\infty}(G)$ be a smooth function called the \textbf{dilaton field}, and let $\RS^{(2)} \in C^{\infty}(\Sigma)$ be the scalar curvature of the metric $h$. Consider a new action functional as 
\begin{equation}
S'_{\sigma}[l] = S_{\sigma}[l] + \int_{\Sigma} (l^{\ast}\phi) \cdot \RS^{(2)} \cdot d\vol_{h}. 
\end{equation}
The quantization of this $\sigma$-model is anomaly-free only for a certain class of backgrounds $(g,B,\phi)$. In particular, one requires the vanishing of so called \textbf{$1$-loop beta functions}. See, e.g., \cite{polchinski1998string} for details. Equivalently, these conditions are obtained as equations of motion of the \textbf{low-energy string effective action} given by the functional 
\begin{equation} \label{eq_effaction}
S[g,B,\phi] = \int_{M} e^{-2\phi} \{ \RS(g) - \frac{1}{2} \<dB,dB\>_{g} + \| \cD^{g} \phi \|_{g}^{2} \} \cdot d\vol_{g}. 
\end{equation}
For the explicit form of equations of motion of this theory using our notation, see ,e.g., \cite{Vysoky:2017epf}. 

There has been a lot of effort to incorporate the dilaton into the PLT duality. It involves a demanding calculation using path integrals, see \cite{Tyurin:1995bu, VonUnge:2002xjf}. It turns out that the question can be literally puzzling \cite{Hlavaty:2006hu, Hlavaty:2004mr}. In particular, one obtains some restrictions on the choices of Lie bialgebras $(\g,\delta)$. 

In this paper, we approach this in a slightly different way. Observe that the dual backgrounds $\dE$ and $\~\dE$ of the respective $\sigma$-models are obtained from a single piece of an algebraic data, namely a choice of the maximal positive definite subspace $\E \subseteq \d$, and the geometrical interplay of the two subgroups $G$ and $G^{\ast}$ within the Lie group $D$. If we think of the respective dilaton fields $\phi$ and $\~\phi$ in the same way, one can expect that there is a similar procedure giving a pair of "mutually dual" dilatons.

Our starting point is a geometrical framework described in \cite{2015LMaPh.tmp...53S, Severa:2016prq}. It shows that the PLT duality should  naturally be viewed in terms of Courant algebroids and their reductions. In particular, the question of their renormalization can be addressed in this way, see \cite{Severa:2016lwc}. On the other hand, equations of motion given by (\ref{eq_effaction}) can  conveniently be described in terms of Courant algebroid connections and properties of their associated curvature tensors. This was done, using slightly different techniques, independently in \cite{2013arXiv1304.4294G, Garcia-Fernandez:2016azr, Garcia-Fernandez:2016ofz} and in our work \cite{Jurco:2015xra, Jurco:2016emw, Jurco:2015bfs} and \cite{Vysoky:2017epf}. The idea of this paper is to combine the two techniques. In particular, the choice of the subspace $\E \subseteq \d$ leading to the Poisson--Lie T-dual $\sigma$-model backgrounds corresponds to the choice of a so called generalized metric. This suggests that Levi-Civita Courant algebroid connections should be involved too. 

We summarize the main result of this paper in Theorem \ref{thm_themain}. To the best of our belief and knowledge,  one of the most interesting aspects of its proof is that our geometrical framework naturally imposes restrictions on admissible Lie bialgebras obtained previously on the quantum level. We obtain a pair of dilaton fields $\phi$ and $\~\phi$ on $G$ and $G^{\ast}$, respectively, such that the backgrounds $(\dE,\phi)$ solve the equations of motion given by (\ref{eq_effaction}) if and only if $(\~\dE,\~\phi)$ are solutions to their dual analogue. Interestingly, it turns out that to find such solutions, it is necessary and sufficient to solve the system of \textit{algebraic equations} for the subspace $\E \subseteq \d$. 

\textbf{The paper is organized as follows: }

In Section \ref{sec_Maninpairs}, we focus on a comprehensive review of the rich geometrical and algebraic content of so called Manin pairs, that is a quadratic Lie algebra with a maximally isotropic subalgebra of half its dimension. This gives us an interesting reformulation of the Atiyah sequence of $D$ viewed as a principal $G$-bundle. It turns out that there are two canonical algebroid structures on the trivial vector bundle $E = D/G \times \d$. We recall this along with all necessary background. The choice of a complement of $\g$ in $\d$ that forms a Lie subalgebra of $\d$ gives us a Lie bialgebra $(\g,\delta)$, a basic algebraic object for PLT duality. Finally, we conclude with a brief description of geometrical objects on the Lie group $G$ integrating the Lie bialgebra $(\g,\delta)$ and of the induced actions of Lie groups on their respective duals.

Next, in Section \ref{sec_connections} we recall how generalized geometry and Courant algebroids can  conveniently be used to describe the equations of motion of low-energy string effective actions. This involves  Courant algebroid connections compatible with the generalized metric encoding the background fields. We recall the necessary definitions of the generalized metric, Courant algebroid connections and the associated curvature tensors. In particular, we formulate the essential Theorem \ref{thm_eomconn}. 

Section \ref{sec_reduction} describes in detail  the geometry behind the Poisson--Lie T-duality. In particular, the Drinfel'd double $D$ can be viewed as two different principal bundles, one over the point $\{ \star \}$ with the structure group $D$, the other over the left coset space $D/G$ with the structure group $G$. Consequently, we can construct a $D$-equivariant (and hence also a $G$-equivariant) Courant algebroid on the generalized tangent bundle $TD \oplus T^{\ast}D$, which allows for a reduction onto two different Courant algebroids, one over the point $\{ \ast \}$ and the other over the coset space $D/G$. The first one happens to be just the Lie algebra $\d$ itself, whereas the latter one is isomorphic to the canonical exact Courant algebroid on the generalized tangent bundle $\mathbb{T}G^{\ast}$.  We show that one can lift the generalized metric $\E \subseteq \d$ on $\d$ and then use the reduction by $G$ to obtain a subbundle $V_{+}^{\g} \subseteq \gTGa$. It turns out that this one is a graph of the map $\~\dE$ encoding the backgrounds of the dual sigma model $\~S_{\sigma}$. 

Next, in Section \ref{sec_dilatons}, we show that a similar procedure of lifting and subsequently reducing (with respect to the subgroup $G$) can be performed also for any given Levi-Civita connection (with respect to the generalized metric $\E \subseteq \d$) $\cD^{0}$ on $\d$. It provides us with a new Levi-Civita connection $\~\cD$ on the vector bundle $\gTGa$ with respect to the generalized metric $V_{+}^{\g}$. One has to find conditions for $\~\cD$ to be suitable to describe the equations of motion of the low-energy effective action. First, we address the vanishing of some characteristic vector field $X_{\~\cD}$ associated with $\~\cD$. The most difficult part comes with the calculation of the dilaton field. Although the result is relatively simple, one has to use all tricks up his sleeve to arrive to its final form in Theorem \ref{thm_dilaton}. Based on suggestions from a referee, we have included a more detailed discussion of the dilaton formula in Subsection \ref{subsec_analysis}. 

The rather short final Section \ref{sec_PLT} postulates and proves the main statement of this paper, Theorem \ref{thm_themain}. A reader interested only in the results can skip directly to this part. 
It is followed by several remarks discussing the technical assumptions and possible generalizations of the theorem. 
\section{Geometry of Manin pairs} \label{sec_Maninpairs}
In this section, we will briefly recall the canonical algebraic structures induced by the presence of the Lagrangian subgroup $G \subseteq D$ of a quadratic Lie group $D$. For a detailed exposition of this topic, see \cite{2007arXiv0710.0639B}. By a \textbf{Manin pair}, we mean a pair $(\d,\g)$ of Lie algebras, where $\g$ is a maximally isotropic subalgebra of a quadratic Lie algebra $(\d,\<\cdot,\cdot\>_{\d})$, where $\<\cdot,\cdot\>_{\d}$ is assumed to have the signature $(m,m)$. In particular, this implies $\dim{\g} = m$. Furthermore, we assume that $\d = \Lie(D)$ and $\g = \Lie(G)$, where $G \subseteq D$ is a closed connected Lie subgroup of $D$. We can thus consider the principal $G$-bundle $\pi: D \rightarrow M$, where $M = D / G$ is a quotient manifold of left cosets corresponding to the subgroup $G$. 
\subsection{Atiyah sequence corresponding to the Manin pair} \label{subsec_Atiyah}
As for any principal $G$-bundle, one can construct the canonical short exact sequence (in the category of smooth vector bundles over $M$) called the \textbf{Atiyah sequence of $D$}.
\begin{equation}
\begin{tikzcd}
0 \arrow[r] & \g_{D} \arrow[r, hookrightarrow] & \At(D) \arrow[r,twoheadrightarrow] & TM \arrow[r] & 0 
\end{tikzcd},
\end{equation}
where $\At(D) := TD/G$ is the \textbf{Atiyah vector bundle} obtained by the reduction of the $G$-equivariant tangent bundle $TD$. See, e.g., \cite{Mackenzie} for details of the construction. Also, $\g_{D} := D \times_{\Ad} \g$ denotes the adjoint bundle associated to the principal fibration $\pi: D \rightarrow M$. 

For a Manin pair $(\d,\g)$, the Atiyah sequence can be rewritten as follows. Let $\fPsi_{R}: D \times \d \rightarrow TD$ denote the trivialization of the tangent bundle $TD$ using the right-invariant vector fields.  It induces a global trivialization $\fPsi_{R}^{\natural}: M \times \d \rightarrow \At(D)$ of the Atiyah vector bundle, which can be viewed as a vector bundle isomorphism over $1_{M}$. Moreover, there is a canonical vector bundle isomorphism $\fPhi \in \Hom(T^{\ast}M,\g_{D})$ induced by the Manin pair property $\d / \g \cong \g^{\ast}$. One can thus construct a new (dashed) short exact sequence making the diagram
\begin{equation} \label{eq_AtiyahManin}
\begin{tikzcd}
0 \arrow[r] \arrow[rd, dashed] & \g_{D} \arrow[r, hookrightarrow] & \At(D) \arrow[r,twoheadrightarrow] & TM \arrow[r] & 0 \\
& T^{\ast}M \arrow[u,"\fPhi"'] \arrow[r, "\rho^{\ast}", hookrightarrow, dashed] & M \times \d \arrow[ur,"\rho"', dashed] \arrow[u,"\fPsi_{R}^{\natural}"'] & 
\end{tikzcd} 
\end{equation}
commutative. The maps $\rho$ and $\rho^{\ast}$ can be described explicitly. Indeed, let $\tr$ denote the canonical left action of $D$ on the coset space $M = D / G$, that is $d \tr \pi(d') := \pi(dd')$ for all $d,d' \in D$. Then
\begin{equation} \label{eq_anchoronM}
\rho(m,x) = (\#^{\tr}(x))_{m}, 
\end{equation}
for all $m \in M$ and $x \in \d$, where $\#^{\tr}: \d \rightarrow \vf{}$ denotes the infinitesimal generator corresponding to the action $\tr$ of $D$ on $M$. Now, observe that $E \equiv M \times \d$ comes equipped with a fiber-wise metric $\<\cdot,\cdot\>_{\d}$ induced by the one (denoted by the same symbol) on the quadratic Lie algebra $\d$. Let $g_{\d} \in \Hom(E,E^{\ast})$ be the induced vector bundle isomorphism. Then one obtains 
\begin{equation}
\rho^{\ast} = g_{\d}^{-1} \circ \rho^{T},
\end{equation}
where $\rho^{T} \in \Hom(T^{\ast}M,E^{\ast})$ denotes the (fiber-wise) transpose of the vector bundle map $\rho$. Note that $\Ima(\rho^{\ast}) \subseteq E$ forms a Lagrangian subbundle of $(E, \<\cdot,\cdot\>_{\d})$.
\subsection{Lie and Courant algebroids on $E$} \label{subsec_LieCourant}
It is a well-known fact that there is a canonical Lie algebroid structure on $\At(D)$ induced by the usual Lie bracket on $\X(D)$ known as \textbf{Atiyah Lie algebroid}. Let us recall some definitions first.
\begin{definice}
Let $q: E \rightarrow M$ be a smooth vector bundle, equipped with a vector bundle map $\rho \in \Hom(E,TM)$ called the \textbf{anchor}, together with an $\R$-bilinear bracket $[\cdot,\cdot]_{E}: \Gamma(E) \times \Gamma(E) \rightarrow \Gamma(E)$, subject to the following axioms:
\begin{enumerate}
\item There holds a \textbf{Leibniz rule}: $[\psi, f \psi']_{E} = f [\psi,\psi']_{E} + (\rho(\psi).f) \psi'$.
\item $(\Gamma(E),[\cdot,\cdot]_{E})$ forms a Leibniz algebra, that is, it satisfies the \textbf{Leibniz identity}:
\begin{equation} \label{eq_LI}
[\psi,[\psi',\psi'']_{E}]_{E} = [[\psi,\psi']_{E},\psi'']_{E} + [\psi', [\psi,\psi'']_{E}]_{E}.
\end{equation}
\end{enumerate}
All conditions are supposed to hold for all $f \in \cif$ and $\psi,\psi',\psi'' \in \Gamma(E)$. Then $(E,\rho,[\cdot,\cdot]_{E})$ is called a \textbf{Leibniz algebroid}\footnote{Such a structure is also known under the name Loday algebroid.}. Whenever $[\cdot,\cdot]_{E}$ is skew-symmetric, it is called a \textbf{Lie algebroid}. In this case the Leibniz identity becomes a usual Jacobi identity and $(E,[\cdot,\cdot]_{E})$ forms a Lie algebra. 
\end{definice}
\begin{rem}
For any Leibniz algebroid $(E,\rho,[\cdot,\cdot]_{E})$, $\rho$ becomes a bracket homomorphism:
\begin{equation} \label{eq_rhohom}
\rho([\psi,\psi']_{E}) = [\rho(\psi),\rho(\psi')],
\end{equation}
for all $\psi,\psi' \in \Gamma(E)$. This is forced by the compatibility of the Leibniz rule and (\ref{eq_LI}). 
\end{rem}
The Atiyah Lie algebroid $(\At(D), \widehat{T}(\pi), [\cdot,\cdot]_{\At})$ is induced by the canonical $\cif$-module isomorphism $\Gamma(\At(D)) \cong \X_{G}(D)$, where $\X_{G}(D)$ is a $\cif$-module of $G$-invariant vector fields on $D$ involutive under the usual Lie bracket $[\cdot,\cdot]$ on $\X(D)$. The anchor map $\widehat{T}(\pi) \in \Hom(\At(D),TM)$ is naturally induced by the tangent map to the projection $\pi: D \rightarrow M$. 

Using the vector bundle isomorphism $\fPsi_{R}^{\natural}$, we obtain an isomorphic Lie algebroid $(E,\rho,[\cdot,\cdot]'_{\At})$. We can explicitly describe it as follows. Let $[\cdot,\cdot]_{\d}: \Gamma(E) \times \Gamma(E) \rightarrow \Gamma(E)$ denote the $\cif$-bilinear fiber-wise extension of the Lie algebra bracket $[\cdot,\cdot]_{\d}$. Moreover, as $\Gamma(E) \cong C^{\infty}(M,\d)$, we can act by vector fields $\rho(\psi)$ on sections in $\Gamma(E)$. The bracket $[\cdot,\cdot]'_{\At}$ then takes the form
\begin{equation}
[\psi,\psi']'_{\At} = \rho(\psi).\psi' - \rho(\psi').\psi - [\psi,\psi']_{\d},
\end{equation}
for all $\psi,\psi' \in \Gamma(E)$. Note that the unusual minus sign comes from the fact that we use the right trivialization $\fPsi_{R}^{\natural}$ in the process. It is easy to verify all axioms of Lie algebroid explicitly. Now, observe that the natural generalization of the ad-invariance condition
\begin{equation}
\rho(\psi).\<\psi',\psi''\>_{\d} = \< [\psi,\psi']'_{\At}, \psi'' \>_{\d} + \< \psi', [\psi,\psi'']'_{\At} \>_{\d}
\end{equation}
does not hold for all sections\footnote{One has to restrict $\psi',\psi''$ to be the sections of the kernel subbundle of $\rho$, in which case $(E,\rho,[\cdot,\cdot]'_{\At})$ forms a prototypical example of the so called quadratic transitive Lie algebroid.} of $E$. However, there is an another algebroid structure on $E$ at our disposal. Let us recall its definition. 
\begin{definice} \label{def_Courant}
Let $(E,\rho,[\cdot,\cdot]_{E})$ be a Leibniz algebroid, equipped with a fiber-wise metric $\<\cdot,\cdot\>_{E}$. Suppose that:
\begin{enumerate}
\item Pairing and bracket are compatible: 
\begin{equation}
\rho(\psi).\<\psi',\psi''\>_{E} = \< [\psi,\psi']_{E}, \psi''\>_{E} + \< \psi', [\psi,\psi'']_{E} \>_{E},
\end{equation}
for all $\psi,\psi',\psi'' \in \Gamma(E)$. 
\item Let $g_{E} \in \Hom(E,E^{\ast})$ be the vector bundle isomorphism induced by $\<\cdot,\cdot\>_{E}$ and define an $\R$-linear map $\D: \cif \rightarrow \Gamma(E)$ as the composition $\D := \rho^{\ast} \circ d \equiv g_{E}^{-1} \circ \rho^{T} \circ d$. Then
\begin{equation} \label{eq_CAsympart}
[\psi,\psi']_{E} + [\psi',\psi]_{E} = \D \<\psi,\psi'\>_{E},
\end{equation}
for all $\psi,\psi' \in \Gamma(E)$, i.e., the symmetric part of $[\cdot,\cdot]_{E}$ is controlled by $\rho$ and $\<\cdot,\cdot\>_{E}$.
\end{enumerate}
Then $(E,\rho,\<\cdot,\cdot\>_{E},[\cdot,\cdot]_{E})$ is called a \textbf{Courant algebroid}.
\end{definice}
Courant algebroids originally appeared in \cite{liu1997manin} as a generalization of Manin triples for Lie bialgebroids. The modern definition was introduced in \cite{1999math.....10078R}. 
\begin{rem}
For any Courant algebroid, one has $\rho \circ \rho^{\ast} = 0$. We thus obtain a cochain complex
\begin{equation} \label{eq_CASES}
\begin{tikzcd}
0 \arrow[r] & T^{\ast}M \arrow[r,"\rho^{\ast}"] & E \arrow[r,"\rho"] & TM \arrow[r] & 0
\end{tikzcd}.
\end{equation}
If this is a short exact sequence (i.e. the corresponding cohomology group is trivial), $E$ is called an \textbf{exact Courant algebroid}.
\end{rem}
It turns out that we can find such a structure on $E = M \times \d$ with the fiber-wise metric $\<\cdot,\cdot\>_{\d}$. One can define a new bracket $[\cdot,\cdot]_{E}$ as 
\begin{equation}
\< [\psi,\psi']_{E}, \psi''\>_{\d} := \< [\psi,\psi']'_{\At}, \psi'' \>_{\d} + \< \rho(\psi'').\psi,\psi'\>_{\d},
\end{equation}
for all $\psi,\psi',\psi'' \in \Gamma(E)$. Let $\{ t_{\alpha} \}_{\alpha =1}^{\dim{\d}}$ be some basis of the Lie algebra $\d$, and let $\{ \psi_{\alpha} \}_{\alpha=1}^{\dim{\d}}$ be the corresponding constant sections of $\Gamma(E)$. Write $\psi = \varphi^{\alpha} \cdot\psi_{\alpha}$ for a collection of unique smooth functions $\varphi^{\alpha} \in \cif$. We can then rewrite $[\cdot,\cdot]_{E}$ as 
\begin{equation} \label{eq_Ebracket}
[\psi,\psi']_{E} = [\psi,\psi']'_{\At} + \< \psi_{\alpha}, \psi' \>_{\d} \cdot \D{\varphi^{\alpha}}.
\end{equation}
Exactness of (\ref{eq_AtiyahManin}) implies $\rho \circ \D = 0$. Consequently, as $[\cdot,\cdot]'_{\At}$ is a Lie algebroid bracket, $\rho$ is a bracket homomorphism (\ref{eq_rhohom}) with respect to $[\cdot,\cdot]_{E}$. Moreover, the above expression is $\cif$-linear in $\psi'$ and the Leibniz rule for $[\cdot,\cdot]'_{\At}$ thus implies the one for $[\cdot,\cdot]_{E}$. This is enough for the Leibniz identity and metric compatibility to be verified on constant sections. This is trivial and it reduces to the axioms for the quadratic Lie algebra $(\d, \<\cdot,\cdot\>_{\d},[\cdot,\cdot]_{\d})$. Finally, one has
\begin{equation}
\< [\psi,\psi]_{E}, \psi' \>_{\d} = \< \rho(\psi').\psi,\psi\>_{\d} = \frac{1}{2} \rho(\psi').\<\psi,\psi\>_{\d} = \frac{1}{2} \< \D\<\psi,\psi\>_{\d}, \psi''\>_{\d}.
\end{equation}
The polarization of this relation gives the axiom (\ref{eq_CAsympart}). We conclude that $(E, \rho, \<\cdot,\cdot\>_{\d}, [\cdot,\cdot]_{E})$ forms an exact\footnote{The sequence (\ref{eq_AtiyahManin}) is a priori exact.} Courant algebroid. This example first appeared in unpublished works \cite{alekseevxu, Severa:2017oew} and more recently in \cite{2007arXiv0710.0639B, 2015LMaPh.tmp...53S}. As for any exact Courant algebroid, there is a unique de Rham cohomology class $[H]_{dR} \in H^{3}_{dR}(M)$ characterizing $E$, called the \textbf{Ševera class} of $E$. For any representative $H \in \Omega^{3}_{cl}(M)$ of this class, the Courant algebroid $(E,\<\cdot,\cdot\>_{\d},\rho,[\cdot,\cdot]_{E})$ is isomorphic to the \textbf{generalized tangent bundle} $\gTM \equiv TM \oplus T^{\ast}M$ with an $H$-twisted Dorfman bracket $[\cdot,\cdot]_{D}^{H}$, where
\begin{equation} \label{eq_HDorfman}
[(X,\xi),(Y,\eta)]_{D}^{H} = ( [X,Y], \Li{X}\eta - \io_{Y}d\xi - H(X,Y,\cdot)),
\end{equation}
for all $(X,\xi), (Y,\eta) \in \Gamma(\gTM)$. The anchor is given by the projection onto $TM$ and the fiber-wise metric $\<\cdot,\cdot\>_{\mathbb{T}}$ corresponds to the canonical pairing of $1$-forms with vector fields. The closed $3$-form $H$ is constructed as follows. Consider an isotropic\footnote{Such splitting always exists.} splitting $s \in \Hom(TM,E)$ of the short exact sequence (\ref{eq_CASES}), that is, $\rho \circ s = 1_{TM}$ and $\< s(X),s(Y)\>_{\d} = 0$ for all $X,Y \in \vf{}$. Then set 
\begin{equation}
H(X,Y,Z) = - \< [s(X),s(Y)]_{E}, s(Z) \>_{\d},
\end{equation}
for all $X,Y,Z \in \vf{}$. For $E = M \times \d$, we can view the splitting $s \in \Hom(TM,E)$ as a $\d$-valued $1$-form $s \in \Omega^{1}(M,\d)$. As (\ref{eq_CASES}) is the Atiyah sequence in disguise, each isotropic splitting corresponds to the principal bundle connection $A \in \Omega^{1}(D,\g)$, where the horizontal distribution $\Hor_{A}(D) \subseteq TD$ is isotropic with respect to the bi-invariant metric on $D$ induced by $\<\cdot,\cdot\>_{\d}$. One can calculate $H$ directly in terms of the $1$-form $s$, see \cite{2007arXiv0710.0639B} for the proof. One finds $H = - \frac{1}{2} \tC_{3}(s)$, where 
\begin{equation} \label{eq_tC3s}
\tC_{3}(s) := \< f \^ s \>_{\d} + \frac{1}{3!} \< [s \^ s ]_{\d} \^ s \>_{\d}, \; \; f = ds - \frac{1}{2}[s \^ s]_{\d}
\end{equation}
is the Chern-Simons-like $3$-form defined using $s \in \Omega^{1}(M,\d)$ and the bracket $-[\cdot,\cdot]_{\d}$. 
\subsection{Lie bialgebras, Poisson--Lie groups} \label{subsec_LiePoisson}
Every Manin pair $(\d,\g)$ induces the following short exact sequence:
\begin{equation} \label{eq_ManinpairSES}
\begin{tikzcd}
0 \arrow[r] & \g \arrow[r,"i",hookrightarrow] & \d \arrow[r,"\io^{\ast}", twoheadrightarrow] & \g^{\ast} \arrow[r] & 0
\end{tikzcd},
\end{equation}
where $\io^{\ast}$ is the map defined as $\< \io^{\ast}(x), y \> := \<x,\io(y)\>_{\d}$ for all $x \in \d$ and $y \in \g$. Let $j \in \Hom(\g^{\ast},\d)$ be an isotropic splitting of this sequence, that is $\io^{\ast} \circ j = 1_{\g^{\ast}}$ and $\< j(\xi), j(\eta) \>_{\d} = 0$ for all $\xi,\eta \in \g^{\ast}$. The triple $(\d,\g,j)$ is called a \textbf{split Manin pair}. 

Assume that the maximally isotropic subspace $\g' \equiv j(\g^{\ast})$ forms a Lie subalgebra of $\d$. In this case, the split Manin pair is called a \textbf{Manin triple} $(\d,\g,\g')$. Equivalently, this corresponds to the definition of a \textbf{Lie bialgebra} $(\g,\delta)$, where $\delta: \g \rightarrow \g \otimes \g$ is a $1$-cocycle in the Chevalley--Eilenberg cochain complex. Moreover, let $[\cdot,\cdot]_{\g^{\ast}}$ denote the Lie bracket induced on the dual space $\g^{\ast}$. Both objects are related to the split Manin pair $(\d, \g, j)$ via the equation
\begin{equation}
\delta(x)(\xi,\eta) = \<x, [\xi,\eta]_{\g^{\ast}} \> = \< i(x), [j(\xi),j(\eta)]_{\d} \>,
\end{equation}
for all $x \in \g$ and $\xi,\eta \in \g^{\ast}$. Using the splitting $j$, one can identify the Lie algebra $\d$ with the double $\g \oplus \g^{\ast}$. The bracket $[\cdot,\cdot]_{\d}$ can be under this identification written as 
\begin{equation} \label{eq_bracketdouble}
[(x,\xi),(y,\eta)]_{\d} = ( [x,y]_{\g} + \ad^{\ast}_{\xi}y - \ad^{\ast}_{\eta}x, [\xi,\eta]_{\g^{\ast}} + \ad_{x}\eta - \ad_{y}\xi ), 
\end{equation}
for all $(x,\xi), (y,\eta) \in \g \oplus \g^{\ast}$. Here $\ad^{\ast}_{\xi}$ denotes the coadjoint action of Lie algebra $(\g^{\ast},[\cdot,\cdot]_{\g^{\ast}})$ on the dual space $\g$. Observe that the bracket (\ref{eq_bracketdouble}) is obviously symmetric with respect to the interchange $\g \leftrightarrow \g^{\ast}$. The symmetric non-degenerate invariant form $\<\cdot,\cdot\>_{\d}$ takes the form of the canonical pairing on $\g \oplus \g^{\ast}$, that is
\begin{equation}
\< (x,\xi), (y,\eta) \>_{\d} = \eta(x) + \xi(y), 
\end{equation}
for all $(x,\xi), (y,\eta) \in \g \oplus \g^{\ast}$. In the following, we will always identify $\d$ with the induced structures on the direct sum $\g \oplus \g^{\ast}$. For a more detailed exposition of Lie bialgebras see, e.g., \cite{Kosmann-Schwarzbach1997}. 

If $\g$ integrates to a connected Lie group $G$, each Lie bialgebra structure $(\g,\delta)$ corresponds to a unique Poisson--Lie group bivector $\Pi \in \X^{2}(G)$. Let us recall the definition.
\begin{definice}
Let $G$ be a Lie group equipped with a Poisson bivector $\Pi \in \X^{2}(G)$. We say that the tensor field $\Pi$ is \textbf{multiplicative} if $\Pi_{gh} = L_{g \ast}( \Pi_{h}) + R_{h \ast}( \Pi_{g})$ for all $g,h \in G$. Having this property, the pair $(G,\Pi)$ is called a \textbf{Poisson--Lie group}.
\end{definice}

Note that one immediately obtains $\Pi_{e} = 0$. This excludes constant bivectors and symplectic manifolds as non-trivial examples. 
The cobracket $\delta$ is obtained from $\Pi$ via the formula
\begin{equation} \label{eq_deltaPirel}
\delta(x) = (\Li{\~x} \Pi)_{e},
\end{equation}
where $\~x \in \X(G)$ is any vector field extending the tangent vector $x \in \g \equiv T_{e}G$. For details of this construction and detailed exposition of multiplicative tensor fields, see \cite{lu1990, Kosmann-Schwarzbach1997}. 

As $\g^{\ast} \subseteq \d$ is a Lie subalgebra, we may assume that $\g^{\ast} = \Lie(G^{\ast})$ for a connected closed Lie subgroup $G^{\ast} \subseteq D$. To any Lie bialgebra $(\g,\delta)$, there is a dual Lie bialgebra $(\g^{\ast},\delta^{\ast})$. One thus gets a dual Poisson--Lie group structure $\Pi^{\ast} \in \X^{2}(G^{\ast})$. We can use it to define an infinitesimal left action of $\g$ on the Lie group $G^{\ast}$. To any $x \in \g$, we may assign a right-invariant $1$-form $x^{R'} \in \Omega^{1}(G^{\ast})$ where the prime is used to indicate that the right translation is on $G^{\ast}$. Set 
\begin{equation} \label{eq_leftdress}
\varphi^{l}(x) := \Pi^{\ast}(\cdot,x^{R'}) =: \Pi^{\ast}(x^{R'}),
\end{equation}
for all $x \in \g$. Then $\varphi^{l}: \g \rightarrow \X(G^{\ast})$ is an anti-homomorphism of Lie brackets, that is 
\begin{equation} \label{eq_leftdressahom}
\varphi^{l}([x,y]_{\g}) + [\varphi^{l}(x), \varphi^{l}(y)] = 0,
\end{equation}
for all $x,y \in \g$. This follows from the vanishing Schouten bracket $[\Pi^{\ast},\Pi^{\ast}]_{S} = 0$ and the fact that $\Pi^{\ast}$ encodes the bracket $[\cdot,\cdot]_{\g}$ as $(\delta^{\ast})^{T}(x,y) = [x,y]_{\g}$. The map $\varphi^{l}$ is called \textbf{the left infinitesimal dressing action} of $\g$ on $G^{\ast}$. Similarly, one can obtain the right infinitesimal dressing action as $\varphi^{r}(x) = - \Pi^{\ast}(x^{L'})$ for all $x \in \g$. In this case, $\varphi^{r}$ is a bracket homomorphism. 

We can combine the embeddings of both subgroups $G$ and $G^{\ast}$ with the group multiplication on $D$ to obtain a local diffeomorphism from $G \times G^{\ast}$ to $D$. We define it as $(g,h) \mapsto g h$, for all $(g,h) \in G \times G^{\ast}$. Similarly, we have a local diffeomorphism $(h,g) \mapsto h g$ of the Cartesian product $G^{\ast} \times G$ and $D$. 

It can be shown that these two maps are \textit{global} diffeomorphisms iff  either of the infinitesimal actions $\varphi^{l}$ or $\varphi^{r}$ is complete. Assume that this is true. For each $(g,h) \in G \times G^{\ast}$, we can then define $\Phi^{l}: G \times G^{\ast} \rightarrow G^{\ast}$ and $\~\Phi^{r}: G \times G^{\ast} \rightarrow G$ via the relation
\begin{equation}
g h = \Phi^{l}(g,h) \cdot \~\Phi^{r}(g,h),
\end{equation}
for all $(g,h) \in G \times G^{\ast}$. It is easy to see that $\Phi^{l}$ defines a left action of $G$ on $G^{\ast}$, whereas $\~\Phi^{r}$ is a right action of $G^{\ast}$ on $G$. Moreover, $\Phi^{l}$ is precisely the left Lie group action integrating the left infinitesimal dressing action $\varphi^{l}$ of $\g$ on $G$, and $\~\Phi^{r}$ is the right Lie group action integrating the right infinitesimal dressing action $\~\varphi^{r}$ of $\g^{\ast}$ on $G$. For details and proofs, see \cite{luthesis}. $\Phi^{l}$ is called the \textbf{left dressing action} of $G$ on $G^{\ast}$ and similar names are adopted for the other variants. 

One can use this to identify the coset space $M = D / G$ with the Lie group $G^{\ast}$. Indeed, define the smooth map $\fUps: M \rightarrow G^{\ast}$ making the diagram in the category of smooth manifolds
\begin{equation}
\begin{tikzcd}
G \times G^{\ast} \arrow[r] \arrow[d,"\Phi^{l}"] & D \arrow[d,"\pi"] \\
G^{\ast}  & M \arrow[l,"\fUps"', dashed]
\end{tikzcd}
\end{equation}
commutative. One can use the diffeomorphism $\fUps$ to define a Courant algebroid structure on $E' = G^{\ast} \times \d$. Recall that the anchor $\rho$ of the Courant algebroid on $E = M \times \d$ is given by the infinitesimal generator of the $D$-action $\tr$ on $M$, see (\ref{eq_anchoronM}).  Let $\btr$ be the left $D$-action on $G^{\ast}$ making $\fUps$ into a $D$-equivariant map. One can easily read out the action of the subgroups $G$ and $G^{\ast}$ on $G^{\ast}$. For any $g \in G$ and all $h,h' \in G^{\ast}$, one obtains 
\begin{equation} \label{eq_GbtrGast}
g \btr h = \Phi^{l}(g,h), \; \; \; h \btr h' = h \cdot h'.
\end{equation}
In other words, $G$ acts by the left dressing action, whereas $G^{\ast}$ acts by the ordinary left translation. Now, consider the vector bundle map $\rho' \in \Hom(E',TG^{\ast})$ defined by 
\begin{equation}
\begin{tikzcd}
E := M \times \d \arrow[r,"\rho"] \arrow[d,"\fUps \times 1_{\d}"] & TM \arrow[d,"T(\fUps)"] \\
E' := G^{\ast} \times \d \arrow[r,"\rho'", dashed] & TG^{\ast}
\end{tikzcd}.
\end{equation}
It follows that $\rho'( h, (x,\xi)) = (\#^{\btr}(x,\xi))_{h}$, for all $(x,\xi) \in \d \equiv \g \oplus \g^{\ast}$ and $h \in G^{\ast}$. Here, $\#^{\btr}: \d \rightarrow \X(G^{\ast})$ denotes the infinitesimal generator of the action $\btr$. Using the observation (\ref{eq_GbtrGast}), one finds
\begin{equation} \label{eq_rho'anchor}
\rho'(h, (x,\xi)) = (\xi_{R'} + \Pi^{\ast}(x^{R'}))_{h},
\end{equation}
for all $(x,\xi) \in \d$ and $h \in G^{\ast}$. The pairing on $E'$ remains the fiber-wise extension of $\<\cdot,\cdot\>_{\d}$. As constant sections of $E$ are mapped by $\fUps \times 1_{\d}$ onto constant sections of $E'$, the bracket $[\cdot,\cdot]_{E'}$ is again just a fiber-wise extension of the Lie bracket $-[\cdot,\cdot]_{\d}$ using the Leibniz rule. In other words, for all $\psi,\psi',\psi'' \in \Gamma(E')$, we find
\begin{equation}
\< [\psi,\psi']_{E'}, \psi''\>_{\d} = \< \rho'(\psi).\psi' - \rho'(\psi').\psi - [\psi,\psi']_{\d}, \psi'' \>_{\d} + \< \rho'(\psi'').\psi, \psi' \>_{\d}.
\end{equation}
It follows that $(E', \rho', \<\cdot,\cdot\>_{\d}, [\cdot,\cdot]_{E'})$ defines an exact Courant algebroid.  Its Ševera class can easily be calculated as there is a particularly simple isotropic splitting $s' \in \Hom(TG^{\ast},E')$. Let $\psi_{(x,\xi)} \in \Gamma(E')$ denote the constant section corresponding to $(x,\xi) \in \d$. For each $\xi \in \g^{\ast}$ define 
\begin{equation} \label{eq_s'splitting}
s'( \xi_{R'}) := \psi_{(0,\xi)}.
\end{equation}
We have $\rho'(\psi_{(x,\xi)}) = \xi_{R'} + \Pi^{\ast}(x^{R'})$. Plugging $s'$ into the formula (\ref{eq_tC3s}) gives $\tC_{3}(s') = 0$. The cubic term in $s'$ is zero as $\g^{\ast} \subseteq \d$ is an isotropic subalgebra and $f = 0$ as $s'$ is in fact a right Maurer-Cartan form $\theta_{R'}$ viewed as a $\d$-valued $1$-form on $G^{\ast}$. We conclude that the Ševera class $[H']_{dR} \in H^{3}_{dR}(G^{\ast})$ corresponding to the Courant algebroid $(E',\rho',\<\cdot,\cdot\>_{\d}, [\cdot,\cdot]_{E'})$ is trivial. In fact, we have found the splitting $s'$ where $H' = 0$. 
\section{Connections and low-energy effective actions} \label{sec_connections}
Let us very briefly recall how the equations of motion for low-energy string effective actions can be conveniently encoded in terms of Courant algebroid connections. For a detailed discussion of this topic, see \cite{Jurco:2016emw, Vysoky:2017epf}. 
\begin{definice}
Let $(E,\rho,\<\cdot,\cdot\>_{E},[\cdot,\cdot]_{E})$ be a Courant algebroid. We say that an $\R$-bilinear map $\cD: \Gamma(E) \times \Gamma(E) \rightarrow \Gamma(E)$ is a \textbf{Courant algebroid connection} if
\begin{equation} \label{eq_CAconnaxioms}
\begin{split}
\cD_{f \psi} \psi' = & \ f (\cD_{\psi} \psi'), \\
\cD_{\psi}(f \psi') = & \ f \cD_{\psi}\psi' + (\rho(\psi).f) \psi', \\
\rho(\psi).\<\psi',\psi''\>_{E} = & \ \< \cD_{\psi}\psi', \psi''\>_{E} + \< \psi', \cD_{\psi}\psi''\>_{E},
\end{split}
\end{equation}
for all $\psi,\psi',\psi'' \in \Gamma(E)$ and $f \in \cif$. We write $\cD_{\psi} \equiv \cD(\psi,\cdot)$. 

One says that $\cD$ is \textbf{torsion-free} if the $3$-form $T_{\cD} \in \Omega^{3}(E)$ defined by
\begin{equation}
T_{\cD}(\psi,\psi',\psi'') := \< \cD_{\psi}\psi' - \cD_{\psi'}\psi - [\psi,\psi']_{E}, \psi''\>_{E} + \< \cD_{\psi''}\psi,\psi'\>_{E},
\end{equation}
for all $\psi,\psi',\psi'' \in \Gamma(E)$, vanishes identically. 
\end{definice}
The definition of the torsion $3$-form appeared independently in \cite{alekseevxu, 2007arXiv0710.2719G}. The unusual additional term appearing in $T_{\cD}$ is necessary to establish its tensoriality. Let us now recall a generalization of Riemann and Ricci curvature tensors. They were first introduced and examined  in a double field theory paper \cite{Hohm:2012mf}. 
\begin{definice}
Let $\cD$ be a Courant algebroid connection. First, define a map $R^{(0)}_{\cD}$ as 
\begin{equation}
R^{(0)}_{\cD}(\phi',\phi,\psi,\psi') := \< ([\cD_{\psi},\cD_{\psi'}] - \cD_{[\psi,\psi']_{E}} ) \phi, \phi'\>_{E},
\end{equation}
for all $\psi,\psi',\phi,\phi' \in \Gamma(E)$. This is not a tensor on $E$. Instead, set
\begin{equation}
R_{\cD}(\phi',\phi,\psi,\psi') := \frac{1}{2} \{ R_{\cD}^{(0)}(\phi',\phi,\psi,\psi') + R_{\cD}^{(0)}(\psi',\psi,\phi,\phi') + \< \cD_{\psi_{\lambda}} \psi, \psi'\>_{E} \cdot \< \cD_{\psi^{\lambda}_{E}} \phi, \phi'\>_{E} \},
\end{equation}
where $\{ \psi_{\lambda} \}_{\lambda=1}^{\rk{E}}$ is some local frame for $E$, $\{ \psi^{\lambda} \}_{\lambda=1}^{\rk{E}}$ is the corresponding dual frame and $\psi^{\lambda}_{E} \equiv g_{E}^{-1}(\psi^{\lambda})$. It follows that $R_{\cD} \in \T_{4}^{0}(E)$ and we call it the \textbf{generalized Riemann curvature tensor} of $\cD$. It has only one non-trivial (up to a sign) contraction in two indices, namely 
\begin{equation}
\Ric_{\cD}(\psi,\psi') := R_{\cD}( \psi^{\lambda}_{E}, \psi', \psi_{\lambda}, \psi),
\end{equation}
for all $\psi,\psi' \in \Gamma(E)$. We call it the \textbf{generalized Ricci curvature tensor}. Finally, set 
\begin{equation}
\RS_{\cD}^{E} := \Ric_{\cD}( \psi^{\lambda}_{E}, \psi_{\lambda}) 
\end{equation} 
to define the \textbf{Courant--Ricci scalar curvature}. Clearly $\RS^{E}_{\cD} \in C^{\infty}(M)$. 
\end{definice}
For the sake of brevity, we sometimes drop some of the adjectives and say for example just \textit{curvature tensor}. $R_{\cD}$ enjoys symmetries similar to the usual curvature tensor in Riemannian geometry. In particular, the Ricci tensor $\Ric_{\cD}$ is symmetric. Note that all objects in the previous definition are well-defined for general Courant algebroid connections. This is in contrast with the approach taken in \cite{2013arXiv1304.4294G, Garcia-Fernandez:2016ofz} where the presence of additional structures on $E$ is necessary. 

\begin{definice}
Let $(E,\<\cdot,\cdot\>_{E})$ be an orthogonal vector bundle.\footnote{That is equipped with a fiber-wise metric $\<\cdot,\cdot\>_{E}$.} The choice of a maximal positive definite subbundle $V_{+} \subseteq E$ with respect to $\<\cdot,\cdot\>_{E}$ is called a \textbf{generalized Riemannian metric}. It gives an orthogonal decomposition $E = V_{+} \oplus V_{-}$, where $V_{-} = V_{+}^{\perp}$ is a maximal negative definite subbundle of $E$. Equivalently, $V_{\pm}$ form $\pm 1$ eigenbundles to the orthogonal involution $\tau \in \Aut(E)$, such that
\begin{equation}
\gm(\psi,\psi') := \< \psi, \tau(\psi') \>_{E},
\end{equation}
for all $\psi,\psi' \in \Gamma(E)$, defines a positive definite fiber-wise metric $\gm$ on $E$. 
\end{definice}
Note that, in principle, one can  use a more general definition, where the restriction of $\<\cdot,\cdot\>_{E}$ onto $V_{+}$ defines a fiber-wise metric of an arbitrary signature. The corresponding orthogonal involution $\tau$ is then an arbitrary one. This is used for example in \cite{Garcia-Fernandez:2016ofz}. However, for our purposes this is an unnecessary generality.\footnote{The set of such generalized metrics is pretty big, for example $V_{+} = E$ is then also a generalized metric.} 
\begin{example} \label{ex_genmetricgTM}
Let $(E,\<\cdot,\cdot\>_{E}) = (\gTM, \<\cdot,\cdot\>_{\mathbb{T}})$. Then every generalized metric $V_{+} \subseteq E$ is of the form $\Gamma(V_{+}) = \{ (X,(g+B)(X)) \; | \; X \in \vf{} \}$, where $g > 0$ is a unique Riemannian metric on $M$ and $B \in \Omega^{2}(M)$ is a unique $2$-form on $M$. The corresponding fiber-wise metric $\gm$ can be, with respect to the splitting $\gTM = TM \oplus T^{\ast}M$, written as a formal $2 \times 2$ matrix 
\begin{equation} \label{eq_genmetricusual}
\gm = \bma{g - Bg^{-1}B}{Bg^{-1}}{-g^{-1}B}{g^{-1}}.
\end{equation}
\end{example}
\begin{definice}
Let $\cD$ be a Courant algebroid connection. Let $V_{+} \subseteq E$ be a generalized metric. We say that $\cD$ is \textbf{metric compatible with $V_{+}$} if, for each $\psi \in \Gamma(E)$, one has $\cD_{\psi}(\Gamma(V_{+})) \subseteq \Gamma(V_{+})$. Equivalently, in terms of the fiber-wise metric $\gm$, we obtain the usual metric compatibility 
\begin{equation}
\rho(\psi). \gm(\psi',\psi'') = \gm( \cD_{\psi}\psi' , \psi'') + \gm( \psi' ,\cD_{\psi}\psi''),
\end{equation}
for all $\psi,\psi',\psi'' \in \Gamma(E)$. We say that $\cD$ is a \textbf{Levi-Civita connection} on $E$ with respect to $V_{+}$ if $\cD$ is metric compatible with $V_{+}$ and it is torsion-free if $T_{\cD} = 0$. We write $\cD \in \LC(E,V_{+}).$
\end{definice}
\begin{example} \label{ex_quadrconn}
Let $(\d,\<\cdot,\cdot\>_{\d},[\cdot,\cdot]_{\d})$ be a quadratic Lie algebra. We can use it to define a Courant algebroid $(\d,0,\<\cdot,\cdot\>_{\d},-[\cdot,\cdot]_{\d})$. Let $\E_{+} \subseteq \d$ be a maximal positive definite subspace of $\d$ defining a generalized metric on $(\d,\<\cdot,\cdot\>_{\d})$. Let $\E_{-} = \E_{+}^{\perp}$ be the corresponding orthogonal complement. For $x \in \d$, let $x_{\pm}$ denote its projections onto $\E_{\pm}$. Set
\begin{equation} \label{eq_cD0formula}
\begin{split}
\< \cD^{0}_{x}y, z \>_{\d} := & \ - \< [x_{+},y_{-}]_{\d}, z_{-} \>_{\d} - \< [x_{-}, y_{+}]_{\d}, z_{+} \>_{\d} \\
& - \frac{1}{3} \< [x_{+},y_{+}]_{\d}, z_{+} \>_{\d} - \frac{1}{3}\< [x_{-},y_{-}]_{\d}, z_{-} \>_{\d},
\end{split}
\end{equation}
for all $x,y,z \in \d$. Then $\cD^{0} \in \LC(\d,\E_{+})$. The minus sign in the Courant algebroid bracket will be explained in Section \ref{sec_reduction}. 
\end{example}
For a given generalized metric $V_{+}$, there exist infinitely many Levi-Civita connections\footnote{Except for some low rank cases.}. See e.g. \cite{Garcia-Fernandez:2016ofz} or \cite{Jurco:2016emw} for details. The presence of $V_{+}$ allows us to introduce two new properties of $\cD$. 

\begin{definice}
Let $\cD$ be a Courant algebroid connection. Let $V_{+} \subseteq E$ be a generalized metric. We say that $\cD$ is \textbf{Ricci compatible with $V_{+}$} if the corresponding Ricci tensor $\Ric_{\cD}$ is block diagonal with respect to the orthogonal decomposition $E = V_{+} \oplus V_{-}$, that is,
\begin{equation}
\Ric_{\cD}(V_{+},V_{-}) = 0.
\end{equation}
Moreover, we can introduce the \textbf{scalar curvature with respect to $\gm$} as 
\begin{equation} \RS_{\cD}^{\gm} := \Ric_{\cD}( \gm^{-1}(\psi^{\lambda}), \psi_{\lambda}). \end{equation}
\end{definice}

To each Courant algebroid connection $\cD$, there is a canonically assigned vector field $X_{\cD} \in \vf{}$ which we call the \textbf{characteristic vector field} of $\cD$. First, one can define a divergence operator $\Div_{\cD}: \Gamma(E) \rightarrow \cif$. For all $\psi \in \Gamma(E)$, put  
\begin{equation}
\Div_{\cD}(\psi) := \< \cD_{\psi_{\lambda}} \psi, \psi^{\lambda} \>,
\end{equation}
where $\{ \psi_{\lambda} \}_{\lambda=1}^{\rk{E}}$ is an arbitrary local frame for $E$. The action of $X_{\cD}$ on $f \in \cif$ is defined by
\begin{equation} \label{eq_charverfield}
X_{\cD}.f := \Div_{\cD}( \D{f}),
\end{equation}
where $\D: \cif \rightarrow \Gamma(E)$ was introduced in Definition \ref{def_Courant}. One has to use both the defining properties of $\cD$ and the Courant algebroid axioms to prove that the action of $X_{\cD}$ satisfies the Leibniz rule and thus defines a vector field on $M$. 

For $E = \gTM$, Levi-Civita connections can  conveniently be used to describe the equations of motion of string low-energy effective actions. We state the result in the form of a theorem, for its proof see \cite{Jurco:2016emw}. 

\begin{theorem} \label{thm_eomconn}
Let $(\gTM, \rho, \<\cdot,\cdot\>_{\mathbb{T}}, [\cdot,\cdot]_{D}^{H})$ be the Courant algebroid structure on the generalized tangent bundle with the $H$-twisted Dorfman bracket (\ref{eq_HDorfman}) for a given closed $3$-form $H \in \Omega^{3}_{cl}(M)$. Let $V_{+} \subseteq \gTM$ be a generalized metric corresponding to $(g,B)$ as in Example \ref{ex_genmetricgTM}. 

Let $\cD \in \LC(\gTM,V_{+})$ be a Levi-Civita connection on $\gTM$ with respect to $V_{+}$, such that its characteristic vector field vanishes, $X_{\cD} = 0$. Moreover, assume that there is a scalar function $\phi \in \cif$, such that 
\begin{equation} \label{eq_dilatonrel}
\< \cD_{\rho^{\ast}( h^{-1}_{\gm}(e_{k}))} \rho^{\ast}(e^{k}) , \rho^{\ast}( h_{\gm}^{-1}(Z)) \>_{E} = \< d\phi, Z \>,
\end{equation}
for all $Z \in \vf{}$. Here $\{ e_{k} \}_{k=1}^{\dim{M}}$ is an arbitrary local frame on $M$, $\rho^{\ast}$ is the map in the short exact sequence (\ref{eq_CASES}) and $h_{\gm}$ is a fiber-wise metric on $T^{\ast}M$ defined by
\begin{equation}
h_{\gm}(\xi,\eta) := \gm( \rho^{\ast}(\xi), \rho^{\ast}(\eta)), 
\end{equation}
for all $\xi,\eta \in \Omega^{1}(M)$. Note that this connection exists for any given $V_{+}$ and $\phi$. 

Then the fields $(g,B,\phi)$ satisfy the equations of motion of the \textbf{low-energy string effective action} given by the functional
\begin{equation}
S[g,B,\phi] = \int_{M} e^{-2\phi} \{ \RS(g) - \frac{1}{2} \<H + dB,H + dB\>_{g} + 4 \| \cD^{g} \phi \| _{g}^{2} \} \cdot d \vol_{g},
\end{equation}
if and only if $\cD$ is Ricci compatible with $V_{+}$ and its scalar curvature $\RS_{\cD}^{\gm}$ is zero. $\RS(g)$ is the usual scalar curvature corresponding to the metric $g$ and $\cD^{g}: \cif \rightarrow \vf{}$ is the gradient operator. 
\end{theorem}
There is a note in order. We have written the condition (\ref{eq_dilatonrel}) in a little bit overcomplicated way. The reason for this is   explained in detail in \cite{Jurco:2016emw}. In a nutshell, the left-hand side of (\ref{eq_dilatonrel}) is invariant under Courant algebroid isomorphisms, that is if we replace $\cD$ with its isomorphic image, and the same with $\rho$ and fiber-wise metrics $\gm$ and $\<\cdot,\cdot\>_{E}$, it does not change. This will be very important in Section \ref{sec_dilatons}. Note that in this case $h_{\gm} = g^{-1}$. 
\section{Reductions and Poisson--Lie T-duality} \label{sec_reduction}
In this section, we will review how the Poisson--Lie T-duality of sigma models fits into the Courant algebroid setting. This is based mostly on the paper \cite{2015LMaPh.tmp...53S}. 
\subsection{Two reductions of the same Courant algebroid} \label{subsec_tworeductions}
First, let us review the general procedure. For a detailed theory of  the reduction of Courant algebroids, see \cite{Bursztyn2007726, Baraglia:2013wua}.
\begin{definice}
Let $(E,\rho,\<\cdot,\cdot\>_{E},[\cdot,\cdot]_{E})$ be a Courant algebroid, where $q: E \rightarrow P$ is a vector bundle over a principal $G$-bundle $\pi: P \rightarrow M$. Let $\g = \Lie(G)$. One says that an $\R$-linear map $\Re: \g \rightarrow \Gamma(E)$ is an \textbf{action of $\g$ on Courant algebroid $E$} if it satisfies
\begin{equation} \label{eq_Courantaction}
\rho \circ \Re = \#, \; \; \Re([x,y]_{\g}) = [\Re(x), \Re(y)]_{E}, 
\end{equation}
for all $x,y \in \g$, where $\#: \g \rightarrow \vfP{}$ is the infinitesimal generator of the principal bundle $G$-action. Moreover, assume that the infinitesimal action $x \gtrdot \psi = [\Re(x),\psi]_{E}$ integrates to a right Lie group action $\fR: E \times G \rightarrow E$ making $E$ into a $G$-equivariant vector bundle. Then $(E,\rho,\<\cdot,\cdot\>_{E},[\cdot,\cdot]_{E})$ is called a \textbf{$G$-equivariant Courant algebroid}.
\end{definice}

The definition ensures that for each $g \in G$, the pair $(\fR_{g},R_{g})$ forms a Courant algebroid automorphism. Having an action of $G$ on the Courant algebroid $E$, there exists a procedure very similar to the symplectic reduction. This allows one to find a reduced Courant algebroid structure on a certain vector bundle $E'_{\g}$ over the base manifold $M$ of the principal bundle $P$. 

First, $\Re$ can be viewed as a vector bundle map from the trivial vector bundle\footnote{The conditions (\ref{eq_Courantaction}) then imply that $\Re: P \times \g \rightarrow E$ can be viewed as a morphism of the action Lie algebroid corresponding to the Lie algebra action $\#: \g \rightarrow \vfP{}$ and of the Courant algebroid $E$. See \cite{Mackenzie}.}  $P \times \g$ to $E$. As the right action of $G$ on $P$ is free, $\Re$ is fiber-wise injective. One can thus define its image subbundle $K_{\g} := \Ima(\Re) \subseteq E$. 

Next, consider the corresponding orthogonal complement $K^{\perp}_{\g}$. Both these subbundles are $G$-invariant with respect to $\fR$ and moreover, the subbundle $K^{\perp}_{\g}$ is involutive under the bracket $[\cdot,\cdot]_{E}$. Finally, we have to take out the  elements isotropic with respect $\<\cdot,\cdot\>_{E}$. One thus sets
\begin{equation}
E'_{\g} := \frac{K^{\perp}_{\g} / G}{(K_{\g} \cap K^{\perp}_{\g}) / G}.
\end{equation}
On the level of sections, we have $\Gamma(E'_{\g}) = \Gamma_{G}(K^{\perp}_{\g}) / \Gamma_{G}( K_{\g} \cap K^{\perp}_{\g})$. The anchor $\rho'_{\g}: \Gamma(E'_{\g}) \rightarrow \vf{}$ is defined via the commutative diagram 
\begin{equation} \label{eq_rho'gdef}
\begin{tikzcd}
& E \arrow[r,"\rho"] \arrow[d,"\natural"] & TP \arrow[d,"T(\pi)"] \\
K^{\perp}_{\g} / G \arrow[d,"\chi"] \arrow[r, hookrightarrow] & E/G \arrow[r, "\hat{\rho}"] & TM \\
E'_{\g} \arrow[rru, "\rho'_{\g}"', dashed, bend right=10] & & 
\end{tikzcd},
\end{equation}
where $\natural$ is the quotient map induced by $\fR$ and $\hat{\rho}$ is the map defined by the equivariant vector bundle morphism $\rho$. Finally, $\chi$ denotes the natural map corresponding to the quotient by the subbundle $(K_{\g} \cap K_{\g}^{\perp}) / G$. The pairing $\<\cdot,\cdot\>_{E'_{\g}}$ and the bracket $[\cdot,\cdot]_{E'_{\g}}$ are obtained in the obvious way from the corresponding structures on $E$. With some effort, one can prove that $(E'_{\g}, \rho'_{\g}, \<\cdot,\cdot\>_{E'_{\g}}, [\cdot,\cdot]_{E'_{\g}})$ is indeed a Courant algebroid, see \cite{Bursztyn2007726} for details.  

Now, let us return back to the scenario described in Section \ref{sec_Maninpairs}. Let $(\d,\g)$ be a Manin pair and let $(D,G)$ be the corresponding pair of Lie groups. Let $E = \gTD$ be the generalized tangent bundle over $D$. Let $H \in \Omega^{3}_{cl}(D)$ be defined as 
\begin{equation} \label{eq_specialH}
H := - \frac{1}{12} \< [\theta_{L} \^ \theta_{L}]_{\d} \^ \theta_{L} \>_{\d},
\end{equation}
where $\theta_{L} \in \Omega^{1}(D,\d)$ denotes the left Maurer-Cartan form. In other words, $H$ is proportional to the canonical bi-invariant $3$-form on the quadratic Lie group $D$. We assume that $\gTD$ is equipped with the Courant algebroid of the $H$-twisted Dorfman bracket. We will now show that there are two different group actions making $\gTD$ into an equivariant Courant algebroid. 

\subsubsection*{Reduction using the whole group $D$} \label{subsec_Dreduction}
First, we can view $D$ as a principal $D$-bundle $\pi_{D}: D \rightarrow \{ \ast \}$ over the singleton manifold $\{ \ast \}$. In this case, the map $\#: \d \rightarrow \X(D)$ assigns to each $x \in \d$ the corresponding left-invariant vector field $\#{x} = x^{L}$. Define $\Re: \d \rightarrow \Gamma(\gTD)$ as 
\begin{equation} \label{eq_Reexplicit}
\Re(x) := (x^{L}, -\frac{1}{2} \< \theta_{L}, x\>_{\d}).
\end{equation}
The induced action $\gtrdot$ of $\d$ on $\Gamma(\gTD)$ is easily calculated to take the explicit form 
\begin{equation}
x \gtrdot (Y,\eta) = ( [x^{L}, Y], \Li{x^{L}} \eta),
\end{equation}
for all $x \in \d$ and $(Y,\eta) \in \Gamma(\gTD)$. This action clearly integrates to the canonical right translation $\fR$ induced by the group multiplication on $D$, whence the map $\Re$ defines the action on $\gTD$ making it into a $D$-equivariant Courant algebroid. 

To find the reduced Courant algebroid, it is convenient to identify $\gTD$ with the trivial vector bundle $E_{\d} = D \times (\d \oplus \d^{\ast})$. As $\Gamma(E_{\d}) = C^{\infty}(D,\d) \oplus C^{\infty}(D,\d^{\ast})$, one can write every section of $E_{\d}$ as a pair $(\Phi,\Psi)$ of vector space valued functions. The anchor $\rho_{\d} \in \Hom(E_{\d},TD)$ can be written as $\rho_{\d}(\Phi,\Psi)_{d} = (\Phi(d))^{R}_{d}$. The $H$-twisted Dorfman bracket with the $3$-form given by (\ref{eq_specialH}) can, in terms of the trivial vector bundle $E_{\d}$, be recast as 
\begin{equation} \label{eq_HDorfmantrivial}
\begin{split}
[(\Phi,\Psi), (\Phi',\Psi')]_{E} = ( & \frD_{\Phi}\Phi' - \frD_{\Phi'}\Phi - [\Phi,\Phi']_{\d}, \frD_{\Phi}\Psi' - \frD_{\Phi'}\Psi - \ad^{\ast}_{\Phi}\Psi' + \ad^{\ast}_{\Phi'}\Psi \\
& + \< \frD \Phi, \Psi'\> + \< \frD \Psi, \Phi'\> + \frac{1}{2} \< [\Phi,\Phi']_{\d}, \cdot \>_{\d} ),
\end{split}
\end{equation}
for all $(\Phi,\Psi), (\Phi',\Psi') \in \Gamma(E_{\d})$. Let $\{ t_{\alpha} \}_{\alpha=1}^{\dim{\d}}$ be any fixed basis of $\d$. We define $\frD_{\Phi}\Phi' := \Phi^{\alpha} \cdot (t_{\alpha}^{R}.\Phi')$ where $\Phi = \Phi^{\alpha} t_{\alpha}$. By $[\cdot,\cdot]_{\d}$ and $\ad^{\ast}$ we mean the corresponding fiber-wise operations in Lie algebra $\d$ and its double $\d^{\ast}$. The canonical pairing $\<\cdot,\cdot\>_{\mathbb{T}}$ on $\gTD$ becomes the canonical pairing $\<\cdot,\cdot\>_{E_{\d}}$ on $E_{\d}$: 
\begin{equation}
\< (\Phi,\Psi), (\Phi',\Psi') \>_{E_{\d}} = \<\Psi',\Phi\> + \<\Psi, \Phi'\>,
\end{equation}
where $\< \Psi',\Phi\>$ is a fiber-wise contraction of two functions valued in mutually dual vector spaces.   The map $\Re$ defined by (\ref{eq_Reexplicit}) viewed as a vector bundle map $\Re \in \Hom(D\times\d,E_{\d})$ takes the form 
\begin{equation}
\Re(d,x) = \big(d, ( \Ad_{d}(x), - \frac{1}{2} \< \cdot, \Ad_{d}(x) \>_{\d} )\big).
\end{equation}
$D$-invariant sections of $E$ correspond to the constant sections, which we denote as $(\Phi_{x}, \Psi_{\eta})$, for $(x,\eta) \in \d \oplus \d^{\ast}$. It is now easy to find the $C^{\infty}( \{ \ast \})$-modules\footnote{Vector spaces.} $\Gamma_{D}(K_{\d})$ and $\Gamma_{D}(K^{\perp}_{\d})$:
\begin{equation} \label{eq_KdKdperpexpl}
\Gamma_{D}(K_{\d}) = \{ (\Phi_{x}, \Psi_{-\frac{1}{2} g_{\d}(x)}) \; | \; x \in \d \}, \; \; \Gamma_{D}(K^{\perp}_{\d}) = \{ (\Phi_{x}, \Psi_{\frac{1}{2} g_{\d}(x)}) \; | \; x \in \d \},
\end{equation}
where $g_{\d} \in \Hom(\d,\d^{\ast})$ is the isomorphism induced by the invariant form $\<\cdot,\cdot\>_{\d}$. As this is non-degenerate, we immediately see that $\Gamma_{D}(K_{\d} \cap K^{\perp}_{\d}) = 0$. We thus get $\Gamma(E'_{\d}) = \Gamma_{D}(K^{\perp}_{\d})$. This yields the obvious isomorphism $E'_{\d} \cong \d$. The anchor $\rho'_{\d}$ is trivially zero. For the pairing, we have
\begin{equation} \label{eq_Apairingonconstant}
\< (\Phi_{x}, \Psi_{\frac{1}{2} g_{\d}(x)}), (\Phi_{y}, \Psi_{\frac{1}{2} g_{\d}(y)}) \>_{E} = \<x,y\>_{\d},
\end{equation}
for all $x,y \in \d$. The pairing induced on $\d$ thus coincides with the original metric $\<\cdot,\cdot\>_{\d}$. For the Courant algebroid bracket, plugging in the constant sections into (\ref{eq_HDorfmantrivial}) gives 
\begin{equation} \label{eq_Ebracketonconstant}
[( \Phi_{x}, \Psi_{\frac{1}{2} g_{\d}(x)}), ( \Phi_{y}, \Psi_{\frac{1}{2} g_{\d}(y)}) ]_{E} = ( \Phi_{-[x,y]_{\d}}, \Psi_{ -\frac{1}{2} g_{\d}([x,y]_{\d})} ),
\end{equation}
for all $x,y \in \d$. But this shows that the induced bracket on $\d$ is $-[\cdot,\cdot]_{\d}$. We conclude that the reduced Courant algebroid $E'_{\d}$ can be identified with the Courant algebroid $(\d,0,\<\cdot,\cdot\>_{\d}, - [\cdot,\cdot]_{\d})$. This also explains the peculiar minus sign we have used in Example \ref{ex_quadrconn}. 

\subsubsection*{Reduction using the Lagrangian subgroup $G$}
Now, we consider the principal $G$-fibration $\pi: D \rightarrow M$, where $M = D / G$ is the left coset space corresponding to the subgroup $G$. Let $\Re: \g \rightarrow \Gamma(\gTD)$ be the action obtained by the restriction of (\ref{eq_Reexplicit}) onto the Lie subalgebra $\g$. First, note that for all $x,y \in \g$, one has 
\begin{equation}
\< \Re(x), \Re(y) \>_{\mathbb{T}} = - \<x,y\>_{\d} = 0, 
\end{equation}
as $\g \subseteq \d$ is assumed to be isotropic. Hence $K_{\g} \subseteq \gTD$ is an isotropic subbundle and thus $K_{\g} \subseteq K_{\g}^{\perp}$. This implies that the reduced Courant algebroid $E'_{\g}$ is 
\begin{equation} \label{eq_KdKdperp}
E'_{\g} = \frac{K^{\perp}_{\g} / G}{(K_{\g} \cap K^{\perp}_{\g}) / G} = \frac{K^{\perp}_{\g} / G}{K_{\g} / G}.
\end{equation}
Using the maximal isotropy of $\g$ with respect to $\<\cdot,\cdot\>_{\d}$, one can rewrite this quotient. As it was shown in \cite{2015LMaPh.tmp...53S}, this is an application of an elementary linear algebra:
\begin{lemma}
Let $(\d,\g)$ be a Manin pair. Let $E = V \oplus V^{\ast}$, where $V$ is a finite-dimensional vector space. Let $\<\cdot,\cdot\>_{E}$ be the canonical pairing on $E$. Let $\Re: \d \rightarrow E$ be an injective linear anti-isometry, that is $\< \Re(x), \Re(y) \>_{E} = - \<x,y\>_{\d}$ for all $x,y \in \d$. Let $K_{\d} = \Re(\d)$ and $K_{\g} = \Re(\g)$. Then there is a subspace decomposition:
\begin{equation} \label{eq_Kperpgdecomp}
K^{\perp}_{\g} = K_{\g} \oplus K^{\perp}_{\d}. 
\end{equation}
\end{lemma}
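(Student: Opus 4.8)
The plan is to prove the asserted equality by combining an inclusion argument with a dimension count, relying only on the non-degeneracy of $\<\cdot,\cdot\>_{E}$ and the Manin pair dimensions $\dim\g = m$, $\dim\d = 2m$. Nothing about the bracket is needed; this is pure linear algebra.

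First I would record the relevant dimensions and the two structural facts that do the work. Since $\mathfrak{R}$ is injective, $\dim K_{\d} = 2m$ and $\dim K_{\g} = m$. Because $\mathfrak{R}$ is an anti-isometry and $\<\cdot,\cdot\>_{\d}$ is non-degenerate, the restriction of $\<\cdot,\cdot\>_{E}$ to $K_{\d}$ is non-degenerate; hence $E = K_{\d} \oplus K_{\d}^{\perp}$ is an orthogonal direct sum with $K_{\d} \cap K_{\d}^{\perp} = 0$ and $\dim K_{\d}^{\perp} = \dim E - 2m$. Likewise, since $\g$ is isotropic in $\d$, the anti-isometry makes $K_{\g}$ isotropic in $E$, i.e. $K_{\g} \subseteq K_{\g}^{\perp}$, and non-degeneracy of $\<\cdot,\cdot\>_{E}$ gives $\dim K_{\g}^{\perp} = \dim E - m$.

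Next I would establish the inclusion $K_{\g} \oplus K_{\d}^{\perp} \subseteq K_{\g}^{\perp}$. The first summand lies in $K_{\g}^{\perp}$ by the isotropy of $K_{\g}$ just noted. For the second, $\g \subseteq \d$ yields $K_{\g} \subseteq K_{\d}$, and passing to orthogonal complements reverses inclusions, so $K_{\d}^{\perp} \subseteq K_{\g}^{\perp}$. The sum is direct because $K_{\g} \subseteq K_{\d}$ forces $K_{\g} \cap K_{\d}^{\perp} \subseteq K_{\d} \cap K_{\d}^{\perp} = 0$. A dimension count then closes the argument: $\dim(K_{\g} \oplus K_{\d}^{\perp}) = m + (\dim E - 2m) = \dim E - m = \dim K_{\g}^{\perp}$, so the inclusion established above is in fact an equality, which is (\ref{eq_Kperpgdecomp}).

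There is no real obstacle here, as the paper itself anticipates in calling this elementary. The only points requiring a moment's care are (i) observing that the anti-isometry, despite the overall sign, still transports non-degeneracy and isotropy of subspaces of $\d$ into the corresponding properties of their images in $E$, and (ii) keeping the two independent facts cleanly separated: that $\<\cdot,\cdot\>_{E}$ restricted to $K_{\d}$ is non-degenerate (which yields the orthogonal splitting of $E$ and hence $\dim K_{\d}^{\perp}$), while $K_{\g}$ is self-orthogonal (which yields $K_{\g} \subseteq K_{\g}^{\perp}$). Both are immediate consequences of the hypothesis that $\g$ is a maximally isotropic subalgebra of half the dimension of $\d$.
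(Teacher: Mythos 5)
Your proposal is correct and follows essentially the same route as the paper: establish the inclusion $K_{\g}\oplus K_{\d}^{\perp}\subseteq K_{\g}^{\perp}$ from isotropy and $K_{\g}\subseteq K_{\d}$, check directness via non-degeneracy of $\<\cdot,\cdot\>_{\d}$ transported through the anti-isometry, and conclude by the dimension count $m+(\dim E-2m)=\dim E-m=\dim K_{\g}^{\perp}$. The only cosmetic difference is that you derive $K_{\g}\cap K_{\d}^{\perp}=0$ from $K_{\d}\cap K_{\d}^{\perp}=0$, whereas the paper verifies it directly on elements $\Re(x)$; these are the same fact.
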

\begin{proof}
Clearly $K_{\d}^{\perp} \subseteq K_{\g}^{\perp}$ and $K_{\g} \subseteq K_{\g}^{\perp}$ due to the isotropy. This shows that $K_{\g} + K_{\d}^{\perp} \subseteq K_{\g}^{\perp}$. Moreover, one has $K_{\g} \cap K^{\perp}_{\d} = 0$. Any $\psi \in K_{\g} \cap K^{\perp}_{\d}$ can be written as $\psi = \Re(x)$ for $x \in \g$. On the other hand, we have $0 = \< \psi, \Re(y) \>_{E} = -\<x,y\>_{\d}$ for all $y \in \d$. The non-degeneracy of $\<\cdot,\cdot\>_{\d}$ implies $x = 0$. Hence the sum is direct and $K_{\g} \oplus K^{\perp}_{\d} \subseteq K^{\perp}_{\g}$. Finally, one has
\begin{equation}
\dim( K_{\g} \oplus K^{\perp}_{\d}) = \dim(K_{\g}) + (\dim(E) - \dim(K_{\d})) = \dim(E) - \dim(K_{\g}) = \dim(K_{\g}^{\perp}). 
\end{equation}
We have used the injectivity of $\Re$ and the maximal isotropy implication $\dim{\d} = 2 \cdot \dim{\g}$.
\end{proof}
This lemma is naturally generalized to vector bundles and our $\Re: \d \rightarrow \Gamma(\gTD)$ satisfies the assumptions of the lemma. In particular, we obtain the decomposition of the corresponding $\cif$-modules of $G$-invariant sections in the form 
\begin{equation}
\Gamma_{G}(K^{\perp}_{\g}) = \Gamma_{G}(K_{\g}) \oplus \Gamma_{G}(K^{\perp}_{\d}). 
\end{equation}
This makes the calculation of the above quotient trivial and we conclude that 
\begin{equation} \label{eq_EprimegisKperpd}
\Gamma(E'_{\g}) \cong \Gamma_{G}(K^{\perp}_{\d}). 
\end{equation}
From the expressions in (\ref{eq_KdKdperpexpl}) it is now straightforward to find the $\cif$-module $\Gamma_{G}(K^{\perp}_{\d})$: 
\begin{equation}
\Gamma_{G}(K^{\perp}_{\d}) = \{ (\Phi, \frac{1}{2} g_{\d}(\Phi)) \; | \; \Phi \in C^{\infty}_{G}(D,\d) \} \cong C^{\infty}_{G}(D,\d).
\end{equation}
Now, note that $\cif$-module $C^{\infty}_{G}(D,\d)$ is isomorphic to $C^{\infty}(M,\d) = \Gamma(M \times \d)$. This already suggests that the reduced Courant algebroid $E'_{\g}$ corresponds to the one in Subsection \ref{subsec_LieCourant}. 

\begin{tvrz} \label{tvrz_reducedEiso}
The above described $\cif$-module isomorphism defines an isomorphism of the reduced Courant algebroid $(E'_{\g}, \rho'_{\g}, \<\cdot,\cdot\>_{E'_{\g}}, [\cdot,\cdot]_{E'_{\g}})$ and $(E,\rho,\<\cdot,\cdot\>_{E},[\cdot,\cdot]_{E})$, the Courant algebroid analyzed in Subsection \ref{subsec_LieCourant}.
\end{tvrz}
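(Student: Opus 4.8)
The plan is to upgrade the $\cif$-module isomorphism $\Gamma(E'_{\g}) \cong \Gamma_{G}(K^{\perp}_{\d}) \cong C^{\infty}_{G}(D,\d) \cong C^{\infty}(M,\d) = \Gamma(E)$ established in (\ref{eq_EprimegisKperpd}) into a genuine Courant algebroid isomorphism. Concretely, a class in $\Gamma(E'_{\g})$ is represented by the $K^{\perp}_{\d}$-section $(\Phi, \tfrac{1}{2} g_{\d}(\Phi))$ with $\Phi \in C^{\infty}_{G}(D,\d)$, and $G$-invariance lets $\Phi$ descend to a section $\psi \in C^{\infty}(M,\d) = \Gamma(E)$. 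It then remains to verify that this identification intertwines the three structures: the fibre-wise metric, the anchor, and the bracket.

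The metric and the anchor are the routine parts. Substituting two $K^{\perp}_{\d}$-representatives into the pairing on $E_{\d}$ and using the definition of $g_{\d}$ gives $\langle (\Phi, \tfrac{1}{2} g_{\d}\Phi),(\Phi', \tfrac{1}{2} g_{\d}\Phi')\rangle_{E_{\d}} = \langle \Phi,\Phi'\rangle_{\d}$, which is exactly the fibre-wise extension of $\langle\cdot,\cdot\rangle_{\d}$ on $E$. For the anchor, $\rho_{\d}$ sends such a representative to the right-invariant vector field $\Phi^{R}$; pushing it down by $T(\pi)$ and using that right-invariant vector fields are the infinitesimal generators of the $D$-action $\tr$ on $M = D/G$ reproduces the anchor $\rho$ of (\ref{eq_anchoronM}).

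The bracket is the crux, and the genuine obstacle lies here. The subtlety is that, although $K^{\perp}_{\g}$ is involutive on $G$-invariant sections, $K^{\perp}_{\d}$ is not (involutivity of $K^{\perp}_{\d}$ would require full $D$-invariance), so the reduced bracket is the $K^{\perp}_{\d}$-component of $[\psi,\psi']_{\gTD}$ relative to the splitting (\ref{eq_Kperpgdecomp}), i.e. the $K_{\g}$-part must be projected out. Rather than performing this projection explicitly, I would exploit $K_{\g} \subseteq K_{\d} = (K^{\perp}_{\d})^{\perp}$: since $K_{\g} \perp K^{\perp}_{\d}$, the reduced bracket is fully determined by its pairings against arbitrary $K^{\perp}_{\d}$-representatives $\psi'' \leftrightarrow \Phi''$, and these coincide with the pairings of the \emph{full} bracket $[\psi,\psi']_{\gTD}$ against $(\Phi'', \tfrac{1}{2} g_{\d}\Phi'')$. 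Writing $\Psi = \tfrac{1}{2} g_{\d}\Phi$ and $\Psi' = \tfrac{1}{2} g_{\d}\Phi'$ for the cotangent parts and feeding (\ref{eq_HDorfmantrivial}) together with the pairing of $E_{\d}$ into this, the coadjoint and structure-constant contributions collapse by $\ad$-invariance of $\langle\cdot,\cdot\rangle_{\d}$, so that the cotangent component of $[\psi,\psi']_{\gTD}$ reduces to $\tfrac{1}{2} g_{\d}$ of its tangent component plus the two $\frD$-terms $\langle \frD\Phi,\Psi'\rangle + \langle \frD\Psi,\Phi'\rangle$. The tangent component descends to $[\psi,\psi']'_{\At}$, while evaluation of the $\frD$-terms on $\Phi''$ produces exactly $\langle \rho(\psi'').\psi,\psi'\rangle_{\d}$. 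Collecting everything yields
\begin{equation}
\langle [\psi,\psi']_{E'_{\g}}, \psi''\rangle_{\d} = \langle [\psi,\psi']'_{\At}, \psi''\rangle_{\d} + \langle \rho(\psi'').\psi,\psi'\rangle_{\d},
\end{equation}
which is precisely the defining relation of $[\cdot,\cdot]_{E}$ preceding (\ref{eq_Ebracket}); non-degeneracy of $\langle\cdot,\cdot\rangle_{\d}$ then identifies the reduced bracket with $[\cdot,\cdot]_{E}$.

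I expect the main difficulty to be exactly the bookkeeping of the non-involutivity of $K^{\perp}_{\d}$ and the induced $K_{\g}$-projection. The orthogonality observation $K_{\g}\perp K^{\perp}_{\d}$ is what makes the projection invisible at the level of pairings and lets the symmetric correction term $\langle\rho(\psi'').\psi,\psi'\rangle_{\d}$ of (\ref{eq_Ebracket}) emerge cleanly from the $\frD$-terms of the twisted Dorfman bracket, instead of from an awkward explicit decomposition. A secondary but necessary check throughout is that every construction respects $G$-invariance, so that all objects genuinely descend to $M$.
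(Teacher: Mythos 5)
Your proof is correct, but for the bracket you take a genuinely different route from the paper. The paper verifies everything on \emph{constant} sections only: the anchor via the explicit computation $\rho'_{\g}(\Phi_{x},\tfrac{1}{2}g_{\d}(\Phi_{x})).f\circ\pi = x^{R}.(f\circ\pi)$, and the bracket and pairing by simply citing (\ref{eq_Ebracketonconstant}) and (\ref{eq_Apairingonconstant}), which show that both structures reduce to $-[\cdot,\cdot]_{\d}$ and $\<\cdot,\cdot\>_{\d}$ on constants; the extension to all of $\Gamma(E)$ is left implicit, resting on the facts that constant sections generate the $\cif$-module and that, once the anchors and pairings agree, the Leibniz rule together with (\ref{eq_CAsympart}) forces the brackets to agree everywhere. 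You instead compute the reduced bracket on arbitrary $G$-invariant sections of $K^{\perp}_{\d}$ directly from the trivialized twisted Dorfman bracket (\ref{eq_HDorfmantrivial}); your key observation --- that $K_{\g}\subseteq K_{\d}=(K^{\perp}_{\d})^{\perp}$ makes the projection along $K_{\g}$ invisible when pairing against $K^{\perp}_{\d}$, so the reduced bracket is characterized by $\<[\psi,\psi']_{E'_{\g}},\psi''\> = \<[\psi,\psi']_{\gTD},\psi''\>$ --- is a clean way to handle the non-involutivity of $K^{\perp}_{\d}$, an issue the paper's constant-section argument never has to confront, and your bookkeeping of the $\frD$-terms and the $\ad$-invariance cancellations does land exactly on the defining relation of $[\cdot,\cdot]_{E}$ preceding (\ref{eq_Ebracket}). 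The trade-off is that your argument is longer and more computational but fully self-contained, whereas the paper's is shorter but silently delegates the uniqueness-of-extension step to the reader.
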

\begin{proof}
First, we have to prove that the anchor $\rho'_{\g} \in \Hom(E'_{\g}, TM)$ defined by (\ref{eq_rho'gdef}) corresponds to $\rho \in \Hom(E,TM)$ defined by (\ref{eq_anchoronM}). As both maps are vector bundle morphisms, it suffices to verify the equality on constant sections of $E'_{\g}$. Let $\Phi_{x} \in C^{\infty}_{G}(D,\d)$ be the constant function corresponding to $x \in \d$. Under the above isomorphism, the constant sections $(\Phi_{x}, \frac{1}{2} g_{\d}(\Phi_{x})) \in \Gamma_{G}(K^{\perp}_{\d})$ correspond to constant sections $\psi_{x} \in \Gamma(E)$. For any $x \in \d$, we have
\begin{equation}
\{ \rho'_{\g}( \Phi_{x}, \frac{1}{2} g_{\d}(\Phi_{x})). f\} \circ \pi = \rho_{\d}( \Phi_{x}, \frac{1}{2} g_{\d}(\Phi_{x})).(f \circ \pi) = \frD_{\Phi_{x}}(f \circ \pi) = x^{R}.(f \circ \pi),
\end{equation}
for all $f \in \cif$. We thus have to show that $(\#^{\tr}(x).f) \circ \pi = x^{R}.(f \circ \pi)$, where $\#^{\tr}(x)$ are infinitesimal generators of the left action $\tr$ defined above (\ref{eq_anchoronM}). However, this easily follows from the definition. We have thus proved the relation $\rho'_{\g}( \Phi_{x}, \frac{1}{2} g_{\d}(\Phi_{x})) = \rho(\psi_{x})$. As constant sections generate the respective $\cif$-modules, this proves the assertion for the anchor maps. 

For the bracket and the pairing, observe that in (\ref{eq_Ebracketonconstant}) we have  proved that on constant sections the bracket gives $-[\cdot,\cdot]_{\d}$. This is exactly what (\ref{eq_Ebracket}) gives on constant sections of $E$. Similar argument and (\ref{eq_Apairingonconstant}) prove the correspondence of the two fiber-wise metrics. 
\end{proof}
\subsection{Moving the generalized metric: PLT duality} \label{subsec_PLT}
Recall that Poisson--Lie T-duality is a correspondence of two sets of sigma model backgrounds $(g,B)$ and $(\~g,\~B)$ on mutually dual Poisson Lie groups $G$ and $G^{\ast}$, respectively. See the original papers \cite{Klimcik:1995jn, Klimcik:1995ux, Klimcik:1995dy}. It was suggested already in \cite{Severa:2017oew} and more recently in \cite{2015LMaPh.tmp...53S} that this in fact corresponds to a lift and subsequent reductions of certain subbundles in the reduction scheme described in the previous subsection. We will now review this procedure. 

First, let us assume that we have a Manin triple $(\d,\g,\g')$. We will view $\d$ as the double $\d = \g \oplus \g^{\ast}$ where $\<\cdot,\cdot\>_{\d}$ is identified with the canonical pairing of $\g^{\ast}$ and $\g$ and the bracket has the form (\ref{eq_bracketdouble}). See Subsection \ref{subsec_LiePoisson} for details. The generalized metric on $(\d,\<\cdot,\cdot\>_{\d})$ corresponds to the choice of a $m$-dimensional positive definite vector subspace $\E \subseteq \d$, where $m = \dim{\g}$. Hence
\begin{equation}
\d = \E \oplus \E^{\perp}.
\end{equation} 
If necessary, we will write $\E_{+} \equiv \E$ and $\E_{-} \equiv \E^{\perp}$. Similarly to the generalized tangent bundle and Example \ref{ex_genmetricgTM}, there exists a vector space isomorphism $E_{0} \in \Hom(\g,\g^{\ast})$, such that 
\begin{equation} \label{eq_EasE0}
\E = \{ (x, E_{0}(x)) \; | \; x \in \g \} \subseteq \d,
\end{equation}
and $E_{0} = g_{0} + B_{0}$ for a positive definite metric $g_{0}$ on $\g$ and $B_{0} \in \Lambda^{2} \g^{\ast}$. Now, recall that we have obtained $\d$ as a reduced vector bundle $\d \cong E'_{\d} = K^{\perp}_{\d} / D$. This implies that the vector bundle $K^{\perp}_{\d}$ can be viewed as a pullback vector bundle $K^{\perp}_{\d} = \pi^{!}_{D}(K^{\perp}_{\d} / D)$. In particular, the subbundle $\E \subseteq \d$ can be pulled back to a $D$-invariant subbundle $\E_{D} \subseteq K^{\perp}_{\d}$. Explicitly, viewing $K^{\perp}_{\d}$ as a subbundle of the trivial vector bundle\footnote{See the previous subsection.} $D \times (\d \oplus \d^{\ast})$, one finds 
\begin{equation} \label{eq_EDexpl}
\Gamma(\E_{D}) = \{(\Phi, \frac{1}{2} g_{\d}(\Phi)) \; | \; \Phi \in C^{\infty}(D,\E) \}.
\end{equation}
But $K^{\perp}_{\d} \subseteq K^{\perp}_{\g}$ and we can thus view $\E_{D}$ as a $G$-invariant subbundle $\E_{D} \subseteq K^{\perp}_{\g}$. One can thus form a corresponding reduced subbundle $\E'_{\g} \equiv \E_{D} / G \subseteq E'_{\g}$. In other words, we have
\begin{equation} \label{eq_Eprimeggmetric}
\Gamma(\E'_{\g}) \cong \Gamma_{G}(\E_{D}) = \{ ( \Phi, \frac{1}{2} g_{\d}(\Phi)) \; | \; \Phi \in C^{\infty}_{G}(D,\E) \}. 
\end{equation}
When we view $\E'_{\g}$ as a subbundle of the trivial vector bundle $E = M \times \d$, it follows that $\E'_{\g}$ is in fact the trivial subbundle $\E'_{\g} = M \times \E$ of $E$, and the corresponding orthogonal decomposition is:
\begin{equation} \label{eq_Edecompgenmetric}
E = \E'_{\g} \oplus \E'^{\perp}_{\g} = (M \times \E) \oplus (M \times \E^{\perp}). 
\end{equation}
As the pairing on $E$ is the constant fiber-wise Lie algebra metric $\<\cdot,\cdot\>_{\d}$, it follows that $\E'_{\g} \subseteq E$ is a maximal positive definite subbundle and thus defines a generalized metric on $E$.

As in Subsection \ref{subsec_LiePoisson}, we can now assume that $\g^{\ast}$ is a Lie algebra of a connected closed Lie subgroup $G^{\ast} \subseteq D$ and the infinitesimal dressing action integrates to the global one. It thus induces a diffeomorphism $M \cong G^{\ast}$, where $(G^{\ast}, \Pi^{\ast})$ is the Poisson--Lie group dual to $(G,\Pi)$. Under this correspondence, $\E'_{\g}$  can be viewed as a trivial subbundle $G^{\ast} \times \E$ of $E' = G^{\ast} \times \d$. Moreover, the anchor $\rho' \in \Hom(E',TG^{\ast})$ given by (\ref{eq_rho'anchor}) and the isotropic splitting $s' \in \Hom(TG^{\ast}, E')$ defined by (\ref{eq_s'splitting}) provide for an isomorphism $\fPsi \in \Hom(\gTGa,E')$ having the form 
\begin{equation} \label{eq_fPsiisom}
\fPsi(X,\beta) := s'(X) + \rho'^{\ast}(\beta),
\end{equation}
for all $(X,\beta) \in \Gamma(\gTGa)$. The next step is suggestive. We can use $\fPsi$ to induce a generalized metric $V^{\g}_{+} \subseteq \gTGa$. It follows from Example \ref{ex_genmetricgTM} that there is a unique pair $(\~g,\~B)$ consisting of a Riemannian metric $\~g$ on $G^{\ast}$ and a $2$-form  $\~B \in \Omega^{2}(G^{\ast})$. Hence, let  $V^{\g}_{+} := \fPsi^{-1}(\E'_{\g})$. 

To find $\~g$ and $\~B$, it is convenient to work in terms of the right trivialization of $TG^{\ast}$. In other words, introduce $\~\fg \in C^{\infty}(G^{\ast}, S^{2} \g)$ and $\~\fB \in C^{\infty}(G^{\ast}, \Lambda^{2} \g)$ and $\~\fE \in C^{\infty}(G^{\ast}, \Hom(\g^{\ast},\g))$ via
\begin{equation}
\~\fg(\xi,\eta) := \~g( \xi_{R'}, \eta_{R'}), \; \; \~\fB(\xi,\eta) := \~B(\xi_{R'}, \eta_{R'}), \; \; \~\fE := \~\fg + \~\fB,
\end{equation}
for all $\xi,\eta \in \g^{\ast}$. We will sometimes abuse the notation and view for example $\~\fE$ simply as $G^{\ast}$-dependent map between two vector spaces $\g^{\ast}$ and $\g$. Similarly, we can define $\fPi^{\ast} \in C^{\infty}(G^{\ast}, \Lambda^{2} \g^{\ast})$. Note that the map $\rho'^{\ast}$ appearing in $\fPsi$ can be written as 
\begin{equation} \label{eq_rhoastexpl}
\rho'^{\ast}( x^{R'}) = \psi_{(x,0)} - \fPi^{\ast}(t_{k},x) \cdot \psi_{(0, t^{k})},
\end{equation}
where $\{ t_{k} \}_{k=1}^{\dim{\d}}$ is a fixed but arbitrary basis of $\g$. Let $(\xi_{R'}, \~\dE(\xi_{R'})) \in \Gamma(V^{\g}_{+})$, where $\~\dE := \~g + \~B$. Its image under the isomorphism $\fPsi \in \Hom(\gTGa,E')$ is a section of $\E'_{\g} = G^{\ast} \times \E$. There is thus a unique $\g$-valued function $\gamma \in C^{\infty}(G^{\ast}, \g)$, such that
\begin{equation}
\fPsi( \xi_{R'}, \~\dE(\xi_{R'})) = \gamma^{k} \cdot \psi_{(t_{k},E_{0}(t_{k}))} 
\end{equation} 
Evaluating $\fPsi$ on the left-hand side using the explicit forms for $s'$ and $\rho'^{\ast}$, one finds 
\begin{equation}
\fPsi( \xi_{R'}, \~\dE(\xi_{R'})) = \< t^{k}, \~\fE(\xi) \> \cdot \psi_{(t_{k},0)} + \{ \xi^{k} - \fPi^{\ast}(t_{k},\~\fE(\xi)) \} \cdot \psi_{(0,t^{k})}. 
\end{equation}
This immediately yields $\gamma = \~\fE(\xi)$. Examining the $\g^{\ast}$-component, we find the expression
\begin{equation} \label{eq_EastasE0andPi}
\~\fE = (E_{0} + \fPi^{\ast})^{-1}. 
\end{equation}
This is precisely the relation giving the background $\~\dE = \~g + \~B$ of the sigma model on the Lie group $G^{\ast}$ as proposed in \cite{1995PhLB..351..455K}. Completely analogously, one can obtain the dual fields $\dE = g + B$ on the Lie group $G$ corresponding to the generalized metric $V^{\g^{\ast}}_{+} \subseteq \gTG$. The construction of $V^{\g^{\ast}}_{+}$ is completely analogous to the one of $V^{\g}_{+}$ and uses \textit{the same given subspace} $\E \subseteq \d$. Explicitly, one finds $\fE = (E_{0}^{-1} + \fPi)^{-1}$, where $\fE = \fg + \fB$. 

\subsection{Generalized metric upstairs} \label{subsec_genmetupstairs}
Note that the lifted subbundle $\E_{D} \subseteq E_{\d} \cong \gTD$ given by (\ref{eq_EDexpl}) is positive definite, but it is not maximal as $\rk(\E_{D}) = \frac{1}{2} \dim{\g} < \dim{\g}$. $\E_{D}$ is thus not a generalized metric on $(E_{\d}, \<\cdot,\cdot\>_{E_{\d}})$. However, we may still try to find a generalized  metric $V^{\d}_{+} \subseteq E_{\d}$, such that $\E_{D} \subseteq V_{+}^{\d}$ and the generalized metric $\E \subseteq \d \cong E'_{\d}$ we have started with is obtained naturally via the reduction from the upstairs vector bundle $E_{\d}$. 

Recall that $E_{\d} = K_{\d} \oplus K_{\d}^{\perp}$ and $\E_{D} \subseteq K^{\perp}_{\d}$. We thus seek for a positive definite subbundle $\E'_{D} \subseteq K_{\d}$ such that $\rk(\E'_{D}) = \frac{1}{2} \dim{\g}$, in order to define $V^{\d}_{+} := \E_{D} \oplus \E'_{D}$. Also, recall that 
\begin{equation}
\Gamma(K_{\d}) = \{ (\Phi, -\frac{1}{2} g_{\d}(\Phi)) \; | \; \Phi \in C^{\infty}(D,\d) \}. 
\end{equation}
Looking at (\ref{eq_EDexpl}) and noting the opposite sign in the second component suggests the possible definition of the subbundle $\E'_{D}$. Namely
\begin{equation}
\Gamma(\E'_{D}) := \{ ( \Phi, - \frac{1}{2} g_{\d}(\Phi)) \; | \; \Phi \in C^{\infty}(D, \E^{\perp}) \}.
\end{equation}
Clearly $\rk(\E'_{D}) = \frac{1}{2} \dim{\g}$, as required. Moreover, as $\E^{\perp}$ is a negative definite subspace with respect to $\<\cdot,\cdot\>_{\d}$, it follows that $\E'_{D}$ is positive definite. Whence $V^{\d}_{+} \subseteq E_{\d}$ is a generalized metric. 

One interesting question arises. According to Example \ref{ex_genmetricgTM}, the generalized metric $V_{+}^{\d} \subseteq E_{\d} \cong \gTD$ uniquely determines a pair of fields $(g^{\d}, B^{\d})$ on $D$. What are they in this case? Both $\E_{D}$ and $\E'_{D}$ are $D$-invariant subbundles and so is $V^{\d}_{+}$. But this implies that there is a map $\fA \in \Hom(\d,\d^{\ast})$ of vector spaces, such that  $\Gamma(V_{+}^{\d}) = \{ (\Phi, \fA(\Phi)) \; | \; \Phi \in C^{\infty}(D,\d) \}$. Since $\E$ and $\E^{\perp}$ are fiber-wise constant, we can determine the result on the level of vector spaces. We have to find an $\fA$ so that, for any given $x \in \d$, there is a unique solution $y \in \E$ and $z \in \E^{\perp}$ to the equation
\begin{equation}
(x,\fA(x)) = (y + z, \frac{1}{2} g_{\d}(y - z)),
\end{equation}
As $\d = \g \oplus \g^{\ast}$, we write $x = (a,\xi)$ and $y = (b, E_{0}(b))$ and $z = (c, -E_{0}^{T}(c))$, for $a,b,c \in \g$ and $\xi \in \g^{\ast}$, where we have used the map $E_{0}$ defined by (\ref{eq_EasE0}).  
The condition $x = y + z$ leads to a simple system of linear equations for $(b,c)$ with the solution:
\begin{equation}
b = \frac{1}{2}\{ a + g_{0}^{-1}(\xi - B_{0}(a))\}, \; \; c = \frac{1}{2} \{a - g_{0}^{-1}(\xi -B_{0}(a)) \} .
\end{equation}
Plugging these expressions via $(y,z)$ into the equation $\fA(x) = \frac{1}{2} g_{\d}(y-z)$ gives the unique result for the map $\fA$. Writing it as a formal $2 \times 2$ matrix from $\d = \g \oplus \g^{\ast}$ to $\d^{\ast} = \g^{\ast} \oplus \g$ leads to 
\begin{equation}
\fA = \frac{1}{2} \bma{g_{0} - B_{0} g_{0}^{-1} B_{0}}{B_{0} g_{0}^{-1}}{-g_{0}^{-1} B_{0}}{g_{0}^{-1}} = \frac{1}{2} \gm_{0},
\end{equation}
where $\gm_{0} \in S^{2}(\d^{\ast})$  is the (fiber-wise) positive-definite metric corresponding to the original generalized metric $\E \subseteq \d$. 

We see that the map $\fA \in \Hom(\d,\d^{\ast})$ has the trivial skew-symmetric part, which immediately yields $B^{\d} = 0$. On the other hand, the metric $g^{\d}$ is the right-invariant metric on $D$ generated by the symmetric form $\frac{1}{2} \gm_{0}$ at the group unit.  

Finally, let $\tau^{\d} \in \Aut(E_{\d})$ be the involution corresponding to the generalized metric $V^{\d}_{+} \subseteq E_{\d}$. It is a straightforward calculation to see that $\tau^{\d}$ is block-diagonal with respect to the splitting $E_{\d} = K_{\d} \oplus K^{\perp}_{\d}$. Moreover, as $V_{+}^{\d}$ is $D$-invariant, it immediately follows that $[\Re(x), \tau^{\d}(\Phi,\Psi)]_{E_{\d}} = \tau^{\d}([\Re(x),(\Phi,\Psi)]_{E_{\d}})$ for all $x \in \d$ and $(\Phi,\Psi) \in \Gamma(E_{\d})$. These two properties imply that one can naturally restrict $\tau^{\d}$ to the $C^{\infty}(\{ \ast \})$-module $\Gamma(E'_{\d} ) = \Gamma_{D}(K^{\perp}_{\d}) \cong \d$ to obtain the orthogonal involution $\tau \in \Aut(\d)$ corresponding to the original generalized metric $\E \subseteq \d$. Similarly, by restricting $\tau^{\d}$ to the $C^{\infty}(M)$-module $\Gamma(E'_{\g}) \cong \Gamma_{G}(K^{\perp}_{\d})$, one obtains the orthogonal involution $\tau'_{\g} \in \Aut(E'_{\g})$ corresponding to the generalized metric $\E'_{\g}$. 
\section{Connections and the dilaton} \label{sec_dilatons}
We have seen that the lift and the consequent reduction of the generalized metric using the Lagrangian subgroups $G$ and $G^{\ast}$ gives us the correct formula for mutually Poisson--Lie T-dual backgrounds $\dE$ and $\~\dE$. The main idea of this section is to obtain a pair of Levi-Civita connections $\cD \in \LC(\gTG, V_{+}^{\g^{\ast}})$ and $\~\cD \in \LC(\gTGa, V_{+}^{\g})$, both of them satisfying the assumptions of Theorem \ref{thm_eomconn}. This will make them suitable to describe the equations of motion of the respective string effective actions. 

\subsection{Lift of the connection}
First, note that it follows from (\ref{eq_KdKdperpexpl}) that there is an isomorphism $\fPsi_{\w}$ of the vector space $\Gamma_{D}(E_{\d})$ with the double $\w = \d \oplus \d$, given by 
\begin{equation} \label{eq_fPsiw} \fPsi_{\w}(x,y) = (\Phi_{x+y}, \Psi_{ \frac{1}{2} g_{\d}(x-y)}), \end{equation}
for all $(x,y) \in \w$. In particular, $\fPsi_{\w}(\d \oplus \{ 0 \} ) = \Gamma_{D}(K^{\perp}_{\d})$ and $\fPsi_{\w}( \{0\} \oplus \d) = \Gamma_{D}(K_{\d})$. Recall that $\Gamma_{D}(E_{\d}) = \Gamma_{D}(K^{\perp}_{\d}) \oplus \Gamma_{D}(K_{\d})$ forms an involutive subspace of $\Gamma(E_{\d})$ with respect to the bracket (\ref{eq_HDorfmantrivial}). We can thus induce a Courant algebroid structure $(\w, 0, \<\cdot,\cdot\>_{\w}, [\cdot,\cdot]_{\w})$ using the isomorphism (\ref{eq_fPsiw}). Explicitly, one finds 
\begin{align}
\< (x,y), (x',y') \>_{\w} = & \ \<x,x'\>_{\d} - \<y,y'\>_{\d}, \\
[ (x,y), (x',y') ]_{\w} = & \ (-[x,x']_{\d} + [y,y']_{\d}, -[x,y']_{\d} + [x',y]_{\d} - 2 [y,y']_{\d}),
\end{align}
for all $(x,y), (x',y') \in \w$. Observe that the subspace $\d \oplus \{0\}$ corresponding to $\Gamma_{D}(K^{\perp}_{\d})$ is indeed involutive with respect to $[\cdot,\cdot]_{\w}$ whereas $\{0\} \oplus \d$ corresponding to $\Gamma_{D}(K_{\d})$ is not. The subspace $\d \oplus \{0\}$ by definition corresponds to 
the reduced Courant algebroid $(\d,0,\<\cdot,\cdot\>_{\d}, -[\cdot,\cdot]_{\d})$ as constructed in Subsection \ref{subsec_Dreduction}. 

Let $\cD^{0} \in \LC(\d, \E)$ be any Levi-Civita connection on $\d$. We aim to construct a connection $\cD^{\d} \in \LC(E_{\d},V_{+}^{\d})$ which will, in the most natural way, reduce to $\cD^{0}$. Recall that $E_{\d} = D \times (\d \oplus \d^{\ast}) \cong \gTD$ and $V_{+}^{\d} \subseteq E_{\d}$ is the lifted generalized metric constructed in Subsection \ref{subsec_genmetupstairs}. In particular, $\cD^{\d}$ has to be a $D$-invariant connection. This can be rewritten in terms of the Courant algebroid action 
map $\Re: \d \rightarrow \Gamma(E_{\d})$ for all $x \in \d$ as 
\begin{equation}
[ \Re(x), \cD^{\d}_{(\Phi,\Psi)} (\Phi',\Psi') ]_{E_{\d}} = \cD^{\d}_{ [\Re(x), (\Phi,\Psi)]_{E_{\d}}} (\Phi',\Psi') + \cD^{\d}_{(\Phi,\Psi)} [\Re(x), (\Phi',\Psi')]_{E_{\d}},
\end{equation}
for all $(\Phi,\Psi),(\Phi',\Psi') \in \Gamma(E_{\d})$. 

Equivalently, we may consider the connection  $\cD^{\w} \in \LC(\w, \E_{\w})$, where $\E_{\w} = \fPsi^{-1}_{\w}( \Gamma_{D}(V^{\d}_{+}))$ is the generalized metric induced on $\w$ from the lifted generalized metric $V^{\d}_{+}$. Moreover, $\cD^{\w}$ restricted onto $\d \oplus \{0\}$ has to give the starting connection $\cD^{0}$. With respect to the splitting $\w = \d \oplus \d$, the most general  Courant algebroid connection restricting to $\cD^{0}$ has the block form 
\begin{equation}
\cD^{\w}_{(x,0)} = \bma{\cDN_{x}}{0}{0}{\cD^{1}_{x}}, \; \; \cD^{\w}_{(0,y)} = \bma{ \cD^{2}_{y}}{A'_{y}}{g_{\d}^{-1} A'^{T}_{y} g_{\d}}{ \cD^{3}_{y}},
\end{equation}
where $\cD^{0},\cD^{1},\cD^{2}$ and $\cD^{4}$ are all connections on $\d$ compatible with the pairing $\<\cdot,\cdot\>_{\d}$, and $\<A'_{x}(y),z\>_{\d} \equiv A'(x,y,z)$ for an arbitrary covariant tensor $A' \in \T_{3}^{0}(\d)$. By imposing the condition $T_{\cD^{\w}} = 0$ and by using $T_{\cD^{0}} = 0$, one immediately obtains $\cD^{2} = 0$ and 
\begin{align}
0 & = \< \cD^{3}_{x}y - \cD^{3}_{y}x + 2[x,y]_{\d}, z \>_{\d} + \< \cD^{3}_{z}x, y \>_{\d}, \\
0 & = \< A'_{y}(y') - A'_{y'}(y) - [y,y']_{\d}, x \>_{\d} - \< \cD_{x}^{1}y, y'\>_{\d},
\end{align}
for all $x,y,z \in \d$. The first condition shows that $\cD^{3}$ has to be torsion-free with respect to the bracket $-2 [\cdot,\cdot]_{\d}$, whereas the second equation shows that $\cD^{1}$ is for torsion-free $\cD^{\w}$ fully determined by the tensor $A'$ and the bracket $[\cdot,\cdot]_{\d}$. Note that such $\cD^{1}$ is automatically a Courant algebroid connection. 

Finally, we have to analyze the compatibility of $\cD^{\w}$ with the generalized metric $\E_{\w} \subseteq \w$. By recalling how we have constructed $V_{+}^{\d}$, it is easy to see that this subspace takes the form $\E_{\w} = (\E \oplus \{0\}) \oplus ( \{0\} \oplus \E^{\perp}) \subseteq \w$. Consequently, we obtain
\begin{equation}
\cD_{x}^{1}(\E^{\perp}) \subseteq \E^{\perp}, \; \; \cD^{3}_{y}(\E^{\perp}) \subseteq \E^{\perp}, \; \; A'_{y}(\E^{\perp}) \subseteq \E,
\end{equation}
for all $x,y \in \d$. First, the conditions imposed on $\cD^{3}$ imply that it can be written in the form 
\begin{equation}
\cD_{x}^{3}y = 2 \cdot \cD^{0}_{x}y + g_{\d}^{-1} \fra(x,y,\cdot),
\end{equation}
for all $x,y \in \d$, where $\fra \in \d^{\ast} \otimes \Lambda^{2} \d^{\ast}$ satisfies the constraint $\fra(x,y,z) + cyclic(x,y,z) = 0$ for all $x,y,z \in \d$. Recall that we have the decomposition $\d = \E_{+} \oplus \E_{-}$, where $\E_{+} \equiv \E$ and $\E_{-} = \E^{\perp}$. Let $x_{\pm} \in \E_{\pm}$ denote the two projections of the element $x \in \d$. The final condition on $\fra$ is that its only non-trivial components are $\fra(x_{+},y_{+},z_{+})$ and $\fra(x_{-},y_{-},z_{-})$. 

The condition $A'_{y}(\E^{\perp}) \subseteq \E$ implies that the components of the tensor $A'$ vanish for the following combinations: 
\begin{equation}
A'( x, y_{+},z_{+}) = A'( x, y_{-}, z_{-}) = 0,
\end{equation}
for all $x,y,z \in \d$. Finally, the condition $\cD^{1}_{x}( \E^{\perp}) \subseteq \E^{\perp}$ imposes the restriction
\begin{equation}
0 = \< \cD^{1}_{x}y_{+},z_{-} \>_{\d} = A'(y_{+},z_{-},x) - A'(z_{-},y_{+},x) - \< [y_{+},z_{-}]_{\d},x \>_{\d}. 
\end{equation}
This relation uniquely determines the $(+-+)$ and $(-+-)$ components of the tensor $A'$, namely
\begin{equation}
A'(x_{+},y_{-},z_{+}) = \< [x_{+},y_{-}]_{\d}, z_{+} \>_{\d}, \; \; A'(x_{-},y_{+},z_{-}) = \< [x_{-},y_{+}]_{\d}, z_{-} \>_{\d}, 
\end{equation}
for all $x,y,z \in \d$. The only freedom for $A'$ is thus in the $(++-)$ and $(--+)$ components. The corresponding Courant algebroid connection $\cD^{1}$ can be now, in terms of of $A'$, rewritten as
\begin{equation}
\begin{split}
\< \cD^{1}_{x}y, y' \>_{\d} = - \<[y_{+},y'_{+}]_{\d} + [y_{-},y'_{-}]_{\d}, x \>_{\d} & + A'(y_{-},y'_{-},x_{+}) - A'(y'_{-},y_{-},x_{+}) \\
& + A'(y_{+},y'_{+}, x_{-}) - A'(y'_{+}, y_{+}, x_{-}). 
\end{split}
\end{equation}
There is no further restriction on these components. We conclude that there exists a non-empty class of Levi-Civita connections $\cD^{\w} \in \LC(\w,\E_{\w})$ on the Courant algebroid $(\w, 0, \<\cdot,\cdot\>_{\w}, [\cdot,\cdot]_{\w})$ with respect to the generalized metric $\E_{\w} \subseteq \w$, such that each connection in this class reduces in a natural way to the given Levi-Civita connection $\cD^{0}$ on $\d$. Note that for the purposes of this paper, we do not need to fix the ambiguity in the connection $\cD^{\w}$. 

We can now use the vector space isomorphism $\fPsi_{\w}: \w \rightarrow \Gamma_{D}(E_{\d})$ to \textit{define} the connection $\cD^{\d} \in \LC(E_{\d}, V_{+}^{\d})$ first on constant sections in $\Gamma_{D}(E_{\d})$ and then extending it to the whole $C^{\infty}(D)$-module $\Gamma(E_{\d})$ in order to satisfy the rules in (\ref{eq_CAconnaxioms}). This is possible and to each $\cD^{\w} \in \LC(\w,\E_{\w})$ we find the the unique corresponding $D$-invariant connection $\cD^{\d} \in \LC(E_{\d},V_{+}^{\d})$ that reduces to $\cD^{0}$ when restricted onto $\Gamma_{D}(K^{\perp}_{\d})$. 
\subsection{Reduction of $\cD^{\d}$ to the Courant algebroid $E'_{\g}$} \label{subsec_conred}
Let $\cD^{\d}$ be an arbitrary lift of the given Levi-Civita connection $\cDN \in \LC(\d,\E)$. We will now argue that $\cD^{\d}$ can be naturally reduced to the Levi-Civita connection $\cD^{\g} \in \LC(E'_{\g}, \E'_{\g})$, where $\E'_{\g} \subseteq E'_{\g}$ is the generalized metric (\ref{eq_Eprimeggmetric}). Once more, we employ the important $\cif$-module isomorphism (\ref{eq_EprimegisKperpd}). In particular, we claim that $\cD^{\d}$ preserves the $\cif$-module of $G$-invariant sections $\Gamma_{G}(E_{\d})$. 

As it is obtained by Leibniz rule extension and $\Gamma_{G}(E_{\d}) \cong C^{\infty}_{G}(D,\d \oplus \d^{\ast})$, this boils down to proving that the vector field $\rho_{\d}( \Phi_{x}, \Psi_{\xi}) \in \X(D)$ preserves the $\cif$-module $C^{\infty}_{G}(D)$ for all $(x,\xi) \in \d \oplus \d^{\ast}$. Recall that  $\rho_{\d} \in \Hom(E_{\d},TD)$ is the anchor map defined in Subsection \ref{subsec_Dreduction}. Let $f \in C^{\infty}_{G}(D)$. For every $g \in G$ and $d \in D$, we obtain 
\begin{equation}
\begin{split}
\{ \rho_{\d}( \Phi_{x}, \Psi_{\xi}).f \}(d \cdot g) = & \ \{ x^{R}.f \}(d \cdot g) = \ddt f( \exp(tx) \cdot d \cdot g) = \ddt f( \exp(tx) \cdot d) \\
= & \ \{ x^{R}.f  \}(d) = \{ \rho_{\d}(\Phi_{x}, \Psi_{\xi}).f \}(d). 
\end{split}
\end{equation}
This proves that the resulting function $\rho_{\d}( \Phi_{x},\Psi_{\xi}).f$ is also $G$-invariant. We can thus restrict $\cD^{\d}$ to the $\cif$-module $\Gamma_{G}(E_{\d})$. Moreover, $\cD^{\d}$ was constructed to restrict onto the subbundle $K^{\perp}_{\d}$. As we have $\Gamma(E'_{\g}) \cong \Gamma_{G}(K^{\perp}_{\d})$, we obtain a natural restriction $\cD^{\g}: \Gamma(E'_{\g}) \times \Gamma(E'_{\g}) \rightarrow \Gamma(E'_{\g})$. 

Recall that we have the canonical identification $E'_{\g} \cong E \equiv M \times \d$ and can view $\E'_{\g}$ as the generalized metric $\E'_{\g} = M \times \E \subseteq E$, cf. Proposition \ref{tvrz_reducedEiso} and the equation (\ref{eq_Edecompgenmetric}). It is thus convenient to work with the connection $\cD^{\g}$ induced by this identification\footnote{As for the generalized metric, we denote it by the same symbol $\cD^{\g}$.} on $E$. Considering the isomorphism between $\Gamma(E)$ and $\Gamma_{G}(K^{\perp}_{\d})$ together with the isomorphism $\fPsi_{\w}$ of $\Gamma_{D}(E_{\d})$ and $\w$, it is straightforward to see that the constant section $\psi_{x} \in \Gamma(E)$ corresponds to the element $(x,0) \in \w$. It follows from the construction of $\cD^{\g}$ and $\cD^{\d}$ that there holds 
\begin{equation} \label{eq_congfinalformula}
\cD^{\g}_{\psi} \psi' = \cD^{0}_{\psi}\psi' + \rho(\psi).\psi', 
\end{equation}
where $\cD^{0}$ is a fiber-wise extension of the connection $\cD^{0}$ on $\d$ to the module $\Gamma(E) = C^{\infty}(M,\d)$, and $\rho(\psi).\psi'$ denotes the action of the vector field $\rho(\psi) \in $ on the $\d$-valued function $\psi'$. It is now very easy to prove the required properties of the connection $\cD^{\g}$. 

\begin{tvrz}
The $\R$-bilinear map $\cD^{\g}: \Gamma(E) \times \Gamma(E) \rightarrow \Gamma(E)$ given by (\ref{eq_congfinalformula}) defines a Courant algebroid connection on $(E,\rho,\<\cdot,\cdot\>_{\d},[\cdot,\cdot]_{E})$\footnote{See Subsection \ref{subsec_LieCourant} for the definition.}. Moreover, it is metric compatible with the generalized metric $\E'_{\g} \subseteq E$ and it is torsion-free, hence $\cD^{\g} \in \LC(E,\E'_{\g})$.
\end{tvrz}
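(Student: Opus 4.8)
The plan is to verify the three properties directly, exploiting the additive decomposition $\cD^{\g}_{\psi}\psi' = \cD^{0}_{\psi}\psi' + \rho(\psi).\psi'$ from (\ref{eq_congfinalformula}), where $\cD^{0}$ acts fiber-wise on $\Gamma(E) = C^{\infty}(M,\d)$ and $\rho(\psi)$ acts as a vector field on the $\d$-valued function $\psi'$. First I would check the Courant algebroid connection axioms (\ref{eq_CAconnaxioms}). The $\cif$-linearity in the first slot and the Leibniz rule in the second are immediate: $\cD^{0}$ is $\cif$-bilinear as a fiber-wise operation, $\rho$ is a vector bundle map so that $\rho(f\psi) = f\rho(\psi)$, and the vector field $\rho(\psi)$ obeys $\rho(\psi).(f\psi') = (\rho(\psi).f)\psi' + f(\rho(\psi).\psi')$. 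The metric compatibility is the only axiom needing an observation: applying the product rule for $\rho(\psi)$ to the scalar function $\<\psi',\psi''\>_{\d}$ yields precisely $\<\rho(\psi).\psi',\psi''\>_{\d} + \<\psi',\rho(\psi).\psi''\>_{\d}$, while the $\cD^{0}$ contributions cancel, because on the Courant algebroid $(\d,0,\<\cdot,\cdot\>_{\d},-[\cdot,\cdot]_{\d})$ the anchor vanishes and hence compatibility of $\cD^{0}$ reads $\<\cD^{0}_{\psi}\psi',\psi''\>_{\d} + \<\psi',\cD^{0}_{\psi}\psi''\>_{\d} = 0$.

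Next I would establish metric compatibility with $\E'_{\g} = M \times \E$ by showing that $\cD^{\g}_{\psi}$ preserves $C^{\infty}(M,\E)$. Taking $\psi' \in C^{\infty}(M,\E)$, the term $\cD^{0}_{\psi}\psi'$ remains $\E$-valued since $\cD^{0} \in \LC(\d,\E)$ preserves $\E$ fiber-wise, and the term $\rho(\psi).\psi'$ remains $\E$-valued because $\E \subseteq \d$ is a fixed subspace and differentiating an $\E$-valued function along any vector field again lands in $\E$.

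For the torsion I would invoke the tensoriality of $T_{\cD}$ and evaluate on constant sections $\psi_{x},\psi_{y},\psi_{z}$ corresponding to $x,y,z \in \d$. On constants one has $\rho(\psi_{x}).\psi_{y} = 0$, so $\cD^{\g}_{\psi_{x}}\psi_{y} = \cD^{0}_{x}y$, while (\ref{eq_Ebracket}) gives $[\psi_{x},\psi_{y}]_{E} = \psi_{-[x,y]_{\d}}$. Substituting into the definition of $T_{\cD}$ makes every entry a constant section, and the resulting expression coincides exactly with the torsion of $\cD^{0}$ on $(\d,0,\<\cdot,\cdot\>_{\d},-[\cdot,\cdot]_{\d})$, which is zero because $\cD^{0}$ is torsion-free. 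Hence $T_{\cD^{\g}} = 0$, and combined with the two previous steps we conclude $\cD^{\g} \in \LC(E,\E'_{\g})$.

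All computations are elementary; the only step demanding care is the metric-compatibility axiom, where the cancellation relies on the skew-symmetry of $\cD^{0}$ with respect to $\<\cdot,\cdot\>_{\d}$ (a consequence of the vanishing anchor on $\d$) meshing precisely with the Leibniz action of $\rho(\psi)$ — this is exactly why the correction term $\rho(\psi).\psi'$ appears in (\ref{eq_congfinalformula}). The reduction to constant sections in the torsion step likewise depends on tensoriality, itself guaranteed by the correction term built into the definition of $T_{\cD}$.
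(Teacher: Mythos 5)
Your proposal is correct and follows the same route as the paper's (very terse) proof: the connection axioms follow from the Leibniz rule of $\rho(\psi)$ together with the vanishing anchor of $(\d,0,\<\cdot,\cdot\>_{\d},-[\cdot,\cdot]_{\d})$, and the remaining conditions are tensorial, so they are checked on constant sections, where they reduce exactly to $\cD^{0} \in \LC(\d,\E)$. You have merely written out the details the paper leaves implicit, including the correct identification $[\psi_{x},\psi_{y}]_{E} = \psi_{-[x,y]_{\d}}$ that makes the torsion computation match.
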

\begin{proof}
Clearly, the map $\cD^{\g}$ defined by (\ref{eq_congfinalformula}) satisfies the rules for function multiplication. Its compatibility with both $g_{E}$ and $\E'_{\g}$ and vanishing of the torsion $3$-form  $T_{\cD^{\g}}$ are tensorial conditions and can be thus checked on constant sections. But this obviously boils down to the assumption $\cD^{0} \in \LC(\d,\E)$. 
\end{proof}
To conclude this subsection, note that we can consider the case where $\d = \g \oplus \g^{\ast}$ is a double of a Lie bialgebra $(\g,\delta)$. We can thus use the tools in Subsection \ref{subsec_LiePoisson} and in particular the isomorphism of $E$ with the Courant algebroid on $E' = G^{\ast} \times \d$. We can thus view $\cD^{\g}$ as a Levi-Civita connection on $E'$ with respect to $\E'_{\g} = G^{\ast} \times \E$. It is written using the similar formula
\begin{equation} \label{eq_cDgfinal}
\cD^{\g}_{\psi} \psi' = \cD^{0}_{\psi}\psi' + \rho'(\psi).\psi',
\end{equation}
for all $\psi,\psi' \in \Gamma(E')$, where $\rho' \in \Hom(E',TG^{\ast})$ is the anchor map given by (\ref{eq_rho'anchor}). 

\subsection{Splitting and the characteristic vector field $X_{\~\cD}$}
We are now getting closer to the main theorems of this paper. We will consider only the case described in Subsection \ref{subsec_LiePoisson}. In particular, we have the special splitting $s' \in \Hom(TG^{\ast}, E')$ which provides us with the explicit isomorphism $\fPsi \in \Hom( \gTG^{\ast}, E')$ given by (\ref{eq_fPsiisom}). In particular, we induce the generalized metric $V_{+}^{\g} := \fPsi^{-1}( \E'_{\g}) \subseteq \gTGa$ corresponding to the pair of fields $(\~g,\~B)$ as described in Subsection \ref{subsec_PLT}. This leads to the idea to define a Courant algebroid connection $\~\cD$ on $\gTGa$ as 
\begin{equation} \label{eq_cDastdef}
\~\cD_{(X,\xi)}(Y,\eta) := \fPsi^{-1}( \cD^{\g}_{\fPsi(X,\xi)} \fPsi(Y,\eta)), 
\end{equation}
for all $(X,\xi), (Y,\eta) \in \Gamma(\gTGa)$. As $\fPsi$ forms a Courant algebroid isomorphism of $E'$ and the (non-twisted) Dorfman bracket on $\gTGa$, it follows that $\~\cD \in \LC(\gTGa, V_{+}^{\g})$. 

Our intention is to use $\~\cD$ in order to describe the equations of motion of the low-energy string effective action. Therefore, we have to meet the assumptions of the Theorem \ref{thm_eomconn}. In particular, the characteristic vector field $X_{\~\cD}$ defined by (\ref{eq_charverfield}) must vanish. We will now prove that this imposes some non-trivial limitations on the choice of the connection $\cD^{0}$ and Lie bialgebra $(\g,\delta)$.  

First, note that we can calculate it directly in terms of the connection $\cD^{\g}$ as the characteristic vector field is invariant under Courant algebroid isomorphisms, see \cite{Jurco:2016emw} for clarification of this statement. We can thus calculate $X_{\~\cD}$ using the formula 
\begin{equation}
X_{\~\cD}.f = \Div_{\cD^{\g}}( \D'{f}) = \< \cD^{\g}_{\psi_{\alpha}} \D'{f}, \psi^{\alpha} \>, 
\end{equation}
where $\D' = \rho'^{\ast} \circ d$ and $\{ \psi_{\alpha} \}_{\alpha =1}^{\dim{\d}}$ are the constant generators of $\Gamma(E')$ corresponding to a fixed (but otherwise arbitrary) basis $\{ t_{\alpha} \}_{\alpha=1}^{\dim{\d}}$ of Lie algebra $\d$. Define an element $(\ft,\ftau) \in \d$ using the divergence operator for $\cD^{0}$ as 
\begin{equation} \label{eq_DivcD0}
\Div_{\cD^{0}}(x,\xi) = \< (\ft,\ftau), (x,\xi) \>_{\d},
\end{equation}
for all $(x,\xi) \in \d = \g \oplus \g^{\ast}$. One can now proceed with the calculation of $X_{\~\cD}$ to find
\begin{equation}
\begin{split}
\< \cD^{\g}_{\psi_{\alpha}} \D'{f}, \psi^{\alpha} \> = & \ \< \cD^{\g}_{\psi_{\alpha}} \D'{f}, \psi^{\alpha}_{\d} \>_{\d} = \rho'(\psi_{\alpha}). \< \D'{f}, \psi^{\alpha}_{\d} \>_{\d} - \< \D'{f}, \cD^{0}_{\psi_{\alpha}} \psi^{\alpha}_{\d} \>_{\d} \\
= & \ \rho'(\psi_{\alpha}).( \rho'(\psi^{\alpha}_{\d}).f) + \rho'( \psi_{(\ft,\ftau)}).f,
\end{split}
\end{equation}
where $\psi_{(\ft,\ftau)} \in \Gamma(E')$ denotes the constant section corresponding to $(\ft,\ftau) \in \d$. Let us decompose $X_{\~\cD}$ as a sum of two vector fields
\begin{equation}
X_{\~\cD} = Y + \rho'(\psi_{(\ft,\ftau)}),
\end{equation}
where $Y \in \X(G^{\ast})$ will be now calculated using a little trick. Indeed, define a Lie algebra representation $\hat{\rho}: \d \rightarrow \End( C^{\infty}(G^{\ast}))$ by setting $\hat{\rho}(x,\xi) = - \rho'(\psi_{(x,\xi)})$ for all $(x,\xi) \in \d$. Extending it to the homomorphism from the universal enveloping algebra $\mathfrak{U}(\d)$, we find that $Y = \hat{\rho}( \Omega)$, where $\Omega \in \mathfrak{U}(\d)$ is the quadratic Casimir element corresponding to the bilinear form $\<\cdot,\cdot\>_{\d}$. Note that it is then interesting on its own that $Y$ is a vector field. It follows that $Y$ must commute with $\hat{\rho}(x,\xi)$ for all $(x,\xi) \in \d$. In particular, it commutes with all right-invariant vector fields on $G^{\ast}$ as $\hat{\rho}(0,\xi) = -\xi_{R'}$. This implies that $Y$ is left-invariant. It thus suffices to calculate its value at the group unit $e \in G^{\ast}$. Using a split basis $\{ t_{\alpha} \}_{\alpha=1}^{\dim{\d}} = \{ t_{k} \}_{k=1}^{\dim{\g}} \cup \{ t^{k} \}_{k=1}^{\dim{\g^{\ast}}}$, we arrive to the following expression:
\begin{equation}
\begin{split}
Y_{e} = & \ ( t^{k}_{R'}.\Pi^{\ast}( t_{m}^{R'}, t_{k}^{R'}) )_{e} \cdot t^{m} = ( \Li{t^{k}_{R'}} \Pi^{\ast} )_{e}(t_{m},t_{k}) \cdot t^{m} = \< t^{k}, [t_{m},t_{k}]_{\g} \> \cdot t^{m} \\
= & \ \< \ad^{\ast}_{t_{k}} t^{k}, t_{m} \> \cdot t^{m} = \falpha,
\end{split}
\end{equation}
where $\falpha \in \g^{\ast}$ is a $1$-form defined as $\falpha(x) = \Tr( \ad_{x})$, for all $x \in \g$. We conclude that 
\begin{equation} \label{eq_charverfinal}
X_{\~\cD} = \falpha_{L'} + \ftau_{R'} + \Pi^{\ast}(\ft^{R'}). 
\end{equation}
This vector field has to vanish everywhere. In particular, at the unit $e \in G^{\ast}$. As $\Pi^{\ast}_{e} = 0$, we find the necessary condition for $\~\cD$ to satisfy the assumptions of Theorem \ref{thm_eomconn}, namely
\begin{equation}
\falpha + \ftau = 0.
\end{equation}
By plugging this back into the condition $X_{\~\cD} = 0$, we obtain the condition 
\begin{equation} \label{eq_alphaalphaisPiasttR}
\falpha_{L'} - \falpha_{R'} = - \Pi^{\ast}(\ft^{R'}). 
\end{equation}
Note that on the left-hand side, there is a multiplicative vector field on $G^{\ast}$. To analyze this further, let us recall the following notion. 
\begin{definice}
Let $\varphi: \g \rightarrow \X(G^{\ast})$ be an $\R$-linear map. We say that $\varphi$ defines a \textbf{left-twisted multiplicative vector field}  on $G^{\ast}$ if for any $x \in \g$, one has 
\begin{equation}
\varphi(x)_{gh} = L_{g \ast}( \varphi( \Ad_{g^{-1}}^{\ast}(x))_{h}) + R_{h \ast}( \varphi(x)_{g}),
\end{equation}
for all $g,h \in G^{\ast}$. Note that $\varphi(x)_{e} = 0$ for all $x \in \g$. Moreover, the vector field $\varphi(x) \in \X(G^{\ast})$ is multiplicative if and only if $x \in \g$ is $\Ad^{\ast}$-invariant. 
\end{definice}
It follows from the multiplicativity of $\Pi^{\ast}$ that the map $\varphi(x) = - \Pi^{\ast}(x^{R'})$ is a left-twisted multiplicative vector field on $G^{\ast}$. As the left-hand side of (\ref{eq_alphaalphaisPiasttR}) is multiplicative, so has to  be the right-hand side. \textbf{This forces $\ft$ to be the $\Ad^{\ast}$-invariant element of $\g$}. Finally, two multiplicative vector fields are equal iff their intrinsic derivatives coincide, that is 
\begin{equation}
(\Li{\zeta_{R'}} ( \falpha_{L'} - \falpha_{R'}))_{e} = - \{ \Li{\zeta_{R'}} (\Pi^{\ast}(\ft^{R'})) \}_{e}
\end{equation}
The left-hand side is easily calculated to be $[\zeta,\falpha]_{\g^{\ast}}$. On the other hand, we get 
\begin{equation}
- \{ \Li{\zeta_{R'}}( \Pi^{\ast}(\ft^{R'})) \}_{e} = - (\Li{\zeta_{R'}} \Pi^{\ast})_{e}(t_{k}, \ft) \cdot t^{k} = - \< \zeta, [t_{k},\ft]_{\g} \> \cdot t^{k} = - \ad^{\ast}_{\ft} \zeta.
\end{equation}
This gives us the final condition on $\ft$ and $\falpha$ that has to be valid for $(x,\zeta) \in \d$:
\begin{equation} \label{eq_finalcondonXnabla}
\< [\zeta, \falpha]_{\g^{\ast}}, x \> + \< \zeta, [x,\ft]_{\g} \> = 0. 
\end{equation}
Let us summarize the above results in the form of a theorem. 
\begin{theorem} \label{thm_charverfield}
The characteristic vector field $X_{\~\cD} \in \X(G^{\ast})$ corresponding to the connection $\~\cD$ (\ref{eq_cDastdef}) can be written as (\ref{eq_charverfinal}). 

In particular, the condition $X_{\~\cD} = 0$ forces $\ftau + \falpha = 0$, the vector $\ft \in \g$ must be $\Ad^{\ast}$-invariant (with respect to the $G^{\ast}$-action) and the condition (\ref{eq_finalcondonXnabla}) must hold.
\end{theorem}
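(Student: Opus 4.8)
The plan is to reduce the computation of $X_{\~\cD}$ from $\gTGa$ to the isomorphic Courant algebroid $E' = G^{\ast}\times\d$, exploiting the fact (recalled in Section \ref{sec_connections}) that the characteristic vector field is invariant under Courant algebroid isomorphisms. Since $\fPsi$ realizes such an isomorphism and $\~\cD$ is defined in (\ref{eq_cDastdef}) as the pullback of $\cD^{\g}$, one has $X_{\~\cD}=X_{\cD^{\g}}$, so I would compute $X_{\cD^{\g}}.f = \Div_{\cD^{\g}}(\D' f)$ with $\D'=\rho'^{\ast}\circ d$ on constant generators $\{\psi_{\alpha}\}$. Substituting the explicit form (\ref{eq_cDgfinal}), $\cD^{\g}_{\psi}\psi' = \cD^{0}_{\psi}\psi' + \rho'(\psi).\psi'$, and using metric compatibility together with the adjunction $\<\rho'^{\ast}\beta,\psi\>_{\d}=\<\beta,\rho'(\psi)\>$, the divergence splits into a purely algebraic first-order piece, which I would package into the element $(\ft,\ftau)\in\d$ defined by (\ref{eq_DivcD0}) (the sign flip coming from metric compatibility with the vanishing anchor on $\d$), and a genuinely second-order differential operator $Y$ built from the anchor alone. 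This yields the provisional decomposition $X_{\~\cD}=Y+\rho'(\psi_{(\ft,\ftau)})$.

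The key step is the identification of $Y$. I would introduce the representation $\hat\rho:\d\to\End(C^{\infty}(G^{\ast}))$, $\hat\rho(x,\xi)=-\rho'(\psi_{(x,\xi)})$, extend it to $\mathfrak{U}(\d)$, and recognize $Y=\hat\rho(\Omega)$ for the quadratic Casimir $\Omega$ attached to $\<\cdot,\cdot\>_{\d}$. Because $\Omega$ is central, $Y$ commutes with the whole image of $\hat\rho$; in particular, since $\hat\rho(0,\xi)=-\xi_{R'}$ by (\ref{eq_rho'anchor}), it commutes with all right-invariant vector fields on $G^{\ast}$ and is therefore left-invariant. It then suffices to evaluate $Y$ at the unit $e\in G^{\ast}$; using (\ref{eq_rho'anchor}) and the relation (\ref{eq_deltaPirel}) between $\delta$ and $\Pi^{\ast}$, the value collapses to the trace one-form $\falpha(x)=\Tr(\ad_{x})$, so $Y=\falpha_{L'}$. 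Combined with $\rho'(\psi_{(\ft,\ftau)})=\ftau_{R'}+\Pi^{\ast}(\ft^{R'})$ read off from (\ref{eq_rho'anchor}), this produces the claimed formula (\ref{eq_charverfinal}).

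For the \emph{in particular} statement, I would first evaluate $X_{\~\cD}=0$ at $e$: since $\Pi^{\ast}_{e}=0$ while $\falpha_{L'}$ and $\ftau_{R'}$ restrict to $\falpha$ and $\ftau$, this forces $\falpha+\ftau=0$ and reduces the condition to (\ref{eq_alphaalphaisPiasttR}), $\falpha_{L'}-\falpha_{R'}=-\Pi^{\ast}(\ft^{R'})$. The left-hand side is a genuine multiplicative vector field, whereas by multiplicativity of $\Pi^{\ast}$ the right-hand side is only left-twisted multiplicative in the sense recalled just above the theorem; requiring the two to coincide therefore forces $\ft$ to be $\Ad^{\ast}$-invariant. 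Finally, two multiplicative vector fields agree iff their intrinsic derivatives at $e$ agree, so I would differentiate both sides along $\zeta_{R'}$ and evaluate at $e$: the left side gives $[\zeta,\falpha]_{\g^{\ast}}$ and the right side, computed again via (\ref{eq_deltaPirel}), gives $-\ad^{\ast}_{\ft}\zeta$, which is precisely the pairing identity (\ref{eq_finalcondonXnabla}).

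The main obstacle is the treatment of $Y$: recognizing the second-order operator as the image of the Casimir and deducing left-invariance from centrality is the one genuinely nonroutine idea, and the subsequent evaluation at $e$ requires careful bookkeeping with the dressing-action anchor (\ref{eq_rho'anchor}) and with (\ref{eq_deltaPirel}). The multiplicative versus left-twisted-multiplicative dichotomy in the last part is the subtle conceptual point that forces the $\Ad^{\ast}$-invariance of $\ft$, but once that framework is in place the remaining intrinsic-derivative matching is mechanical.
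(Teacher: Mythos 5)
Your proposal is correct and follows essentially the same route as the paper's own argument: the reduction to $\cD^{\g}$ via isomorphism invariance, the decomposition $X_{\~\cD}=Y+\rho'(\psi_{(\ft,\ftau)})$, the identification of $Y$ as the image of the quadratic Casimir and hence a left-invariant field equal to $\falpha_{L'}$, and the final multiplicative versus left-twisted-multiplicative analysis forcing $\falpha+\ftau=0$, the $\Ad^{\ast}$-invariance of $\ft$, and the intrinsic-derivative condition (\ref{eq_finalcondonXnabla}). No gaps.
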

\begin{example} \label{ex_unimod}
Let us assume that the Lie algebra $(\g,[\cdot,\cdot]_{\g})$ is \textbf{unimodular}, that is, $\falpha = 0$.  By definition, unimodular Lie algebras have traceless operators of adjoint representation. Equivalently, the left and right Haar measures on the integrating Lie group $G$ coincide. We then have to choose $\cD^{0}$ so that $(\ft,\ftau) = (\ft,0)$. Moreover, the equation (\ref{eq_finalcondonXnabla}) then forces $\ft$ to be the central element of $\g$ which it $\Ad^{\ast}$-invariant (coadjoint representation of $G^{\ast}$ on $\g$). As an example, the connection (\ref{eq_cD0formula}) in Example \ref{ex_quadrconn} satisfies $(\ft,\ftau) = (0,0)$, thus forming a good candidate. 
\end{example}
\subsection{Calculating the dilaton}
Now observe that there is a second assumption in Theorem \ref{thm_eomconn} assuming the existence of a scalar function $\~\phi \in C^{\infty}(G^{\ast})$ related to the connection $\~\cD$ as in (\ref{eq_dilatonrel}). This condition can be again expressed using the connection $\cD^{\g}$ and the corresponding anchor $\rho' \in \Hom(E,TG^{\ast})$. We have to find a scalar function $\~\phi \in C^{\infty}(G^{\ast})$ satisfying the equation
\begin{equation} \label{eq_dtphi}
\< d\~\phi, Z \> = \< \cD^{\g}_{ \rho'^{\ast}( \~g(e^{k}))} \rho'^{\ast}( e_{k}), \rho'^{\ast}( \~g(Z)) \>_{\d},
\end{equation}
for all vector fields $Z \in \X(G^{\ast})$, where $\{ e^{k} \}_{k=1}^{\dim{G^{\ast}}}$ is some local frame on $G^{\ast}$. Note that it is a priori \textit{not clear at all} that the right-hand side defines an exact $1$-form on $G^{\ast}$ and for a general connection $\cD^{0}$ and Lie bialgebra $(\g,\delta)$ it \textit{does not}. Let us denote the right-hand side of (\ref{eq_dtphi}) as $W'(Z)$ for $W' \in \Omega^{1}(G^{\ast})$. See Section 6 of \cite{Jurco:2016emw} for an explanation of this notation. 

Now, note that one can choose the right-invariant frame $e^{k} = t^{k}_{R'}$, where $\{ t_{k} \}_{k=1}^{\dim{\g}}$ is a fixed basis of $\g$ and $\{ t^{k} \}_{k=1}^{\dim{\g^{\ast}}}$ is the dual one of $\g^{\ast}$. Define a map $\fai \in C^{\infty}(G^{\ast},\Hom(\g,\d))$ as 
\begin{equation}
\fai(x) = (x, - \fPi^{\ast}(x)),
\end{equation}
for all $x \in \g$. We can then rewrite the above expression using $\cD^{0}$ and $\~\fg$ defined in (\ref{subsec_PLT}), getting 
\begin{equation} \label{eq_dphizetaR'}
W'(\zeta_{R'}) = \< \cDN_{\fai(\~\fg(t^{k}))} \fai(t_{k}), \fai( \~\fg(\zeta)) \>_{\d}.
\end{equation}
It is now convenient to modify the connection as follows. Recall that the generalized metric $\E \subseteq \d$ forms the graph (\ref{eq_EasE0}) of the map $E_{0} = g_{0} + B_{0}$. We can now consider a twisted connection 
\begin{equation}
\hcD^{0}_{x} y = e^{-B_{0}}( \cDN_{e^{B_{0}}(x)} e^{B_{0}}(y)),
\end{equation}
for all $x,y \in \d$, where $e^{B_{0}}(z,\zeta) := (z, \zeta + B_{0}(z))$ for all $(z,\zeta) \in \d$. It follows that $\hcD^{0}$ is a Levi-Civita connection on the Courant algebroid $(\d, 0, \<\cdot,\cdot\>_{\d}, - [\cdot,\cdot]'_{\d})$ with respect to the generalized metric $\E^{0} \subseteq \d$, where $[\cdot,\cdot]'_{\d}$ is the Lie algebra bracket defined as 
\begin{equation} [x,y]'_{\d} = e^{-B_{0}} [ e^{B_{0}}(x), e^{B_{0}}(y)]_{\d}, \end{equation}
for all $x,y \in \d$ and $\E^{0} = \{ (x,g_{0}(x)) \; | \; x \in \g \}$. The simplicity of $\E^{0}$ is the main reason why to consider the twisted scenario. Let $\fj = e^{-B_{0}} \fai$. This map has the explicit form 
\begin{equation} \label{eq_fjmap}
\fj(x) = e^{-B_{0}}(x, - \fPi^{\ast}(x)) = (x, - ( \fPi^{\ast} + B_{0})(x)) =: (x, -B'_{0}(x)),
\end{equation}
for all $x \in \g$. Rewriting (\ref{eq_dphizetaR'}) using the twisted objects, for each $\zeta \in \g^{\ast}$, we find 
\begin{equation} \label{eq_WzetaRprefinal}
W'(\zeta_{R'}) = \< \hcD^{0}_{\fj(\~\fg(t^{k}))} \fj(t_{k}), \fj( \~\fg(\zeta)) \>_{\d}. 
\end{equation}
We will now calculate this quantity for a \textit{special choice} of the connection $\cD^{0}$, namely the one examined in Example \ref{ex_quadrconn}. As already noted in Example \ref{ex_unimod} in the previous subsection, this connection is divergence-free. We will show that the general case is then obtained by a simple modification of this case. First, note that $\hcD^{0}$ is given by the same formula (\ref{eq_cD0formula}) as $\cD^{0}$, that is
\begin{equation}
\begin{split}
\< \hcD^{0}_{x}y, z \>_{\d} = & \ - \< [x_{+},y_{-}]'_{\d}, z_{-} \>_{\d} - \< [x_{-},y_{+}]'_{\d}, z_{+} \>_{\d} \\
 & \ - \frac{1}{3} \< [x_{+},y_{+}]'_{\d}, z_{+} \>_{\d} 	- \frac{1}{3} \< [x_{-},y_{-}]'_{\d}, z_{-} \>_{\d},
\end{split}
\end{equation}
for all $x,y,z \in \d$. The projections $x_{\pm}$ are with respect to the generalized metric $\E^{0} \subseteq \d$. We formulate the result in the form of a theorem. 
\begin{theorem} \label{thm_main1}
Let $W' \in \Omega^{1}(G^{\ast})$ be a $1$-form defined by (\ref{eq_WzetaRprefinal}). Then it can be rewritten as 
\begin{equation} \label{eq_W'zetaRmain1}
W'(\zeta_{R'}) = \< \zeta, \frac{1}{2} [\~\fB(t^{k}),t_{k}]_{\g} + ad^{\ast}_{\fPi^{\ast}(t_{k})} \~\fB(t^{k}) \> + \frac{1}{2} \<\zeta, \fa + \~\fB \cdot \fAd(\falpha) \>,
\end{equation}
where $\{ t_{k} \}_{k=1}^{\dim{\g}}$ is a fixed (but arbitrary) basis of $\g$, $\~\fB$ and $\fPi^{\ast}$ are defined using the right-invariant vector fields as in Subsection \ref{subsec_PLT}, $(\fa,\falpha) \in \d$ are traces of adjoint representations of $\g$ and $\g^{\ast}$ as defined in the previous subsection, and $\fAd \in C^{\infty}(G^{\ast}, \End(\g^{\ast}))$ is defined as 
\begin{equation}
(\fAd(\xi))_{g} = \Ad_{g}(\xi),
\end{equation}
for all $g \in G^{\ast}$ and $\xi \in \g^{\ast}$. 
\end{theorem}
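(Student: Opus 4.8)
The plan is to evaluate the contraction (\ref{eq_WzetaRprefinal}) head-on, inserting the explicit formula for $\hcD^{0}$ and exploiting the simplicity of the twisted generalized metric $\E^{0} = \{ (x, g_{0}(x)) \mid x \in \g \}$. First I would record the orthogonal projections onto $\E^{0}_{\pm}$: since $\E^{0}_{+}$ and $\E^{0}_{-}$ are the graphs of $+g_{0}$ and $-g_{0}$, a general element $(a,\xi) \in \d$ splits as $(a,\xi)_{\pm} = (b_{\pm}, \pm g_{0} b_{\pm})$ with $b_{\pm} = \tfrac{1}{2}(a \pm g_{0}^{-1} \xi)$. Applying this to $\fj(x) = (x, -B'_{0}(x))$ from (\ref{eq_fjmap}) gives $\g$-components $b_{\pm} = \tfrac{1}{2}(x \mp g_{0}^{-1} B'_{0}(x))$, so each projected argument is expressed through the operators $\tfrac{1}{2}(1 \mp g_{0}^{-1} B'_{0})$ acting on $\g$.

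With the projections in hand, I would substitute them into the four bracket terms of $\hcD^{0}$ and expand every $[\cdot,\cdot]'_{\d}$ by means of the double bracket (\ref{eq_bracketdouble}). The decisive algebraic input is the duality relation $\~\fE = (g_{0} + B'_{0})^{-1}$, which follows from (\ref{eq_EastasE0andPi}) because $E_{0} + \fPi^{\ast} = g_{0} + B'_{0}$; splitting $\~\fE = \~\fg + \~\fB$ into its symmetric and antisymmetric parts then lets me rewrite the compositions $\fj \circ \~\fg$ occurring in the first and third arguments in terms of $\~\fE$ and $g_{0} + B'_{0}$. Carrying out the contraction over the dual bases $\{ t^{k} \}$, $\{ t_{k} \}$ turns the structure constants of $\g$ and $\g^{\ast}$ into the adjoint-trace elements $\fa \in \g$ and $\falpha \in \g^{\ast}$ defined in the previous subsection; here the fact that this particular $\cD^{0}$ is divergence-free (Example \ref{ex_unimod}) keeps the bookkeeping clean, since no separate $\Div_{\cD^{0}}$ contribution survives.

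The main obstacle will be taming the $G^{\ast}$-dependence carried simultaneously by $\~\fg$, $\~\fB$ and $\fPi^{\ast}$ (all built from the right-invariant trivialization of Subsection \ref{subsec_PLT}) together with the sheer number of bracket terms to be collected. A useful internal check is that, although $\~\fg$ enters through both the summed argument $\fj(\~\fg(t^{k}))$ and the test argument $\fj(\~\fg(\zeta))$, its symmetric contribution must cancel completely: the target (\ref{eq_W'zetaRmain1}) contains only $\~\fB$, $\fPi^{\ast}$ and the traces $\fa, \falpha$, so tracking this cancellation is an essential part of the computation. Equally delicate is the term $\tfrac{1}{2} \~\fB \cdot \fAd(\falpha)$, which does not appear termwise but only after a trace contribution is transported by the group-dependent adjoint action $\fAd$. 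Recognizing that the surviving mixed contractions assemble into the compact combination $\tfrac{1}{2}[\~\fB(t^{k}), t_{k}]_{\g} + \ad^{\ast}_{\fPi^{\ast}(t_{k})} \~\fB(t^{k})$, together with the trace correction $\tfrac{1}{2}(\fa + \~\fB \cdot \fAd(\falpha))$, is the concluding step that yields (\ref{eq_W'zetaRmain1}).
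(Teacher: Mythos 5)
Your overall strategy coincides with the paper's: twist by $B_{0}$ to reduce to the simple generalized metric $\E^{0}$, compute the $\pm$ projections of $\fj(x)=(x,-B'_{0}(x))$, expand the twisted double bracket, and use $\~\fE=(g_{0}+B'_{0})^{-1}$ to organize the contraction. (One labour-saving observation you miss but do not strictly need: since the expression is a trace over $k$, only the symmetric part of $\hcD^{0}$ contributes, which kills the $\tfrac{1}{3}[x_{\pm},y_{\pm}]'_{\d}$ terms and reduces everything to $\<[x_{+},y_{-}]'_{\d},z_{+}-z_{-}\>_{\d}$.) However, there are two places where the computation does not close by ``careful grouping of terms,'' and your proposal does not identify the identities that are actually needed. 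First, the raw expansion leaves a residue of the form $\tfrac{1}{2}\<\~\fB(\zeta),\omega_{\fPi^{\ast}}(\~\fB(t^{k}),t_{k})\>$, where $\omega_{\fPi^{\ast}}$ collects the terms quadratic in $\fPi^{\ast}$ together with $\fPi^{\ast}([\cdot,\cdot]_{\g})$ and the mixed $\ad^{\ast}$ terms. This does not cancel algebraically; it vanishes only because $\omega_{\fPi^{\ast}}=0$ identically, which is the content of Lemma \ref{lem_fPiidentity} and is a consequence of the antihomomorphism property of the dressing action (equivalently, of $[\Pi^{\ast},\Pi^{\ast}]_{S}=0$ and multiplicativity). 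Without invoking this, your expansion stalls with an apparently nonzero extra term.

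Second, you correctly flag the term $\tfrac{1}{2}\~\fB\cdot\fAd(\falpha)$ as delicate, but ``a trace contribution transported by $\fAd$'' is not a mechanism. What the direct expansion actually produces is $\tfrac{1}{2}\<\~\fB(\zeta),[\fPi^{\ast}(t_{k}),t^{k}]_{\g^{\ast}}-\falpha+\fPi^{\ast}(\fa)\>$, and converting this into $\tfrac{1}{2}\<\zeta,\~\fB\cdot\fAd(\falpha)\>$ requires the separate identity
\begin{equation*}
[\fPi^{\ast}(t_{k}),t^{k}]_{\g^{\ast}}+\fPi^{\ast}(\fa)=\falpha-\fAd(\falpha),
\end{equation*}
which the paper obtains by evaluating the Lie bracket $[t^{k}_{R'},\Pi^{\ast}(t_{k}^{R'})]$ in two different ways (once via the Leibniz rule for the Lie derivative of $\Pi^{\ast}$, once via the right-invariant frame components). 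Neither of these two inputs is a bookkeeping step; both must be stated and proved for the argument to go through. As written, your proposal is a correct plan for the setup and the substitution, but it would not by itself produce (\ref{eq_W'zetaRmain1}).
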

Before going to the actual proof of this theorem, let us recall one consequence of the Jacobi identity for the Poisson structure $\fPi^{\ast}$. 
\begin{lemma} \label{lem_fPiidentity}
The map $\fPi^{\ast} \in C^{\infty}( G^{\ast}, \Lambda^{2} \g^{\ast})$ corresponding to the Poisson bivector $\Pi^{\ast} \in \X(G^{\ast})$ satisfies the following identity, for all $x,y \in \g$,: 
\begin{equation} \label{eq_fPiidentity}
0 = \fPi^{\ast}([x,y]_{\g}) + [ \fPi^{\ast}(x), \fPi^{\ast}(y) ]_{\g^{\ast}}  - ( \ad^{\ast}_{x} \fPi^{\ast}(y) - \ad^{\ast}_{y} \fPi^{\ast}(x) ) - \fPi^{\ast}( \ad^{\ast}_{\fPi^{\ast}(x)} y - \ad^{\ast}_{\fPi^{\ast}(y)}x ). 
\end{equation}
All Lie algebra operations are assumed to be defined point-wise. Denote the $2$-form on the right-hand side as $\omega_{\fPi^{\ast}}$, that is, there holds $\omega_{\fPi^{\ast}}(x,y) = 0$, for all $x,y \in \g$. 
\end{lemma}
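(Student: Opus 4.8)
The content of the lemma is that $\omega_{\fPi^{\ast}}$ is simply the Jacobi identity $[\Pi^{\ast},\Pi^{\ast}]_S=0$ for the Poisson bivector $\Pi^{\ast}$, written out in the right trivialization of $TG^{\ast}$. The plan is to start not from the Schouten bracket directly, but from its vector-field shadow already recorded in the paper: the anti-homomorphism property (\ref{eq_leftdressahom}) of the left dressing action, $\varphi^{l}([x,y]_{\g})+[\varphi^{l}(x),\varphi^{l}(y)]=0$, which is exactly what the vanishing of $[\Pi^{\ast},\Pi^{\ast}]_S$ encodes. I would then push this identity of vector fields down to $\g^{\ast}$ by reading off right-invariant components. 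Using (\ref{eq_leftdress}), pairing $\varphi^{l}(x)=\Pi^{\ast}(\,\cdot\,,x^{R'})$ with the right-invariant $1$-form $z^{R'}$ gives $\fPi^{\ast}(z,x)$ pointwise, so the right-invariant representative of $\varphi^{l}(x)$ is, up to a sign, the $\g^{\ast}$-valued function $\fPi^{\ast}(x)$.

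Next I would expand both sides in the right trivialization. For vector fields with right-invariant representatives $u,v\in C^{\infty}(G^{\ast},\g^{\ast})$ the Lie bracket has representative $\varphi^{l}(x).v-\varphi^{l}(y).u-[u,v]_{\g^{\ast}}$, the last term produced by $[z^{R'},w^{R'}]=-[z,w]^{R'}_{\g^{\ast}}$. With $u=\fPi^{\ast}(x)$ and $v=\fPi^{\ast}(y)$ this immediately yields the term $[\fPi^{\ast}(x),\fPi^{\ast}(y)]_{\g^{\ast}}$ of $\omega_{\fPi^{\ast}}$ together with the two derivative terms $\varphi^{l}(x).\fPi^{\ast}(y)$ and $\varphi^{l}(y).\fPi^{\ast}(x)$, while $\varphi^{l}([x,y]_{\g})$ supplies $\fPi^{\ast}([x,y]_{\g})$. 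The only nontrivial step left is to evaluate the derivative $\varphi^{l}(x).\fPi^{\ast}(y)$.

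This is where multiplicativity of $\Pi^{\ast}$ is used. Since $\varphi^{l}(x)$ flows to first order along $\exp(-t\,\fPi^{\ast}(x))\cdot h$, I would differentiate the $1$-cocycle identity $\fPi^{\ast}(gh)=\fPi^{\ast}(g)+\Ad_{g}\fPi^{\ast}(h)$ coming from multiplicativity. Differentiating the first factor at the unit gives the intrinsic derivative of $\fPi^{\ast}$, which by (\ref{eq_deltaPirel}) applied to $(G^{\ast},\Pi^{\ast})$ is the cobracket $\delta^{\ast}$ of $\g^{\ast}$; contracting $\delta^{\ast}(\fPi^{\ast}(x))$ with $y$ produces a coadjoint term $\ad^{\ast}_{y}\fPi^{\ast}(x)$, so that the antisymmetric combination of the two derivative terms reconstructs $-(\ad^{\ast}_{x}\fPi^{\ast}(y)-\ad^{\ast}_{y}\fPi^{\ast}(x))$. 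Differentiating the $\Ad$-factor produces the adjoint action of $\fPi^{\ast}(x)\in\g^{\ast}$, which by $\g^{\ast}$-equivariance of $\fPi^{\ast}$ splits into a further $[\fPi^{\ast}(x),\fPi^{\ast}(y)]_{\g^{\ast}}$ and the terms $\fPi^{\ast}(\ad^{\ast}_{\fPi^{\ast}(x)}y)$, $\fPi^{\ast}(\ad^{\ast}_{\fPi^{\ast}(y)}x)$. Collecting everything reproduces the four terms of $\omega_{\fPi^{\ast}}$, which therefore vanishes.

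The derivation is mechanical once this dictionary is fixed, and the only real difficulty is bookkeeping. One must distinguish the two coadjoint representations appearing in (\ref{eq_bracketdouble}) (that of $\g$ on $\g^{\ast}$ and that of $\g^{\ast}$ on $\g$), track the minus signs attached to right-invariant objects and to the cocycle direction, and, crucially, combine the several independent contributions of $[\fPi^{\ast}(x),\fPi^{\ast}(y)]_{\g^{\ast}}$ — one from the bracket of representatives and those from the $\Ad$-derivative — with the right coefficients so that a single such term survives. As a consistency check one may instead bypass the dressing action and pair the trivector $[\Pi^{\ast},\Pi^{\ast}]_S$ directly with $x^{R'}\wedge y^{R'}\wedge z^{R'}$, reading $\omega_{\fPi^{\ast}}(x,y)$ off the coefficient of $z$; both routes must agree.
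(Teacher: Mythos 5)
Your proposal is correct and takes essentially the same route as the paper: both proofs start from the anti-homomorphism property (\ref{eq_leftdressahom}) of the left infinitesimal dressing action and unwind it in the right trivialization, using the infinitesimal multiplicativity of $\Pi^{\ast}$ — which is exactly the content of the paper's ``useful relation'' (\ref{eq_useful}) — to evaluate the derivative terms. The only difference is presentational: the paper quotes (\ref{eq_useful}) directly, whereas you re-derive it from the cocycle identity $\fPi^{\ast}(gh)=\fPi^{\ast}(g)+\Ad_{g}\fPi^{\ast}(h)$ and the relation (\ref{eq_deltaPirel}).
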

\begin{proof}
Recall the left infinitesimal dressing action (\ref{eq_leftdress}) and the corresponding antihomomorphism property (\ref{eq_leftdressahom}). Rewriting this identity using $\fPi^{\ast}$ and the useful relation
\begin{equation} \label{eq_useful}
\zeta_{R'}. \fPi^{\ast}(x,y) = \< \zeta, [x,y]_{\g} \> - \fPi^{\ast}( \ad^{\ast}_{\zeta}x, y) - \fPi^{\ast}( x, \ad^{\ast}_{\zeta}y)
\end{equation}
(valid for all $x,y \in \g$ and $\zeta \in \g^{\ast}$) gives precisely the equation (\ref{eq_fPiidentity}). 
\end{proof}
Next, note that the bracket $[\cdot,\cdot]'_{\d}$ can be written as $[x,y]'_{\d} = [x,y]_{\d} - g_{\d}^{-1}\H(x,y,\cdot)$, for all $x,y \in \d$ and a unique $3$-form $\H \in \Lambda^{3} \d^{\ast}$. Note that $\E^{0}_{\pm} = \fPsi_{\pm}^{0}(\g)$, where $\fPsi_{\pm}^{0} \in \Hom(\g,\d)$ read 
\begin{equation}
\fPsi_{\pm}^{0}(x) = (x, \pm g_{0}(x)), 
\end{equation}
for all $x \in \g$. We will use the notation $x_{\pm} = \fPsi_{\pm}^{0}(x'_{\pm})$, for each $x \in \d$. One can use the isomorphisms of $\g$ with $\E^{0}_{\pm}$ to induce new objects on $\g$. For example, one can define $\R$-bilinear brackets\footnote{In general, they are not skew-symmetric.} $[\cdot,\cdot]_{+-}^{\pm}$ on $\g$ in the most natural way from $[\cdot,\cdot]_{\d}$, namely 
\begin{equation}
[ \fPsi_{+}^{0}(x), \fPsi_{-}^{0}(y) ]_{\d} =: \fPsi_{+}^{0}( [x,y]_{+-}^{+}) + \fPsi_{-}^{0}( [x,y]^{-}_{+-}),
\end{equation}
for all $x,y \in \g$. Similarly, let $\H_{\pm \pm \pm}(x,y,z) := \H( \fPsi_{\pm}^{0}(x), \fPsi_{\pm}^{0}(y), \fPsi_{\pm}^{0}(z))$ for all $x,y,z \in \g$. One can use (\ref{eq_bracketdouble}) to calculate these objects. We will not write down the results here. Just note that $\H_{+--} = H_{0} + H_{2}$ and $\H_{-++} = H_{0} - H_{2}$, where 
\begin{align}
H_{0}(x,y,z) & = -B_{0}([x,y]_{\g},z) - \< [B_{0}(x), B_{0}(y)]_{\g^{\ast}}, z \> + cyclic(x,y,z), \\
H_{2}(x,y,z) & = ( \ad^{(2)}_{g_{0}(x)} B_{0})(y,z) + ( \ad^{(2)}_{g_{0}(y)} B_{0})(y,z) - ( \ad^{(2)}_{g_{0}(z)} B_{0})(x,y),
\end{align}
and where $\ad^{(2)}$ denotes the adjoint action of $\g^{\ast}$ on $\Lambda^{2} \g^{\ast}$, that is, for all $\zeta \in \g^{\ast}$ and $y,z \in \g$, one has
\begin{equation}
(ad^{(2)}_{\zeta} B_{0})(y,z) = - B_{0}( \ad^{\ast}_{\zeta} y, z) - B_{0}( y, \ad^{\ast}_{\zeta} z ).
\end{equation}
Now we have prepared all necessary notions for the proof of the main theorem. 

\begin{proof}[Proof of Theorem \ref{thm_main1}]
We will first prove slightly different version of the equation (\ref{eq_W'zetaRmain1}). In particular, we claim that $W'(\zeta_{R'})$ can be, for all $\zeta \in \g^{\ast}$, expressed as 
\begin{equation} \label{eq_W'zetaRmain2}
\begin{split}
W'(\zeta_{R'}) = & \ \< \zeta, \frac{1}{2} [\~\fB(t^{k}), t_{k}]_{\g} + ad^{\ast}_{\fPi^{\ast}(t_{k})} \~\fB(t^{k}) \> + \frac{1}{2} \< \zeta, \fa \> \\
& + \frac{1}{2} \< \~\fB(\zeta), [\fPi^{\ast}(t_{k}),t^{k}]_{\g^{\ast}} - \falpha + \fPi^{\ast}(\fa) \>.
\end{split}
\end{equation}
We start from the expression (\ref{eq_WzetaRprefinal}). First, note that only the symmetric part of the connection $\hcD^{0}$ contributes (due to the sum over $k$). It reads
\begin{equation}
\frac{1}{2} \< \hcD^{0}_{x}y + \hcD^{0}_{y}x, z \>_{\d} = \frac{1}{2} \< [x_{+},y_{-}]'_{\d} + [y_{+},x_{-}]'_{\d}, z_{+} - z_{-} \>_{\d},
\end{equation}
for all $x,y,z \in \d$. Moreover, both terms contribute equally and it thus suffices to calculate 
\begin{equation}
\< [x_{+},y_{-}]'_{\d}, z_{+} - z_{-} \>_{\d} .
\end{equation}
Write $z = (c,\lambda)$ for $(c,\lambda) \in \g \oplus \g^{\ast}$. Using the notation introduced above, one finds 
\begin{equation} \label{eq_proofhumus1}
\begin{split}
\< [x_{+},y_{-}]'_{\d}, z_{+} - z_{-} \>_{\d} = & \ g_{0}( [x'_{+},y'_{-}]_{\g}, c) - g_{0}^{-1}( [ g_{0}(x'_{+}), g_{0}(y'_{-}) ]_{\g^{\ast}}, \lambda) \\
& + g_{0}( \ad^{\ast}_{g_{0}(x'_{+})} y'_{+} + \ad^{\ast}_{g_{0}(y'_{-})} x'_{+}, c ) \\
& - g_{0}( \ad^{\ast}_{x'_{+}} g_{0}(y'_{-}) + \ad^{\ast}_{y'_{-}} g_{0}(x'_{+}), \lambda) \\
& - H_{0}( x'_{+},y'_{-}, g_{0}^{-1}(\lambda)) - \frac{1}{2} \{ H_{2}(x'_{+},c,y'_{-}) - H_{2}( y'_{-}, c, x'_{+}) \} \\
& + \frac{1}{2} \{ H_{2}(x'_{+}, g_{0}^{-1}(\lambda), y'_{-}) + H_{2}( y'_{-}, g_{0}^{-1}(\lambda), x'_{+}) \}.
\end{split}
\end{equation}
To calculate $W'(\zeta_{R'})$, we have to plug $x = \fj(\~\fg(t^{k}))$, $y = \fj(t_{k})$ and $z = \fj(\~\fg(\zeta))$. Using the explicit form (\ref{eq_fjmap}) of the map $\fj$ and the relation (\ref{eq_EastasE0andPi}) implying $\~\fB = - g_{0}^{-1} B'_{0} \~\fg$, we obtain 
\begin{equation} \label{eq_x'+y'-zplug}
x'_{+} = \frac{1}{2}(\~\fg + \~\fB)( t^{k}), \; \; y'_{-} = \frac{1}{2} (\~\fg - \~\fB)( \fg^{-1}(t_{k})), \; \; (c,\lambda) = (\~\fg(\zeta), g_{0} \~\fB(\zeta)). 
\end{equation}
Now, the most difficult part of the proof comes with plugging (\ref{eq_x'+y'-zplug}) into the right-hand side of (\ref{eq_proofhumus1}). However, as it is a straightforward calculation consisting mostly of a careful grouping of the terms coming from the substitution, we omit it here. Note that one has to use the identity $g_{0} \~\fg + B'_{0}\~\fB = \f1_{\g^{\ast}}$ coming from (\ref{eq_EastasE0andPi}). The result for $W'$ is 
\begin{equation}
\begin{split}
W'(\zeta_{R'}) = & \  \<\zeta, \frac{1}{2} [\~\fB(t^{k}), t_{k}]_{\g} + ad^{\ast}_{\fPi^{\ast}(t_{k})} \~\fB(t^{k}) \>  + \frac{1}{2}\< \zeta, \fa \>\\
& + \frac{1}{2} \< \~\fB(\zeta), [\fPi^{\ast}(t_{k}),t^{k}]_{\g^{\ast}} - \falpha + \fPi^{\ast}(\fa) \>  + \frac{1}{2} \< \~\fB(\zeta), \omega_{\fPi^{\ast}}( \~\fB(t^{k}), t_{k}) \>.
\end{split}
\end{equation}
At this very moment we employ Lemma \ref{lem_fPiidentity} and the equation (\ref{eq_fPiidentity}) to see that this is exactly the equation (\ref{eq_W'zetaRmain2}). To finish the proof, it remains to prove that it is equivalent to (\ref{eq_W'zetaRmain1}). To do so, we examine $\fb \in C^{\infty}(G^{\ast}, \g^{\ast})$ defined as $\fb := [ \fPi^{\ast}(t_{k}), t^{k} ]_{\g^{\ast}} + \fPi^{\ast}(\fa) $. First, we get
\begin{equation}
\begin{split}
[t^{k}_{R'}, \Pi^{\ast}(t_{k}^{R'}) ] = & \ (\Li{t^{k}_{R'}}\Pi^{\ast})(t_{k}^{R'}) + \Pi^{\ast}( \Li{t^{k}_{R'}} t_{k}^{R'}) = \< \falpha - \fPi^{\ast}(\fa), t_{m} \> \cdot t^{m}_{R'}
\end{split}
\end{equation}
On the other hand, one can write the same expression as 
\begin{equation}
[t^{k}_{R'}, \Pi^{\ast}(t_{k}^{R'})] = ( t^{k}_{R'}.\Pi^{\ast}(t_{m},t_{k}) ) \cdot t^{m}_{R'} + \< [ \fPi^{\ast}(t_{k}), t^{k}]_{\g^{\ast}}, t_{m} \> \cdot t^{m}_{R'}.
\end{equation}
Glancing at the calculation above (\ref{eq_charverfinal}), the first term is exactly the vector field $Y = \falpha_{L} = \< \fAd(\falpha), t_{m} \> \cdot t^{m}_{R'}$. Comparing the both expressions now gives the equation
\begin{equation}
\falpha - \fPi^{\ast}(\fa) = \fAd(\falpha) + [\fPi^{\ast}(t_{k}), t^{k}]_{\g^{\ast}}. 
\end{equation}
We thus find $\fb = \falpha - \fAd(\falpha)$. Plugging this back to the expression (\ref{eq_W'zetaRmain2}) gives
\begin{equation}
W'(\zeta_{R'}) = \<\zeta, \frac{1}{2} [ \~\fB(t^{k}), t_{k}]_{\g} + \ad^{\ast}_{\fPi^{\ast}(t_{k})} \~\fB(t^{k}) \>  + \frac{1}{2}\< \zeta, \fa + \~\fB \cdot \fAd(\falpha)\>\\
\end{equation}
This is exactly  what was to be proved, namely (\ref{eq_W'zetaRmain1}). 
\end{proof}
At this moment, it is not clear how the expression (\ref{eq_W'zetaRmain1}) can be useful to calculate the dilaton function $\~\phi \in C^{\infty}(G^{\ast})$. The answer is given in the following proposition.
\begin{tvrz} \label{tvrz_dilatondiffequation}
Consider the smooth function $\~\phi \in C^{\infty}(G^{\ast})$ defined by	 
\begin{equation} \label{eq_phi}
\~\phi := - \frac{1}{2} \ln{|\det( \f1_{\g} + g_{0}^{-1}( \fPi^{\ast} + B_{0}))|}.
\end{equation}
Then its differential can be, for every $\zeta \in \g^{\ast}$, written as 
\begin{equation} \label{eq_dphizetaexpl}
\< d\~\phi, \zeta_{R'} \> = \< \zeta, \frac{1}{2} [\~\fB(t^{k}),t_{k}]_{\g} + \ad^{\ast}_{\fPi^{\ast}(t_{k})} \~\fB(t^{k}) \>.
\end{equation}
In other words, by plugging into (\ref{eq_W'zetaRmain1}), we can rewrite the $1$-form $W'$ as 
\begin{equation}
W'(\zeta_{R'}) = \< d\~\phi, \zeta_{R'} \> + \frac{1}{2} \< \zeta, \fa + \~\fB \cdot \fAd(\falpha) \>.
\end{equation}
\end{tvrz}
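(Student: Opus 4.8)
The plan is to compute the right-invariant derivative $\zeta_{R'}.\~\phi$ directly from the explicit formula (\ref{eq_phi}) via the standard logarithmic-derivative identity for determinants, and then to identify the outcome with (\ref{eq_dphizetaexpl}). First I would simplify the argument of the determinant. Writing $B'_0 = \fPi^{\ast} + B_0$ as in (\ref{eq_fjmap}), one has $\f1_{\g} + g_0^{-1}B'_0 = g_0^{-1}(g_0 + B_0 + \fPi^{\ast}) = g_0^{-1}(E_0 + \fPi^{\ast}) = g_0^{-1}\~\fE^{-1}$ by (\ref{eq_EastasE0andPi}). Since $g_0$ and $B_0$ are constant (fixed by the generalized metric $\E \subseteq \d$), this shows $\~\phi = \tfrac{1}{2}\ln|\det \~\fE|$ up to an additive constant, whence $\zeta_{R'}.\~\phi = \tfrac{1}{2}\Tr(\~\fE^{-1}(\zeta_{R'}.\~\fE))$. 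As only $\fPi^{\ast}$ varies, $\zeta_{R'}.\~\fE = -\~\fE(\zeta_{R'}.\fPi^{\ast})\~\fE$, and cyclicity of the trace gives $\zeta_{R'}.\~\phi = -\tfrac{1}{2}\Tr(\~\fE(\zeta_{R'}.\fPi^{\ast}))$.

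The key observation is that the symmetric part of $\~\fE$ does not contribute. Decomposing $\~\fE = \~\fg + \~\fB$ into symmetric and skew parts and noting that $\zeta_{R'}.\fPi^{\ast}$ is skew-symmetric (being the derivative of the skew tensor $\fPi^{\ast}$, as is immediate from (\ref{eq_useful})), the term $\Tr(\~\fg(\zeta_{R'}.\fPi^{\ast}))$ vanishes identically, since a symmetric map composed with a skew one has zero trace. This isolates the purely skew contribution, $\zeta_{R'}.\~\phi = -\tfrac{1}{2}\Tr(\~\fB(\zeta_{R'}.\fPi^{\ast}))$, which already explains why only $\~\fB$ and not $\~\fg$ appears in (\ref{eq_dphizetaexpl}).

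It then remains to expand this trace. Substituting the useful relation (\ref{eq_useful}) and contracting over the fixed basis $\{t_k\}$ of $\g$ with its dual $\{t^k\}$, the trace splits into three sums: one built from $\<\zeta,[\,\cdot\,,\,\cdot\,]_{\g}\>$ and two built from $\fPi^{\ast}(\ad^{\ast}_{\zeta}\,\cdot\,,\,\cdot\,)$. The first sum collapses, using the skew-symmetry of $\~\fB$, to the commutator term $\tfrac{1}{2}\<\zeta,[\~\fB(t^k),t_k]_{\g}\>$. The two remaining sums should be combined, using the skew-symmetry of $\fPi^{\ast}$ together with the coadjoint identities relating $\ad^{\ast}_{\zeta}$ (the coadjoint action of $\g^{\ast}$ on $\g$) to $\ad^{\ast}_{\fPi^{\ast}(t_k)}$, into the single term $\<\zeta,\ad^{\ast}_{\fPi^{\ast}(t_k)}\~\fB(t^k)\>$. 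This is precisely (\ref{eq_dphizetaexpl}), and inserting it into (\ref{eq_W'zetaRmain1}) yields the stated form of $W'$.

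The genuinely delicate step is this last recombination of the two coadjoint terms into $\ad^{\ast}_{\fPi^{\ast}(t_k)}\~\fB(t^k)$: it mixes the two distinct coadjoint actions and the sign convention identifying the bivector $\~\fB \in \Lambda^2\g$ with a linear map $\g^{\ast}\to\g$, so every sign and index placement must be tracked with care. I would pin these down by verifying the identity on a low-dimensional Lie bialgebra before trusting the general manipulation. By contrast, the determinant differentiation, the identification $\f1_{\g}+g_0^{-1}B'_0 = g_0^{-1}\~\fE^{-1}$, and the vanishing of the symmetric part are all routine.
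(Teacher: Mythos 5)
Your proposal is correct and follows essentially the same route as the paper: Jacobi's formula for the log-determinant, the identification $\f1_{\g}+g_{0}^{-1}(\fPi^{\ast}+B_{0}) = g_{0}^{-1}\~\fE^{-1}$, differentiation of $\fPi^{\ast}$ via (\ref{eq_useful}), and recombination of the two coadjoint terms into $\ad^{\ast}_{\fPi^{\ast}(t_{k})}\~\fB(t^{k})$. The only (harmless) difference is that you discard the symmetric part $\~\fg$ at the outset by the trace-of-symmetric-times-skew argument, whereas the paper keeps $\~\fE$ in the coadjoint terms and extracts $\~\fE-\~\fE^{T}=2\~\fB$ at the end; both give the same result.
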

\begin{proof}
In fact, the absolute value in (\ref{eq_phi}) is not necessary. Indeed, for positive definite $g_{0}$, the determinant is always positive. However, to keep the validity of all formulas for indefinite scenario, we continue to write the absolute values where it is appropriate.

Now to the actual proof. Write $\~\phi = - \frac{1}{2} \ln{| \det ( \fM)|}$. Then
\begin{equation}
d\~\phi = - \frac{1}{2} {(\fM^{-1})^{j}}_{i} \cdot d( {\fM^{i}}_{j}),
\end{equation}
where ${\fM^{i}}_{j} := \< t^{i}, \fM(t_{j}) \> = \delta^{i}_{j} + g_{0}^{ik} ( \fPi^{\ast}(t_{k},t_{j}) + (B_{0})_{kj})$. It is easy to calculate its exterior derivative using the relation (\ref{eq_useful}). We find the expression
\begin{equation}
\< d( {\fM^{i}}_{j}), \zeta_{R'} \> = g_{0}^{ik} (\zeta_{R'}.\fPi^{\ast}(t_{k},t_{j})) = g_{0}^{ik} \big( \<\zeta, [t_{k},t_{j}]_{\g} \> - \fPi^{\ast}( \ad^{\ast}_{\zeta} t_{k}, t_{j}) - \fPi^{\ast}( t_{k}, \ad^{\ast}_{\zeta} t_{j}) \big).
\end{equation}
Next, note that ${(\fM^{-1})^{j}}_{i} = {( \~\fE \cdot g_{0})^{j}}_{i}$. Using this, we can write $\< d\~\phi, \zeta_{R'}\>$ as 
\begin{equation}
\< d\~\phi, \zeta_{R'} \> = - \frac{1}{2} \big( \<\zeta, [t_{k}, \~\fE(t^{k})]_{\g} \> - \fPi^{\ast}( \ad^{\ast}_{\zeta} t_{k}, \~\fE(t^{k})) - \fPi^{\ast}( t_{k}, \ad^{\ast}_{\zeta} \~\fE(t^{k})) \big).
\end{equation}
Observe that only the skew-symmetric part $\~\fB$ contributes to the sum in the first term. For the other two terms, we shift $\zeta$ to the left, obtaining 
\begin{equation}
\fPi^{\ast}( \ad^{\ast}_{\zeta} t_{k}, \~\fE(t^{k})) + \fPi^{\ast}( t_{k}, \ad^{\ast}_{\zeta} \~\fE(t^{k})) = \< \zeta, \ad^{\ast}_{\fPi^{\ast}(t_{k})} (\~\fE - \~\fE^{T})(t^{k}) \> = 2 \< \zeta, \ad^{\ast}_{\fPi^{\ast}(t_{k})} \~\fB(t^{k}) \>.
\end{equation}
Plugging this back into the above expression yields
\begin{equation}
\< d\~\phi, \zeta_{R'} \> = \< \zeta, \frac{1}{2} [\~\fB(t^{k}),t_{k}]_{\g} + ad^{\ast}_{\fPi^{\ast}(t_{k})} \~\fB(t^{k}) \>.
\end{equation}
But this is exactly the expression (\ref{eq_dphizetaexpl}).
\end{proof}
Now, recall that we have chosen $\cD^{0}$ to be the particular connection from Example \ref{ex_quadrconn}. However, to find the general case is actually quite simple. Therefore, assume that $\cD^{0} \in \LC(\d,\E)$ is an arbitrary Levi-Civita connection on $\d$. Note that the $1$-form $W'$ can be calculated directly from the divergence operator of the connection $\~\cD \in \LC( \mathbb{T}G^{\ast}, V_{+}^{\g})$. This can be easily seen  from Proposition $6.1$ and equation (115) in our paper \cite{Jurco:2016emw}. One has 
\begin{equation} \label{eq_W'asDivergence}
W'(Y) = \frac{1}{2}(\Div_{\~g}(Y) - \Div_{\~\cD}(Y,0)),
\end{equation}
for all $Y \in \X(G^{\ast})$, where $\Div_{\~g}$ denotes the usual covariant divergence operator of the Levi-Civita connection corresponding to the metric $\~g$. One can easily calculate $\Div_{\~\cD}$ using the connection $\cD^{\g}$ and the isomorphism $\fPsi \in \Hom(\gTGa, E')$. One obtains 
\begin{equation}
\begin{split}
\Div_{\~\cD}(Y,0) = & \ \Div_{\~\cD}( \<Y, t_{k}^{R'} \> \cdot (t^{k}_{R'}, 0)) = t^{k}_{R'}.\<Y,t_{k}^{R'}\> + \< Y, t_{k}^{R'}\> \cdot \Div_{\~\cD}(t^{k}_{R'},0) \\
= & \ t^{k}_{R'}.\<Y,t_{k}^{R'}\> + \<Y, t_{k}^{R'} \> \cdot \Div_{\cD^{\g}}( \fPsi(t^{k}_{R'},0)) \\ 
= & \ t^{k}_{R'}.\<Y,t_{k}^{R'} \> + \<Y, t_{k}^{R'} \> \cdot \Div_{\cD^{0}}(0,t^{k}) \\
= & \ \Div_{\~g}(Y) + \<Y, \cDL_{t^{k}_{R'}} t_{k}^{R'} + \ft^{R'} \>,
\end{split}
\end{equation}
where we have used the definition of $\cD^{\g}$ and $\ft \in \g$ is defined by (\ref{eq_DivcD0}). Plugging this expression into (\ref{eq_W'asDivergence}), we see that for all $\zeta \in \g^{\ast}$, one has
\begin{equation} \label{eq_W'zetaRfinal}
W'(\zeta_{R'}) = - \frac{1}{2} \big( \<\zeta,\ft \> + \< \zeta_{R'}, \cDL_{t^{k}_{R'}} t_{k}^{R'} \> \big). 
\end{equation}
We conclude that the $1$-form $W'$ depends only on the divergence of the connection $\cD^{0}$, as the other term is fully determined by the choice of the generalized metric $\E \subseteq \d$ and the geometrical setting. On the other hand, the connection $\cD^{0}$ can be always expressed as a sum
\begin{equation} \label{eq_cD0astcD0}
\cD^{0}_{x}y = \tilde{\cD}_{x}^{0} y + g_{\d}^{-1} \mathbf{k}(x,y,\cdot),
\end{equation}
for all $x,y \in \d$, where $\tilde{\cD}_{x}^{0} \in \LC(\d,\E)$ now denotes the connection (\ref{eq_cD0formula}), and $\mathbf{k} \in \d^{\ast} \otimes \Lambda \d^{\ast}$ is a unique tensor subject to certain properties following from the fact that $\cD^{0} \in \LC(\d,\E)$. One gets
\begin{equation}
\Div_{\cD^{0}}(x) = - \mathbf{k}( t^{\alpha}_{\d}, t_{\alpha}, x),
\end{equation}
for all $x \in \d$. In other words, this is to argue that the term $-\frac{1}{2} \<\zeta, \ft\>$ in (\ref{eq_W'zetaRfinal}) is precisely the one coming from the non-trivial tensor $\mathbf{k}$ in (\ref{eq_cD0astcD0}) measuring the difference of general Levi-Civita connection on $\d$ with respect to the generalized metric $\E$ from the special one defined by (\ref{eq_cD0formula}). This observation allows us to formulate the main theorem of this subsection. 

\begin{theorem}[\textbf{Dilaton in Poisson--Lie T-duality}] \label{thm_dilaton}
Let $\cD^{0} \in \LC(\d,\E)$ be an arbitrary Levi-Civita connection on $\d$ with respect to the generalized metric $\E$. Let $\~\cD \in \LC(\gTGa, V_{+}^{\g})$ be the corresponding connection on $\gTGa$ defined via (\ref{eq_cDastdef}). Let $W' \in \Omega^{1}(G^{\ast})$ be the $1$-form whose value on the vector field is given by the left-hand side of (\ref{eq_dilatonrel}), and define $\~\phi \in C^{\infty}(G^{\ast})$ as 
\begin{equation} \label{eq_phiastfinalexpl}
\~\phi = - \frac{1}{2} \ln | \det( \f1_{\g} + g_{0}^{-1}( \fPi^{\ast} + B_{0})) |.
\end{equation}
Then, for all $\zeta \in \g^{\ast}$, the value of  $W'$ on the right-invariant vector field $\zeta_{R'}$ has the form 
\begin{equation}
W'(\zeta_{R'}) = \< d\~\phi, \zeta_{R'} \> + \frac{1}{2} \< \zeta, \fa - \ft + \~\fB \cdot \fAd(\falpha) \>, 
\end{equation}
where $\< (\ft,\ftau), (x,\xi) \>_{\d} = \Div_{\cD^{0}}(x,\xi)$ and $\< (\fa,\falpha), (x,\xi) \>_{\d} = \Tr(ad_{x}) + \Tr(ad^{\ast}_{\xi})$, for all $(x,\xi) \in \d$. 
\end{theorem}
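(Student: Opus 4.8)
The plan is to combine two ingredients that are, by this point, already in place: the explicit evaluation of $W'$ for the distinguished divergence-free connection $\tilde{\cD}^{0}$ of Example~\ref{ex_quadrconn}, furnished by Theorem~\ref{thm_main1} together with Proposition~\ref{tvrz_dilatondiffequation}, and the general divergence formula (\ref{eq_W'zetaRfinal}) valid for an arbitrary $\cD^{0} \in \LC(\d,\E)$.

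First I would invoke (\ref{eq_W'zetaRfinal}), which for any Levi-Civita connection $\cD^{0}$ on $\d$ expresses
\[
W'(\zeta_{R'}) = - \tfrac{1}{2} \big( \< \zeta, \ft \> + \< \zeta_{R'}, \cDL_{t^{k}_{R'}} t_{k}^{R'} \> \big).
\]
The decisive observation, already recorded in the text preceding the theorem, is that the second summand is built solely from the ordinary Levi-Civita connection $\cDL$ of the metric $\~g$, and $\~g$ is fixed by the generalized metric $\E \subseteq \d$ alone through (\ref{eq_EastasE0andPi}); hence this term is insensitive to the choice of $\cD^{0}$. Consequently $W'$ depends on the connection only through its divergence, that is, only through the element $\ft \in \g$ defined by (\ref{eq_DivcD0}), and it does so linearly.

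Next I would specialize to $\tilde{\cD}^{0}$ from Example~\ref{ex_quadrconn}, which is divergence-free (as noted in Example~\ref{ex_unimod}) and thus has $\ft = 0$. Denoting by $W'_{0}$ the corresponding $1$-form, the $\ft$-independent Levi-Civita term in the displayed formula is exactly $W'_{0}(\zeta_{R'})$. On the other hand, Theorem~\ref{thm_main1} evaluates $W'_{0}$ explicitly, and Proposition~\ref{tvrz_dilatondiffequation} identifies the bracket-and-coadjoint group of terms with $\< d\~\phi, \zeta_{R'} \>$ for $\~\phi$ as in (\ref{eq_phiastfinalexpl}), yielding
\[
W'_{0}(\zeta_{R'}) = \< d\~\phi, \zeta_{R'} \> + \tfrac{1}{2} \< \zeta, \fa + \~\fB \cdot \fAd(\falpha) \>.
\]

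Finally I would assemble the two steps. For an arbitrary $\cD^{0}$, the formula (\ref{eq_W'zetaRfinal}) reads $W'(\zeta_{R'}) = W'_{0}(\zeta_{R'}) - \tfrac{1}{2} \< \zeta, \ft \>$, since the $\ft$-independent part coincides with $W'_{0}$; substituting the expression for $W'_{0}$ then gives
\[
W'(\zeta_{R'}) = \< d\~\phi, \zeta_{R'} \> + \tfrac{1}{2} \< \zeta, \fa - \ft + \~\fB \cdot \fAd(\falpha) \>,
\]
which is the asserted identity. The substantive computation has already been discharged in Theorem~\ref{thm_main1}; the only point demanding care here is the clean separation of $W'$ into its $\ft$-dependent and $\~g$-dependent parts, which is precisely what the divergence argument culminating in (\ref{eq_W'zetaRfinal}) supplies.
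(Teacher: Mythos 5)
Your proposal is correct and follows essentially the same route as the paper: the text preceding Theorem \ref{thm_dilaton} derives (\ref{eq_W'zetaRfinal}) to show that $W'$ depends on $\cD^{0}$ only through its divergence, observes that the connection of Example \ref{ex_quadrconn} has $(\ft,\ftau)=(0,0)$ so that its $W'$ is the $\ft$-independent part already evaluated in Theorem \ref{thm_main1} and Proposition \ref{tvrz_dilatondiffequation}, and then reinstates the $-\tfrac{1}{2}\<\zeta,\ft\>$ correction for a general Levi-Civita connection. The only cosmetic difference is that the paper phrases the comparison via the decomposition $\cD^{0} = \tilde{\cD}^{0} + g_{\d}^{-1}\mathbf{k}$, whereas you argue directly from the $\cD^{0}$-independence of the $\cDL$-term; these are the same observation.
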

It is not easy to obtain a general answer to the question of exactness of $W'$. Note that both $\fa$ and $\ft$ have to be $\Ad^{\ast}$-invariant elements of $\g$. We have to assume so for $\ft$, see Theorem \ref{thm_charverfield}. For $\fa$, this can be shown as follows. Let $\xi,\eta \in \g^{\ast}$. Then
\begin{equation}
(\ad^{\ast}_{\xi} \fa)(\eta) = \< \fa, [\eta,\xi]_{\g^{\ast}} \> = \Tr( \ad_{[\eta,\xi]_{\g^{\ast}}}) = \Tr( [\ad_{\eta}, \ad_{\xi}] ) = 0,
\end{equation}
as the commutator of two linear operators is always traceless. This observation implies that the $1$-form $( \fa - \ft)^{R'} = ( \fa - \ft)^{L'}$ is bi-invariant and thus closed. However, the term $\~\fB \cdot \fAd(\falpha)$ corresponds to the $1$-form $\~B( \falpha_{L'})$ which is \emph{not closed} in general. It is thus a safe bet to assume $(\fa,\falpha) = (0,0)$. In this case, we can choose $\cD^{0}$ so that $(\ft,\ftau) = (0,0)$, in particular we can work with the one in Example \ref{ex_quadrconn}. We thus introduce the following notion:
\begin{definice} Let $(\g,\delta)$ be a Lie bialgebra. We say that $(\g,\delta)$ is a \textbf{unimodular Lie bialgebra} if $(\g,[\cdot,\cdot]_{\g})$ is a unimodular Lie algebra and $\delta$ satisfies the condition
\begin{equation}
\delta(t_{k}) (\eta \otimes t^{k}) = 0, 
\end{equation}
for all $\eta \in \g^{\ast}$. Equivalently, the corresponding dual Lie algebra $[\cdot,\cdot]_{\g^{\ast}} = \delta^{T}$ is also unimodular. 
\end{definice}

We will henceforth \textbf{assume that $(\g,\delta)$ is a unimodular Lie bialgebra}. It follows from Example \ref{ex_unimod} and Theorem \ref{thm_dilaton} that we can choose $\cD^{0}$ to be the connection (\ref{eq_cD0formula}) and the corresponding connections $\cD \in \LC(\gTG, V_{+}^{\g^{\ast}})$ and $\~\cD \in \LC( \gTGa, V_{+}^{\g})$ then both satisfy the assumptions of Theorem \ref{thm_eomconn}. Moreover, we have the explicit formula (\ref{eq_phiastfinalexpl}) for both dilaton fields.
\subsection{Analysis of the dilaton} \label{subsec_analysis}
For the sake of completeness, we can now analyze the case when only one of the Lie algebras is unimodular. We still assume that $(\ft,\ftau) = (0,0)$ and $\falpha = 0$. However, we allow $\fa \neq 0$. We have
\begin{equation}
W'(\zeta_{R'}) = \< d \~\phi, \zeta_{R'} \> + \frac{1}{2} \<\zeta,\fa\>. 
\end{equation}
It turns out that the biinvariant form $\fa^{R'} \in \Omega^{1}(G^{\ast})$ is in fact exact, namely $\fa^{R'} = \ln(\det(\fAd))$. This can be proved by a direct calculation. One thus finds $W'(\zeta_{R'}) = \< d \~\phi', \zeta_{R'} \>$ for
\begin{equation} \label{eq_phi'dilaton}
\~\phi' = \~\phi_{0} - \frac{1}{2} \ln | \det(\f1_{\g} + g_{0}^{-1}( \fPi^{\ast} + B_{0})) | + \frac{1}{2} \ln (\det(\fAd)),
\end{equation}
where $\~\phi_{0} \in \R$ in some additive constant. After some simple manipulations, we can write
\begin{equation}
\~\phi' = - \frac{1}{2} \{ \~\phi'_{0} + \ln|\det(E_{0}^{-1})| - \ln| \det(\~\fE)| - \ln(\det(\fAd)) \},
\end{equation}
for some constant $\~\phi'_{0} \in \R$. One can now compare this formula\footnote{More precisely its dual version with the role of $G$ and $G^{\ast}$ interchanged.} with (3.13) in \cite{VonUnge:2002xjf}. The only difference is the overall prefactor $-1/2$, presumably coming from a different parametrization of the effective action. Note that in the earlier literature \cite{Tyurin:1995bu, Bossard:2001au}, the term containing $\det(\fAd)$ is missing. However, this does not matter if $(\g,\delta)$ is a unimodular Lie bialgebra. 

Moreover, as pointed to us by the referee, on can also rewrite $\~\phi'$ as follows. As $\d$ is a quadratic Lie algebra, it is always unimodular. In this subsection, we assume that $\g$ is unimodular. When both Lie algebras of the Manin pair $(\d,\g)$ are unimodular, there is a canonical volume form on the quotient $D/G$ determined uniquely up to a non-zero multiplicative constant. Under the diffeomorphism $D/G \cong G^{\ast}$, this volume form can be chosen to be the \textit{left} Haar measure $\mu \in \Omega^{n}(G^{\ast})$. Assume that we have fixed a basis $(t^{k})_{k=1}^{n}$ of $\g^{\ast}$ and set $\mu$ to be
\begin{equation} \label{eq_muform}
\mu = t_{1}^{L'} \^ \cdots \^ t_{n}^{L'} = \det(\fAd)^{-1} \cdot t_{1}^{R'} \^ \cdots \^ t_{n}^{R'}.
\end{equation}
Next, the comparison of symmetric parts of (\ref{eq_EastasE0andPi}) yields the formula $(\~\fg - \~\fB \~\fg^{-1} \~\fB) = g_{0}^{-1}$. Moreover, for any invertible matrix $\~\fE = \~\fg + \~\fB$ with the invertible symmetric part $\~\fg$, one has
\begin{equation}
| \det(\~\fE)| = | \det(\~\fg) |^{\frac{1}{2}} \cdot | \det( \~\fg - \~\fB \~\fg^{-1} \~\fB) |^{\frac{1}{2}}
\end{equation}
Using these two observations, it straightforward to rewrite (\ref{eq_phi'dilaton}) as 
\begin{equation}
\~\phi' = \~\phi_{0} + \frac{1}{4} \ln|\det(g_{0})| + \frac{1}{2} \ln( |\det(\~\fg)|^{\frac{1}{2}} \cdot \det(\fAd) )
\end{equation}
If $(t^{k})_{k=1}^{n}$ is a right-handed basis of $\g^{\ast}$, we can write the volume form $d\vol_{\~g}$ as 
\begin{equation}
d\vol_{\~g} = | \det(\~\fg)|^{\frac{1}{2}} \cdot t_{1}^{R'} \^ \cdots \^ t_{n}^{R'}
\end{equation}
Finally, comparing this to (\ref{eq_phi'dilaton}), we may rewrite the dilaton field as
\begin{equation} \label{eq_dilatonsasvolumeforms}
\~\phi' = \~\phi''_{0} + \frac{1}{2} \ln| \frac{d\vol_{\~g}}{\mu}|,
\end{equation}
where $\~\phi''_{0} \in \R$ is an irrelevant constant. Note that $\mu$ can be any left Haar measure, the resulting difference is then absorbed into $\~\phi''_{0}$. 
\section{Poisson--Lie T-duality of effective actions} \label{sec_PLT}
Finally, we may combine the results of all the above sections to derive the main statement of this paper. In the previous section, we have constructed a pair of connections $\cD \in \LC(\gTG, V_{+}^{\g^{\ast}})$ and $\~\cD \in \LC( \gTGa, V_{+}^{\g})$ which both satisfy the assumptions of Theorem \ref{thm_eomconn}. Now, we can examine their mutual relation in terms of the properties of their respective curvatures. 
\begin{tvrz} \label{tvrz_conpropeq}
Let $\cD^{0} \in \LC(\d,\E)$ be an arbitrary Levi-Civita connection on $\d$ with respect to the generalized metric $\E \subseteq \d$. Let $\~\cD \in \LC(\gTGa, V_{+}^{\g})$ be the Levi-Civita connection on $\gTGa$ defined by (\ref{eq_cDastdef}) using the connection $\cD^{\g} \in \LC(E',\E'_{\g})$ obtained in (\ref{eq_cDgfinal}) of Subsection \ref{subsec_conred}. Let $\gm_{0}$ be the positive-definite metric on $\d$ corresponding to $\E \subseteq \d$ and $\~\gm$ be the positive-definite fiber-wise metric on $\gTGa$ corresponding to $V^{\g}_{+} \subseteq \gTGa$. 

Then the scalar curvature $\RS_{\~\cD}^{\~\gm} \in C^{\infty}(G^{\ast})$ is constant and equal to $\RS_{\cD^{0}}^{\gm_{0}}$. Moreover, the connection $\~\cD$ is Ricci compatible with the generalized metric $V_{+}^{\g}$ if and only if the connection $\cD^{0}$ is Ricci compatible with the generalized metric $\E$. 
\end{tvrz}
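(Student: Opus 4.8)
The plan is to push the entire computation down to the finite‑dimensional Courant algebroid $(\d,0,\<\cdot,\cdot\>_{\d},-[\cdot,\cdot]_{\d})$ by invoking the naturality of the generalized curvature tensors under Courant algebroid isomorphisms. Since $\~\cD$ is obtained from $\cD^{\g}$ by conjugation with the isomorphism $\fPsi \in \Hom(\gTGa,E')$ through (\ref{eq_cDastdef}), and since $R_{\cD}$, $\Ric_{\cD}$, $\RS^{\gm}_{\cD}$ and the block‑diagonality condition defining Ricci compatibility are all invariant under such isomorphisms, it suffices to prove the two assertions for $\cD^{\g}$ on $E'$ relative to the generalized metric $\E'_{\g}$, with $\fPsi$ carrying $\~\gm$ to the fiber‑wise metric $\gm_{\g}$ on $E'$ associated with $\E'_{\g}$. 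Using Proposition \ref{tvrz_reducedEiso} I then transport the problem to $E = M \times \d$, where $\E'_{\g} = M \times \E$ and $\gm_{\g}$ is the constant fiber‑wise extension of $\gm_{0}$.

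The core of the argument is that, on constant sections, every anchor‑dependent term disappears. I would evaluate $R_{\cD^{\g}}$ on the constant frame $\{\psi_{\alpha}\}$ attached to a basis $\{t_{\alpha}\}$ of $\d$, using the explicit formula (\ref{eq_congfinalformula}), namely $\cD^{\g}_{\psi}\psi' = \cD^{0}_{\psi}\psi' + \rho(\psi).\psi'$. Because $\rho(\psi_{x}).\psi_{y} = 0$ for a constant $\psi_{y}$, one gets $\cD^{\g}_{\psi_{x}}\psi_{y} = \psi_{\cD^{0}_{x}y}$, again a constant section (here $\cD^{0}$ on the right is the fiber‑wise extension, which is $\cif$‑bilinear since the anchor of the Courant algebroid on $\d$ is zero). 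Likewise, by (\ref{eq_Ebracket}) the Courant bracket on constant sections reduces to $[\psi_{x},\psi_{y}]_{E} = \psi_{-[x,y]_{\d}}$, the $\D$‑term dropping out as the component functions are constant. Substituting into the definitions of $R^{(0)}$ and $R$ shows that $R_{\cD^{\g}}(\psi_{x'},\psi_{x},\psi_{y},\psi_{y'})$ equals, as a constant function on $M$, the value $R_{\cD^{0}}(x',x,y,y')$ computed on $(\d,0,\<\cdot,\cdot\>_{\d},-[\cdot,\cdot]_{\d})$; the quadratic correction term in the definition of $R$ matches for the same reason, as it only involves $\cD^{\g}$ applied to constant sections.

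Since $R_{\cD^{\g}}$ is a genuine tensor, its values on the constant frame determine it pointwise, so $R_{\cD^{\g}}$ is the constant fiber‑wise extension of $R_{\cD^{0}}$. Contracting with the constant frame, and with the constant $\gm_{\g}^{-1}$, then yields $\Ric_{\cD^{\g}} = \Ric_{\cD^{0}}$ and $\RS^{\gm_{\g}}_{\cD^{\g}} = \RS^{\gm_{0}}_{\cD^{0}}$ as constant objects on $M$. The condition $\Ric_{\cD^{\g}}(\E'_{\g},\E'^{\perp}_{\g}) = 0$ is then literally the fiber‑wise statement $\Ric_{\cD^{0}}(\E,\E^{\perp}) = 0$, i.e.\ Ricci compatibility of $\cD^{0}$ with $\E$. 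Transporting back along $\fPsi$ gives both claims: $\RS^{\~\gm}_{\~\cD}$ is the constant $\RS^{\gm_{0}}_{\cD^{0}}$, and $\~\cD$ is Ricci compatible with $V^{\g}_{+}$ precisely when $\cD^{0}$ is Ricci compatible with $\E$.

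The only delicate point, and hence where I would be careful, is the bookkeeping showing that the non‑tensorial piece $\rho(\psi).\psi'$ and the anchor‑induced $\D$‑term in the bracket contribute nothing on constant sections; everything else is matching conventions, the crucial one being that the reduced bracket is $-[\cdot,\cdot]_{\d}$, which is exactly the sign that makes the two expressions for $R^{(0)}$ coincide.
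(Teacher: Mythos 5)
Your proposal is correct and follows essentially the same route as the paper: reduce everything to $\cD^{\g}$ via the invariance of the curvature quantities under the Courant algebroid isomorphism $\fPsi$, then observe that on constant sections the anchor term in (\ref{eq_cDgfinal}) and the $\D$-term in the bracket vanish, so that $R_{\cD^{\g}}$, $\Ric_{\cD^{\g}}$ and the scalar curvatures are the constant fiber-wise extensions of those of $\cD^{0}$ on $(\d,0,\<\cdot,\cdot\>_{\d},-[\cdot,\cdot]_{\d})$. You simply spell out the tensoriality and sign bookkeeping that the paper's proof leaves implicit.
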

\begin{proof}
The statement is a consequence of the fact that scalar curvatures and the Ricci compatibility are preserved by Courant algebroid isomorphisms. Observe that both $\~\cD$ and $V_{+}^{\g}$ are constructed from $\cD^{\g}$ and $\E'_{\g}$ using the Courant algebroid isomorphism $\fPsi \in \Hom(\gTGa,E')$. Moreover, the curvature tensor, the Ricci tensor and the corresponding scalar curvatures of $\cD^{\g}$ can be calculated using only the constant sections of $E'$. But the formula (\ref{eq_cDgfinal}) then implies that all those quantities are constant on $G^{\ast}$ and equal to those of $\cD^{0}$. Similarly, as $\E'_{\g} = G^{\ast} \times \E$, the same stands true for the Ricci compatibility. 
\end{proof}

Obviously, completely the same statement holds for the relation between the connection $\cD^{0}$ and the Levi-Civita connection $\cD \in \LC( \gTG, V_{+}^{\g^{\ast}})$. This quite simple observation immediately implies the main theorem of this work. We will include the notation to make it self-contained. 
\begin{theorem}[\textbf{Poisson--Lie T-duality of effective actions}] \label{thm_themain}
Let $(\g,\delta)$ be a unimodular Lie bialgebra. Let $(G,\Pi)$ and $(G^{\ast},\Pi^{\ast})$ be two mutually dual connected Poisson--Lie groups integrating the Lie bialgebra $(\g,\delta)$ and the dual one $(\g^{\ast},\delta^{\ast})$, respectively. 

Let $\E \subseteq \d$ be a generalized metric on $\d$. One can write $\E = \{ (x,E_{0}(x)) \; | \; x \in \g \}$, for a linear isomorphism $E_{0} \in \Hom(\g,\g^{\ast})$ such that $E_{0} = g_{0} + B_{0}$ for a positive definite metric $g_{0} \in S^{2} \g^{\ast}$ and a $2$-form $B_{0} \in \Lambda^{2}\g^{\ast}$. Let $\fPi \in C^{\infty}(G, \Lambda^{2}\g)$ and $\fPi^{\ast} \in C^{\infty}(G^{\ast}, \Lambda^{2} \g^{\ast})$ be defined as 
\begin{equation}
\fPi(\xi,\eta) = \Pi( \xi_{R}, \eta_{R}), \; \; \fPi^{\ast}(x,y) = \Pi^{\ast}(x^{R'},y^{R'}),
\end{equation}
for all $\xi,\eta \in \g^{\ast}$ and $x,y \in \g$. Define $\fE \in C^{\infty}(G,\Hom(\g,\g^{\ast}))$ and $\~\fE \in C^{\infty}(G^{\ast}, \Hom(\g^{\ast},\g))$ by
\begin{equation}
\fE = ( E_{0}^{-1} + \fPi)^{-1}, \; \; \~\fE = (E_{0} + \fPi^{\ast})^{-1}.
\end{equation} 
Decompose these maps into the symmetric and skew-symmetric parts, that is, $\fE = \fg + \fB$ and $\~\fE = \~\fg + \~\fB$, and define the fields $(g,B)$ and $(\~g,\~B)$ by 
\begin{align}
g(x^{R},y^{R}) = \fg(x,y), \; \; B(x^{R},y^{R}) = \fB(x,y), \\
\~g(\xi_{R'},\eta_{R'}) = \~\fg(\xi,\eta), \; \; \~B(\xi_{R'},\eta_{R'}) = \~\fB(\xi,\eta),
\end{align}
for all $x,y \in \g$ and $\xi,\eta \in \g^{\ast}$. 

Next, let $\phi \in C^{\infty}(G)$ and $\~\phi \in C^{\infty}(G^{\ast})$ be the scalar functions defined\footnote{According to Subsection \ref{subsec_analysis}, we can rewrite the dilaton fields using the volume forms as in (\ref{eq_dilatonsasvolumeforms}). However, we have decided to keep them in this form as it is more calculation friendly.}as
\begin{align}
\label{eq_dilaton1} \phi & = - \frac{1}{2} \ln | \det( \f1_{\g^{\ast}} + \~g^{-1}_{0} ( \fPi + \~B_{0})) |, \\
\label{eq_dilaton2} \~\phi &= - \frac{1}{2} \ln | \det( \f1_{\g} + g_{0}^{-1}(\fPi^{\ast} + B_{0})) |, 
\end{align}
where $(\~g_{0},\~B_{0})$ are the fields dual to $(g_{0},B_{0})$, that is $E_{0}^{-1} = \~g_{0} + \~B_{0}$.  Finally, define the actions
\begin{align}
\label{eq_akce1} S_{G}[g,B,\phi] & = \int_{G} e^{-2\phi} \{ \RS(g) - \frac{1}{2} \<dB,dB\>_{g} + 4 \| \cD^{g} \phi \|^{2}_{g} \} \cdot d \vol_{g}, \\
\label{eq_akce2} S_{G^{\ast}}[\~g,\~B,\~\phi] & = \int_{G^{\ast}} e^{-2\~\phi} \{ \RS(\~g) - \frac{1}{2} \<d\~B,d\~B\>_{\~g} + 4 \| \cD^{\~g} \~\phi \|^{2}_{\~g} \} \cdot d \vol_{\~g},
\end{align}
where $\RS(g)$ and $\RS(\~g)$ are the usual scalar curvatures of the metric $g$ and $\~g$, respectively. 

Then the fields $(g,B,\phi)$ satisfy the equations of motion given by the action $S_{G}$ if and only if $(\~g,\~B,\~\phi)$ satisfy the equations of motion given by the action $S_{G^{\ast}}$. We call this the \textbf{Poisson--Lie T-duality of low-energy string effective actions}. 
\end{theorem}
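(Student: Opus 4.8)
The plan is to reduce the equations of motion on both groups to a single \emph{algebraic} condition on the pair $(\cD^{0},\E)$ living on the quadratic Lie algebra $\d$, and then to exploit the manifest symmetry of that condition under $\g \leftrightarrow \g^{\ast}$. Concretely, I would show that ``$(g,B,\phi)$ solves the equations of motion of $S_{G}$'' and ``$(\~g,\~B,\~\phi)$ solves the equations of motion of $S_{G^{\ast}}$'' are \emph{each} equivalent to the single assertion that $\cD^{0}$ is Ricci compatible with $\E$ and has vanishing scalar curvature $\RS^{\gm_{0}}_{\cD^{0}}=0$; since both sides collapse to the same statement, they are equivalent to one another. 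The backgrounds themselves are already identified with the generalized metrics $V^{\g^{\ast}}_{+}\subseteq\gTG$ and $V^{\g}_{+}\subseteq\gTGa$ by the formulas of Subsection \ref{subsec_PLT}, so the fields in the theorem are exactly the data on which the relevant connections act.

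First I would verify that the hypotheses of Theorem \ref{thm_eomconn} hold for both connections $\cD \in \LC(\gTG,V^{\g^{\ast}}_{+})$ and $\~\cD \in \LC(\gTGa,V^{\g}_{+})$. This is precisely where unimodularity of $(\g,\delta)$ enters. By Example \ref{ex_unimod} we may take $\cD^{0}$ to be the canonical connection (\ref{eq_cD0formula}), for which $(\ft,\ftau)=(0,0)$; the unimodular-Lie-bialgebra assumption then gives $\falpha=0$ (unimodularity of $\g$) and $\fa=0$ (unimodularity of $\g^{\ast}$). Theorem \ref{thm_charverfield} now forces $X_{\~\cD}=0$, and in Theorem \ref{thm_dilaton} the obstruction term $\tfrac{1}{2}\<\zeta,\fa-\ft+\~\fB\cdot\fAd(\falpha)\>$ vanishes identically, so $W'(\zeta_{R'})=\<d\~\phi,\zeta_{R'}\>$ with $\~\phi$ given by (\ref{eq_phiastfinalexpl}), i.e.\ by (\ref{eq_dilaton2}). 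Thus condition (\ref{eq_dilatonrel}) holds with a genuinely exact one-form. The dual argument, interchanging the roles of $G$ and $G^{\ast}$, yields $X_{\cD}=0$ and the exact dilaton $\phi$ of (\ref{eq_dilaton1}).

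Second, with the hypotheses in place, Theorem \ref{thm_eomconn} converts the equations of motion into the curvature conditions ``$\cD$ (resp.\ $\~\cD$) is Ricci compatible with its generalized metric and $\RS^{\gm}_{\cD}=0$ (resp.\ $\RS^{\~\gm}_{\~\cD}=0$).'' Here I would invoke Proposition \ref{tvrz_conpropeq}: since $\~\cD$ and $V^{\g}_{+}$ arise from $\cD^{0}$ and $\E$ through the Courant algebroid isomorphism $\fPsi$, and since Ricci compatibility and the scalar curvature are preserved by such isomorphisms, one gets $\RS^{\~\gm}_{\~\cD}=\RS^{\gm_{0}}_{\cD^{0}}$ (constant on $G^{\ast}$) and $\~\cD$ Ricci compatible with $V^{\g}_{+}$ iff $\cD^{0}$ is Ricci compatible with $\E$. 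The wholly analogous statement holds on $G$ for $\cD$ and $V^{\g^{\ast}}_{+}$. Crucially, both $V^{\g}_{+}$ and $V^{\g^{\ast}}_{+}$ are produced by lifting and reducing the \emph{same} subspace $\E\subseteq\d$, while $\cD$ and $\~\cD$ both descend from the \emph{same} $\cD^{0}\in\LC(\d,\E)$. Chaining the equivalences therefore shows that both sets of equations of motion reduce to the identical condition ``$\cD^{0}$ Ricci compatible with $\E$ and $\RS^{\gm_{0}}_{\cD^{0}}=0$,'' which proves the theorem.

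I expect the only genuinely delicate step to be the first one, namely confirming that the right-hand side of (\ref{eq_dilatonrel}) is an exact one-form integrating to the stated $\~\phi$. Unimodularity is indispensable there: in the non-unimodular case the obstruction retains the term $\~B(\falpha_{L'})$, which is generally not closed (cf.\ Subsection \ref{subsec_analysis}), so no global dilaton of the required form need exist. Everything downstream — the passage through Theorem \ref{thm_eomconn} and Proposition \ref{tvrz_conpropeq}, together with the final left--right symmetry argument — is then a formal assembly of results already established.
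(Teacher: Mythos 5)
Your proposal is correct and follows essentially the same route as the paper's own proof: verify the hypotheses of Theorem \ref{thm_eomconn} for both $\cD$ and $\~\cD$ using unimodularity (via Theorem \ref{thm_charverfield} and Theorem \ref{thm_dilaton}), then apply Proposition \ref{tvrz_conpropeq} to collapse both sets of equations of motion to the single algebraic condition that $\cD^{0}$ is Ricci compatible with $\E$ and $\RS^{\gm_{0}}_{\cD^{0}}=0$. Your identification of the dilaton exactness as the delicate step, and of $\falpha=0$, $\fa=0$ as the precise role of unimodularity, matches the paper's argument.
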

\begin{proof}
The proof is a straightforward application of Theorem \ref{thm_eomconn}. First, define $\cD^{0} \in \LC(\d,\E)$ to be the connection (\ref{eq_cD0formula}). Using the procedure described in Section \ref{sec_dilatons}, we can construct a pair of Levi-Civita connections $\cD \in \LC(\gTG, V_{+}^{\g^{\ast}})$ and $\~\cD \in \LC(\gTGa, V_{+}^{\g})$. The generalized metric $V_{+}^{\g^{\ast}} \subseteq \gTG$ corresponds to the pair $(g,B)$ from the assumptions of this theorem, and similarly for $V_{+}^{\g} \subseteq \gTGa$ and $(\~g,\~B)$. As $(\g,\delta)$ is assumed to be unimodular, it follows from Theorem \ref{thm_dilaton} and the discussion following it that both $\cD$ and $\~\cD$ satisfy the assumptions of Theorem \ref{thm_eomconn} for $\phi$ and $\~\phi$ given by (\ref{eq_dilaton1}) and (\ref{eq_dilaton2}), respectively. 

According to Theorem \ref{thm_eomconn}, $(g,B,\phi)$ satisfy the equations of motion given by (\ref{eq_akce1}) if and only if $\RS_{\cD}^{\gm} = 0$ and $\cD$ is Ricci compatible with the generalized metric $V^{\g^{\ast}}_{+} \subseteq \gTG$. Here $\gm$ is the positive-definite fiber-wise metric (\ref{eq_genmetricusual}) corresponding to $V_{+}^{\g^{\ast}}$. By Proposition \ref{tvrz_conpropeq}, this is equivalent to $\RS_{\cD^{0}}^{\gm_{0}} = 0$ and the Ricci compatibility of $\cD^{0}$ with $\E$. But the same is true for the dual fields $(\~g,\~B,\~\phi)$. We conclude that both systems of equations of motion are equivalent to the same set of algebraic equations.
\end{proof}
\begin{rem}
Some remarks are in order.
\begin{enumerate}[(i)]
\item In the proof, we have chosen $\cD^{0}$ to be the one defined by (\ref{eq_cD0formula}). Assuming that $(\g,\delta)$ is unimodular, we only have to choose $\cD^{0}$ with $(\ft,\ftau) = (0,0)$. Similarly to the discussion above Theorem \ref{thm_dilaton}, one can show that the scalar curvature $\RS_{\cD^{0}}^{\gm_{0}}$ and the Ricci compatibility of $\cD^{0}$ with $\E$ depend only on the divergence operator and thus only on $(\ft, \ftau)$. 

\item The interesting aspect of the Poisson--Lie T-duality is that we start with a piece of algebraic data (the generalized metric $\E \subseteq \d$) and use it to obtain the geometric data $(g,B,\phi)$ and $(\~g,\~B,\~\phi)$ on the mutually dual Poisson--Lie groups $(G,\Pi)$ and $(G^{\ast},\Pi^{\ast})$, respectively. It should not be surprising that the equations of motion of the corresponding low-energy string effective actions are in fact equivalent to the system of algebraic equations for the fields $(g_{0},B_{0})$ determining the linear subspace $\E \subseteq \d$. It would be interesting to find some natural interpretation of these equations. 

\item The whole theory works to some extent  also for indefinite metrics. If the symmetric part $g_{0}$ of the map $E_{0}$ is not positive-definite, problems on various places can occur. In particular, some of the maps involved need not to be invertible, e.g. the map $E_{0} + \Pi^{\ast}$,  whose inverse $\~\fE$ then provides for the backgrounds $(\~g,\~B)$. However, everything can still work if one is careful enough to avoid these invertibility problems. 

\item Note that most of Lie algebras interesting for applications in physics are unimodular. In particular, every quadratic Lie algebra is unimodular. This includes e.g. all semisimple Lie algebras. 

However, there is a lot of Lie bialgebras which are not unimodular. For example, in the classification of six-dimensional Drinfeld doubles in \cite{Snobl:2002kq}, any Manin triple where one of the Lie algebras is isomorphic to the one of the Bianchi algebras in the set $\{ \mathbf{3, 4, 5, 6_{a}, 7_{a}} \}$ does not correspond to an unimodular Lie bialgebra. 

\item In the proof of Proposition \ref{tvrz_dilatondiffequation}, we have shown that the dilaton field $(\ref{eq_phi})$ is a smooth function solving the equation $d\~\phi = W'$. It follows that it is determined only up to an additive constant. However, its choice is completely irrelevant for the low-energy effective string actions, as it only multiplies the corresponding functional by a non-zero real constant. 

\item Note that all arguments used in this paper work also for the more general consideration of two different Manin triples $(\d,\g,\g')$ and $(\d,\~\g,\~\g')$ with the same quadratic Lie algebra $(\d,\<\cdot,\cdot\>_{\d})$, assuming that $\g$ and $\~\g$ are unimodular, respectively. We can then compare the effective actions on Lie groups $G^{\ast}$ and $\~G^{\ast}$, provided the formulas for dilaton fields are generalized as in (\ref{eq_phi'dilaton}). 

Observe that the connection $\cD^{0}$ defined by (\ref{eq_cD0formula}) is independent of any choices of the subspaces $\g$ and $\~\g$ or their complements. We can thus immediately generalize Theorem \ref{thm_themain} to include the so called \textbf{Poisson--Lie T-plurality}, see \cite{VonUnge:2002xjf}. 

\item So far, we have considered only the simplest case where the split Manin pair $(\d,\g,j)$ corresponds to a Manin triple $(\d,\g,\g^{\ast})$ and thus to a Lie bialgebra $(\g,\delta)$. However, it should be possible to generalize the results in this paper to the case where $(\d,\g,j)$ gives rise to a \textbf{Manin quasi-triple} $(\d,\g,\g')$ and consequently to a \textbf{Lie quasi-bialgebra} $(\g,\delta,\mu)$. One would have to employ the techniques involving quasi-Poisson structures on Lie groups, see \cite{alekseev2000manin}. More generally, one can consider a general principal bundle with the structure Lie group $D$. We plan to address these two generalizations in the near future. 
\end{enumerate}
\end{rem}
\section*{Acknowledgments}
We would like to thank Ladislav Hlavatý for the crucial hint leading to the correct formula for a dilaton field, and Mario Garcia-Fernandez for helpful discussions. We also thank the anonymous referee for useful comments leading to the discussion in Subsection \ref{subsec_analysis}. 

The research of B.J. was supported by grant GAČR P201/12/G028, he would like to thank the Max Planck Institute for Mathematics for hospitality.

The research of J.V. was supported by RVO: 67985840, and he would like to thank the Max Planck Institute for Mathematics in for hospitality. The author would like to acknowledge the contribution of the COST Action MP1405.
\newpage
\bibliography{bib} 
\end{document}